\documentclass[a4paper,USenglish,cleveref,numberwithinsect]{lipics-v2019}
\usepackage[x11names]{xcolor}


\bibliographystyle{plainurl}

\title{Gathering on a Circle with Limited Visibility by Anonymous Oblivious Robots}

\titlerunning{Gathering on a Circle with Limited Visibility by Anonymous Oblivious Robots}

\author{Giuseppe A. Di Luna}{DIAG, Sapienza University of Rome, Italy}{diluna@diag.uniroma1.it}{}{}

\author{Ryuhei Uehara}{School of Information Science, JAIST, Japan}{uehara@jaist.ac.jp}{}{}

\author{Giovanni Viglietta}{Department of Computer Science and Engineering, University of Aizu, Japan}{viglietta@gmail.com}{}{}

\author{Yukiko Yamauchi}{Department of Informatics, Graduate School of ISEE, Kyushu University, Japan}{yamauchi@inf.kyushu-u.ac.jp}{}{}

\authorrunning{G.\,A. Di Luna, R. Uehara, G. Viglietta, and Y. Yamauchi}

\Copyright{Giuseppe A. Di Luna, Ryuhei Uehara, Giovanni Viglietta, and Yukiko Yamauchi} 

\begin{CCSXML}
<ccs2012>
<concept>
<concept_id>10003752.10003809.10010172.10003824</concept_id>
<concept_desc>Theory of computation~Self-organization</concept_desc>
<concept_significance>500</concept_significance>
</concept>
<concept>
<concept_id>10010147.10010919.10010172</concept_id>
<concept_desc>Computing methodologies~Distributed algorithms</concept_desc>
<concept_significance>500</concept_significance>
</concept>
</ccs2012>
\end{CCSXML}
\ccsdesc[500]{Computing methodologies~Distributed algorithms}
\ccsdesc[500]{Theory of computation~Self-organization}

\keywords{Mobile robots, Gathering, limited visibility, circle}






\nolinenumbers 

\hideLIPIcs  

\begin{document}

\maketitle

\begin{abstract}
A swarm of anonymous oblivious mobile robots, operating in deterministic Look-Compute-Move cycles, is confined within a circular track. All robots agree on the clockwise direction (chirality), they are activated by an adversarial semi-synchronous scheduler (SSYNCH), and an active robot always reaches the destination point it computes (rigidity). Robots have limited visibility: each robot can see only the points on the circle that have an angular distance strictly smaller than a constant $\vartheta$ from the robot's current location, where $0<\vartheta\leq\pi$ (angles are expressed in radians).

We study the Gathering problem for such a swarm of robots: that is, all robots are initially in distinct locations on the circle, and their task is to reach the same point on the circle in a finite number of turns, regardless of the way they are activated by the scheduler. Note that, due to the anonymity of the robots, this task is impossible if the initial configuration is rotationally symmetric; hence, we have to make the assumption that the initial configuration be rotationally asymmetric.

We prove that, if $\vartheta=\pi$ (i.e., each robot can see the entire circle except its antipodal point), there is a distributed algorithm that solves the Gathering problem for swarms of any size. By contrast, we also prove that, if $\vartheta\leq \pi/2$, no distributed algorithm solves the Gathering problem, regardless of the size of the swarm, even under the assumption that the initial configuration is rotationally asymmetric and the visibility graph of the robots is connected.

The latter impossibility result relies on a probabilistic technique based on random perturbations, which is novel in the context of anonymous mobile robots. Such a technique is of independent interest, and immediately applies to other Pattern-Formation problems.
\end{abstract}

\section{Introduction}\label{sec:1}
\paragraph*{Background}
One of the most popular models for distributed mobile robotics is the {\em Look-Compute-Move} (LCM)~\cite{xxFlPS12,book2}. In this model, a Euclidean space, usually the real plane $\mathbb{R}^2$, is inhabited by a ``swarm'' of punctiform and autonomous computational entities, the {\em robots}. Each robot, upon activation, takes a snapshot of the space (Look), uses this snapshot to compute its destination (Compute), and then reaches its destination point (Move).

The activation pattern of the robots is controlled by an external scheduler. At one end of the synchrony spectrum, there is the {\em fully-synchronous} scheduler (FSYNCH): in this case, time is divided into discrete units (the {\em turns}). At each turn, the entire swarm is activated, and all robots synchronously execute one LCM cycle. At the other end, there is the {\em asynchronous} scheduler (ASYNCH), where robot activations are independent, and LCM cycles are not synchronized. Somewhere halfway, there is the {\em semi-synchronous} scheduler (SSYNCH), which activates an arbitrary subset of robots at each turn, with the restriction of activating each robot infinitely often.

A common assumption in the context of mobile robots is the lack of persistent memory: a robot does not remember anything about past activations ({\em obliviousness}). Other assumptions are {\em anonymity}, where robots do not have visible and distinguishable identifying features, and {\em silence}, where robots do not have explicit communication primitives. This robot model is sometimes referred to as ``OBLOT''~\cite{book2}.

Obliviousness, anonymity, and silence are practical, useful, and desirable properties: an algorithm for oblivious robots is inherently resilient to transient memory failures; one for anonymous robots is ideal in privacy-sensitive contexts; an algorithm for silent robots works even in scenarios where communication is jammed or unfeasible (e.g., hostile environments or underwater deployment).

The purpose of such an ensemble of weak robots is to reach a common goal in a coordinated way. Interestingly, it has been shown that mobile robots can solve an extensive set of problems~\cite{book2}, ranging from forming patterns~\cite{xxFlPSV17,xxFuYKY15,xxSuzY99,xYuUKY17} to simulating a powerful Turing-complete movable entity~\cite{LunaFSV18}.

Among all tasks, a particularly relevant one is {\em Gathering}~\cite{xAgP06,xCiFPS12,xseb2,xgatheringdefago,Flocchini2019,xPaPV15,PoudelS17}: in finite time, all robots have to reach the same point and stop there. Initial works assumed robots to see the entire space ({\em full visibility}). However, a more realistic assumption~\cite{Bhagat2019,MondeYKY17} is that a robot be able to see only a portion of the space ({\em limited visibility}). 

In this paper, we study the Gathering problem for a swarm of oblivious robots with limited visibility constrained to move within a circle: each robot can see only the points on the circle that have an angular distance strictly smaller than a certain {\em visibility range} $\vartheta$. We assume that robots have no agreement on common coordinates apart from sharing the same notion of clockwise direction on the circle.

From a practical perspective, the restriction of moving along a predetermined path arises in wide variety of scenarios: railway lines, roads, tunnels, waterways, etc. We argue that the circle is the most meaningful curve to study: a solution for it readily extends to all other closed curves. 

From a theoretical perspective, confining the swarm on a circle (hence, a non-simply connected space) rules out all the strategies typically used for robots in the plane, such as moving toward the center of the visible set of robots (an example is in~\cite{xxAnOSY99}). Moreover, robots cannot use any asymmetries in the environment to identify a gathering point: this makes the circle the most challenging setting for Gathering (and in general, for any problem where symmetry breaking helps). 

Apart from~\cite{FLOCCHINI200867}, which examined the problem of scattering on a circle (reaching a final configuration where the robots are uniformly spaced out), no other works studied the computational power of oblivious robots when confined to curves: this is rather surprising, considering the copious existing literature on oblivious robots~\cite{xxFlPS12,book2}. To the best of our knowledge, the present paper is the first to investigate the Gathering problem for oblivious, silent, and anonymous robots on a circle with limited visibility.
 
\paragraph*{Our contributions}
We consider a swarm of $n$ oblivious, anonymous, and silent robots that start at distinct locations on a circle. Robots do not agree on a common system of coordinates, but they do share the same {\em handedness} (i.e., they have a common notion of clockwise direction). When a robot decides to move, it reaches its destination point (robots are {\em rigid}). Moreover, robots have no information on the swarm's size, $n$. Each robot can see only the points on the circle that have an angular distance strictly smaller than a certain visibility range $\vartheta$. We must assume that the initial configuration is rotationally asymmetric, otherwise the scheduler may activate robots in such a way as to preserve the rotational symmetry, and Gathering cannot be achieved.

After giving all the necessary definitions and some preliminary results (\cref{sec:2}),
our first contribution is to show that there is no distributed algorithm that solves the Gathering problem in SSYNCH when $\vartheta\leq \pi/2$, i.e., each robot is only able to see at most half of the circle (\cref{sec:3}). Surprisingly, this holds even if the initial configuration is rotationally asymmetric, the visibility graph of the swarm (i.e.,  the graph of intervisibility between robots) is connected, and all robots know $n$.

Our proof uses a novel technique based on random perturbations, of which we offer an intuitive probabilistic argument, as well as a formal and more elementary proof by derandomization. We show that, for any given distributed algorithm, either there exists an asymmetric configuration of robots that can evolve into a symmetric one within one time unit (in SSYNCH), or there is an asymmetric configuration where no robot can move. In either case, Gathering is impossible.

We stress that our result has a profound meaning, since it shows that, when $\vartheta\leq \pi/2$, any distributed algorithm, including the ones that do not aim to solve Gathering, has an initial asymmetric configuration that either repeats forever or evolves into a symmetric configuration in one step. This implies a novel impossibility result for geometric Pattern Formation on circles: even when robots start from an asymmetric configuration, they cannot form a target asymmetric pattern. This is in striking contrast with the unlimited-visibility setting, where, even under the ASYNCH scheduler, any pattern can be formed from any asymmetric configuration~\cite{book2}.

To the best of our knowledge, this the first impossibility proof for oblivious robots that neither relies on invariants induced by symmetries (e.g.,~\cite{xxFuYKY15,Yamauchi19}) nor on the disconnection of the visibility graph (e.g.,~\cite{LunaFSV18,YamauchiY13}). Due to the above, we think that our technique is of independent interest, and its core ideas could be applied to other settings, as well.

On the possibility side, we show that, if $\vartheta=\pi$ (i.e., each robot can see the entire circle except its antipodal point), there is a distributed algorithm that solves the Gathering problem in SSYNCH for swarms of any size (\cref{sec:4}). The algorithm's strategy is to attempt to elect a unique leader and form a multiplicity point, where all robots will subsequently gather. The main challenge is that, since a robot ignores whether its antipodal point is occupied by another robot or not (robots do not know $n$), there may be an ambiguity on who is the true leader. Several robots may believe to be the leader, but this also comes with the awareness of the possibility of being wrong: these ``undecided'' robots will make some adjustment moves, which eventually result in a configuration where one robot is absolutely certain of being the true leader. The leader will then form a multiplicity point by moving to another robot, and finally all other robots will join them.

A conference version of this paper has appeared at DISC 2020~\cite{DISC2020}.

\paragraph*{Related work}
The relevant literature can be divided into (i)~works that study the Gathering problem in the plane, (ii)~works where robots are confined on a circle (or a polygon with holes) but use memory, randomness, or other forms of symmetry breaking (e.g., different speeds), and (iii)~works that consider oblivious robots but a discrete space (i.e., a ring graph).

\noindent {\bf Gathering in the plane.} Several papers studied the Gathering problem for oblivious robots in the plane: for recent surveys, see~\cite{Flocchini2019,Bhagat2019}. In the following, we will focus only on the ones that limit the visibility of robots. In~\cite{xxAnOSY99}, a simple Gathering strategy is proposed for the FSYNCH setting, which consists in moving toward the center of the local smallest enclosing circle without losing vision of other robots (in SSYNCH, this only guarantees that all robots will converge to the same point). Interestingly, this strategy does not need chirality or any agreement on the robots' local coordinate systems. Recently, it was shown that this algorithm terminates in a number of rounds that is quadratic in the diameter of the visibility graph of the initial configuration of the robots~\cite{OPODIS22}.

In~\cite{xxFlPSW05}, Gathering is solved in ASYNCH, assuming that all robots agree on the orientation of both coordinate axes. This induces an implicit agreement on a total order among all robots, and such an order is at the base of the algorithm: each robot moves to its rightmost neighbor; if none is visible, it moves to its topmost neighbor. In~\cite{defago09}, the Gathering problem is solved in SSYNCH, assuming an eventual agreement on the North direction.

A recent paper, \cite{DEFECTVIEW}, investigated the assembly of robots with a ``defective view''. In this model, the snapshot taken by a robot captures a subset of up to $k\leq n$ robots. Two defective models are discussed: an adversarial one, where the subset is chosen by an adversary, and a distance-based one, where the snapshot includes the $k$ nearest robots. The paper presents solution algorithms for $k = n-2$ in the adversarial setting (when $n \geq 5$) and for $k = 2$ and $n = 4$ in the distance-based setting. Additionally, the paper demonstrates the impossibility of solving the problem when $k = 1$ and $n = 3$ in both adversarial and distance-based models.

All the aforementioned works heavily rely on the ability to freely move within a plane or on agreements on at least one direction.

\noindent{\bf Gathering using memory, randomness, or different speeds.} In~\cite{flocchininew}, the Gathering problem on a circle is studied in the continuous model, where robots observe and update their movement direction at every instant. The paper assumes that robots have persistent memory and access to a randomness source; it also assumes that robots can only see other robots and communicate with them when they are co-located. To compensate for this limited visibility, it is assumed that the robots either know the length of the circle or have an upper bound on the total number of robots.

The proposed algorithms employ a strategy wherein a robot is first elected as a leader, who then gathers the remaining robots. It is worth noting that the reliance on randomness (where robots can flip random coins) and persistent memory means that the techniques used in this work are not applicable to our setting.

Other works that investigated Gathering on a circle in the continuous model are~\cite{speed1,KranakisKMS15}. However, the solutions proposed in these papers strongly rely on either randomness or different robot speeds, both of which are ways to break symmetry.

Finally, there is a series of papers~\cite{pelc1,pelc2,pelc3} that solve a weaker version of Gathering, where robots are required to reach the same point, even if they are unaware of each other and do not stop at that point. In polygons with holes, however, these algorithms require persistent memory. An exception is~\cite{meetingpolygonvigiletta}, which studies a related problem for oblivious robots in a polygonal environment, and proposes a strategy to simulate persistent memory by carefully moving within a neighborhood of a polygon's vertex.

\noindent {\bf Gathering in the discrete ring.} A large body of works investigated Gathering in a ring graph~\cite{dangelo,dangelo2,DILUNA202079,gatheringlight,ringbook}, with the majority of papers assuming persistent memory~\cite{ringbook}, sometimes in the form of visible lights~\cite{gatheringlight}. When oblivious agents are considered, and under the assumption of full visibility, the problem has been entirely characterized~\cite{dangelo,dangelo2,KLASING200827}. Some works explored oblivious agents that can only see their immediate neighborhood: in~\cite{7000135}, a solution to the Gathering problem is given, assuming weak multiplicity detection and knowledge of the number of agents. On the other hand, the results in~\cite{GUILBAULT201386} imply that Gathering is impossible if the ring has an even number of nodes and the agents are not initially located on three consecutive nodes. We remark that the results for the discrete setting do not extend to our continuous circle.

\section{Model definition and preliminaries}\label{sec:2}
\paragraph*{Measuring angles}
Let $C\subset \mathbb R^2$ be a circle, and let $a$ and $b$ be points of $C$. The \emph{angular distance} between $a$ and $b$ (with respect to $C$) is the measure of the angle subtended at the center of $C$ by the shorter arc with endpoints $a$ and $b$. It follows that the angular distance between two points is a real number in the interval $[0,\pi]$,
where angles are expressed in radians. Two points of $C$ are \emph{antipodal} of each other if their angular distance is $\pi$. The \emph{$\alpha$-neighborhood} of a point $q\in C$ is the set of points of $C$ whose angular distance from $q$ is strictly smaller than $\alpha$. The $(\pi/2)$-neighborhood of $q$ is also called the \emph{open semicircle} centered at $q$.

Furthermore, if $a$ and $b$ are distinct points of $C$, we define $cw(a,b)$ as the measure of the \emph{clockwise} angle $\angle acb$, where $c$ is the center of $C$. Note that the order of the two arguments matters, and so for instance $cw(a,b)+cw(b,a)=2\pi$. We also define $cw(a,a)=0$ for every $a$.

\paragraph*{Rotational symmetry}
Let $S$ be a finite multiset of points on a circle $C$. We say that $S$ is \emph{rotationally symmetric} if there is a non-identical rotation around the center of $C$ that leaves $S$ unchanged (also preserving multiplicities). It is easy to see that the angle of rotation must be $2\pi/k$ radians for some integer $k>1$. If $S$ is not rotationally symmetric, it is said to be \emph{rotationally asymmetric}.

We now present a sufficient condition for a set of points to be rotationally asymmetric. This will be used repeatedly in \cref{sec:3}.

\begin{proposition}\label{p:antipodalasymmetry}
Let $S$ be a finite set of points on a circle. If there are exactly two points of $S$ whose antipodal points are not in $S$, then $S$ is rotationally asymmetric.
\end{proposition}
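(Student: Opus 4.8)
The plan is to argue by contradiction, exploiting the elementary fact that every rotation around the center of $C$ commutes with the antipodal map. First I would parametrize $C$ by angle, identifying its points with $\mathbb{R}/2\pi\mathbb{Z}$, so that the antipodal map becomes $x\mapsto x+\pi$ and every non-identical rotation becomes a translation $x\mapsto x+\theta$ with $\theta\not\equiv 0 \pmod{2\pi}$. Let $A=\{x\in S: x+\pi\notin S\}$ be the set of points of $S$ whose antipodal points are missing from $S$; by hypothesis $|A|=2$, and of course $A\subseteq S$.

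Now suppose, for contradiction, that $S$ is rotationally symmetric, and let $\rho\colon x\mapsto x+\theta$ (with $\theta\not\equiv 0$) be a non-identical rotation with $\rho(S)=S$. The first key step is to show that $\rho$ also preserves $A$. Since $\rho$ is a bijection of $S$ and $\rho(x+\pi)=\rho(x)+\pi$, for any $x\in S$ we have $x+\pi\in S$ if and only if $\rho(x)+\pi=\rho(x+\pi)\in\rho(S)=S$; hence $x\in A$ if and only if $\rho(x)\in A$, so $\rho(A)=A$.

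The second step pins down $\rho$. A non-identical rotation of the circle has no fixed point (a translation by $\theta\not\equiv 0$ has none on $\mathbb{R}/2\pi\mathbb{Z}$), so $\rho$ fixes neither of the two points of $A$; being a bijection of the two-element set $A$, it must therefore swap them. Consequently $\rho^2$ fixes both points of $A$, which forces $\rho^2$ to be the identity (otherwise it would be a non-identical rotation with a fixed point). Hence $\rho$ is the rotation by $\pi$, and the two points of $A$ are antipodal, say $A=\{a,\,a+\pi\}$. But this is already a contradiction: $a+\pi\in A\subseteq S$, whereas $a\in A$ means precisely that $a+\pi\notin S$. Therefore no such $\rho$ exists and $S$ is rotationally asymmetric.

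I do not expect any genuine obstacle here; the argument is short. The only points needing a modicum of care are the commutation step establishing $\rho(A)=A$, and the observation that a fixed-point-free bijection of a two-element set is the transposition, from which "$\rho$ is the half-turn" follows immediately and delivers the contradiction.
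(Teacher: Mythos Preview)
Your proof is correct and follows essentially the same route as the paper's: both argue by contradiction, use that a rotation preserving $S$ must permute the two ``bad'' points (via commutation with the antipodal map), deduce that the rotation is the half-turn, and conclude that the two points are antipodal to each other, contradicting the definition of $A$. Your write-up is slightly more explicit about why the rotation must swap rather than fix the two points (no fixed points for a non-identical rotation), but the argument is the same.
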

\begin{proof}
Let $p$ and $p'$ be two distinct points of $S$ whose antipodal points are not in $S$. Assume for the sake of contradiction that there is an integer $k>1$ such that $\rho(S)=S$, where $\rho$ is the rotation by $2\pi/k$ radians around the center of the circle. Note that two points $a$ and $b$ on the circle are antipodal to each other if and only if $\rho(a)$ and $\rho(b)$ are antipodal to each other. It follows that $\rho$ must map a point whose antipodal is not in $S$ into another point whose antipodal is not in $S$. Hence $\rho(p)=p'$ and $\rho(p')=p$, implying that $\rho\circ \rho$ is the identity map, and therefore $k=2$. But this means that $\rho$ maps every point to its antipodal, and so $p$ and $p'$ are antipodal to each other, which contradicts the fact that their antipodal points are not in $S$.
\end{proof}

\paragraph*{Angle sequences}
Let $S$ be a multiset of $n$ points on a circle $C$, and let $p\in S$. Let $p_1$, $p_2$, \dots, $p_n$ be the points of $S$ taken in clockwise order starting from $p=p_1$ (coincident elements of $S$ are ordered arbitrarily). We define the \emph{angle sequence} of $p$ (with respect to $S$) as the $n$-tuple $(cw(p_1,p_2), cw(p_2,p_3), \dots, cw(p_n,p_1))$. The case where all the elements of $S$ are coincident is an exception, and in this case the angle sequence of the $i$th point of $S$, with $1\leq i\leq n$, is defined as the $n$-tuple $(0,0,\dots,0,0,2\pi,0,0,\dots,0,0)$, where the term $2\pi$ appears in the $i$th position. Note that, with this convention, the sum of the elements of any angle sequence is always $2\pi$.

The next two propositions are fundamental results about angle sequences. Although they are easy observations, they will be used repeatedly in the rest of the paper.
\begin{proposition}\label{p:leader}
A non-empty multiset of points on a circle is rotationally asymmetric if and only if all its points have distinct angle sequences.\qed
\end{proposition}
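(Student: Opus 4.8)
The plan is to prove both implications by contraposition, resting everything on one observation: a rotation $\rho$ about the center of $C$ preserves clockwise angles, $\mathrm{cw}(\rho(a),\rho(b))=\mathrm{cw}(a,b)$, so it carries a clockwise enumeration of $S$ starting at $p$ to a clockwise enumeration of $\rho(S)$ starting at $\rho(p)$, and hence sends the angle sequence of $p$ with respect to $S$ to the angle sequence of $\rho(p)$ with respect to $\rho(S)$. Equivalently, the angle sequence $(\delta_1,\dots,\delta_n)$ of a point $p\in S$ records exactly the multiset of clockwise offsets $0,\ \delta_1,\ \delta_1+\delta_2,\ \dots,\ \delta_1+\cdots+\delta_{n-1}$ at which the points of $S$ (counted with multiplicity) lie relative to $p$; so once the location of $p$ is fixed, its angle sequence pins down $S$ completely.

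For the ``if'' direction, suppose $S$ is rotationally symmetric and fix a non-identical rotation $\rho$ with $\rho(S)=S$. A non-identical rotation about the center has no fixed point on $C$, so $S$ is not the all-coincident multiset (hence the generic definition of angle sequence is in force), and for every $p\in S$ the point $q:=\rho(p)\in S$ occupies a location different from that of $p$; in particular $p$ and $q$ are distinct points of $S$. By the observation above the angle sequence of $p$ with respect to $S$ equals that of $q$ with respect to $\rho(S)=S$, so not all points of $S$ have distinct angle sequences. For the converse, suppose two distinct points $p,q\in S$ have the same angle sequence. If $S$ is all-coincident, the special convention places the unique $2\pi$ entry of the $i$th point's sequence in coordinate $i$, so the $n$ sequences are pairwise distinct and there is nothing to prove; otherwise let $\rho$ be the rotation by $\mathrm{cw}(p,q)$, and read off from the partial-sums description that $\rho$ maps the clockwise offsets of $S$ relative to $p$ onto those relative to $q=\rho(p)$, i.e.\ $\rho(S)=S$ as multisets, so $S$ is rotationally symmetric.

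The manipulations of clockwise angles are routine. The step I expect to require genuine care is the bookkeeping around coincident points: reconciling the arbitrary tie-breaking in the clockwise enumeration with the fact that the angle sequence is nonetheless well defined, treating the all-coincident multiset through its separate convention, and --- in the converse direction --- verifying that two distinct points with equal angle sequences really give rise to a \emph{non-identical} $\rho$. The argument is cleanest when the points of $S$ are pairwise distinct, and I would isolate that case first and then handle the multiplicities separately, expecting that this is where essentially all of the subtlety lies.
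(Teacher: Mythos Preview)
The paper supplies no proof of this proposition --- it is presented as an ``easy observation'' and stated without argument --- so there is nothing to compare against. Your rotation-based approach is the natural one and is correct for all the cases the paper actually needs.

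Your caution about coincident points is well placed; in fact the concern you flag cannot be discharged, because the proposition as literally stated fails for proper multisets. Take $S=\{a,a,b\}$ with $a\neq b$ on the circle. The two copies of $a$ receive the same angle sequence $(0,\mathrm{cw}(a,b),\mathrm{cw}(b,a))$ under the generic definition (the all-coincident convention does not apply), yet $S$ is rotationally asymmetric, since any rotation preserving the unique multiplicity-two location must be the identity. So the worry you isolated --- that two distinct multiset elements with equal angle sequences might not yield a \emph{non-identical} $\rho$ --- is exactly right: when they share a location, $\mathrm{cw}(p,q)=0$, $\rho$ is the identity, and the implication breaks. This is a minor imprecision in the paper rather than a flaw in your reasoning; the proposition is only ever invoked for points at antipodal, hence distinct, locations, where your argument goes through cleanly.
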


With the above notation, let $q\in C$, and let $j$, with $1\leq j\leq n$, be the unique index such that $0<cw(p_j,q)\leq cw(p_j,p_{j+1})$, with the convention that $p_{n+1}=p_1$ (i.e., $q$ lies on the clockwise arc $p_jp_{j+1}$, and $q\neq p_j$). We say that the angle sequence of $p$ \emph{truncated at $q$} is the $j$-tuple $(cw(p_1,p_2), cw(p_2,p_3), \dots, cw(p_{j-1},p_j), cw(p_j,q))$.

If $W_1$ and $W_2$ are two (truncated) angle sequences, we write $W_1\prec W_2$ if $W_1$ is lexicographically smaller than $W_2$ (if $W_1$ and $W_2$ do not have the same length, we first pad the shorter sequence at the end with enough zeros). We write $W_1\preceq W_2$ to mean $W_1\prec W_2$ or $W_1=W_2$, and so on. Also, to denote the concatenation of two (truncated) angle sequences $W_1$ and $W_2$, we write $W_1W_2$.
\begin{proposition}\label{p:angle}
Let $S$ be a multiset of points on a circle $C$, let $p,p'\in S$, and let $W$ (respectively, $W'$) be the angle sequence of $p$ (respectively, $p'$). Let $q,q'\in C$ such that $cw(p,q)=cw(p',q')$, and let $Z$ (respectively, $Z'$) be the angle sequence $W$ (respectively, $W'$) truncated at $q$ (respectively, $q'$). If $W\preceq W'$, then $Z\preceq Z'$.\qed
\end{proposition}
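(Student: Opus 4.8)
The plan is to reduce the claim to an elementary statement about lexicographic truncation of finite sequences of nonnegative reals. Write $W=(a_1,\dots,a_n)$, $W'=(b_1,\dots,b_n)$, and put $\sigma:=cw(p,q)=cw(p',q')$. If $\sigma=0$ then $q=p$ and $q'=p'$, so $Z=W$ and $Z'=W'$ and there is nothing to prove; hence assume $\sigma\in(0,2\pi)$.

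The first step is the geometric observation that \emph{the entries of $Z$ sum to $\sigma$}, and likewise for $Z'$. Indeed, by definition the entries of $Z$ are the consecutive clockwise angles along the chain $p=p_1\to p_2\to\dots\to p_j\to q$; since $q$ lies on the clockwise arc from $p_j$ to $p_{j+1}$, this chain winds clockwise by less than a full turn, so the entries add up to $cw(p,q)=\sigma$. Feeding this back into the defining inequality $0<cw(p_j,q)\le cw(p_j,p_{j+1})=a_j$ yields the description I will actually use: $Z=(a_1,\dots,a_{j-1},c)$ where $j$ is the \emph{least} index with $a_1+\cdots+a_j\ge\sigma$ and $c=\sigma-(a_1+\cdots+a_{j-1})\in(0,a_j]$; symmetrically $Z'=(b_1,\dots,b_{k-1},d)$ with $d=\sigma-(b_1+\cdots+b_{k-1})$. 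In particular $Z$ depends only on $W$ and $\sigma$, so the case $W=W'$ is immediate: then $Z=Z'$.

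Next I would handle $W\prec W'$. Let $m$ be the least index with $a_m\ne b_m$, so that $a_i=b_i$ for $i<m$, $a_m<b_m$, and $s:=a_1+\cdots+a_{m-1}=b_1+\cdots+b_{m-1}$. If $\sigma\le s$, then both cut indices satisfy $j,k\le m-1$, so $Z$ and $Z'$ are each determined by the common prefix $(a_1,\dots,a_{m-1})$ and therefore coincide. If instead $\sigma>s$, then $j,k\ge m$, so $Z$ and $Z'$ both have length at least $m$, they agree on their first $m-1$ entries, and --- examining whether each truncation cuts exactly at position $m$ or later --- their $m$-th entries equal $\min(a_m,\sigma-s)$ and $\min(b_m,\sigma-s)$ respectively. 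Since $a_m<b_m$ and $t\mapsto\min(t,\sigma-s)$ is nondecreasing, the $m$-th entry of $Z$ is $\le$ that of $Z'$; if it is strictly smaller we get $Z\prec Z'$, while if the $m$-th entries are equal a short argument (using $a_m<b_m$ strictly) forces $\sigma-s\le a_m$, hence $j=k=m$, hence $Z=Z'=(a_1,\dots,a_{m-1},\sigma-s)$. In every case $Z\preceq Z'$.

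The only genuinely geometric point is the first one --- that the truncated sequence has sum $\sigma$ and does not wrap around the circle --- and it requires only a small check using that $q$ lies strictly after $p_j$ and at or before $p_{j+1}$ in clockwise order. The real work, and the place where one must be careful, is the case $W\prec W'$ with matching $m$-th entries after truncation: one has to rule out a ``coincidental'' tie at position $m$, and the point is that the \emph{strict} inequality $a_m<b_m$ forces both cut indices down to $m$ and makes the two truncated sequences literally equal, so that $Z\preceq Z'$ still holds. Everything else is routine bookkeeping with partial sums.
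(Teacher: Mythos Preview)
Your argument is correct. The paper states this proposition without proof (it carries only a \qed), so there is nothing to compare against; you have supplied what the authors omitted. The key reduction---that the truncation $Z$ depends only on the sequence $W$ and the partial-sum threshold $\sigma=cw(p,q)$, via the rule $Z=(a_1,\dots,a_{j-1},\sigma-\sum_{i<j}a_i)$ with $j$ the least index whose partial sum reaches $\sigma$---is exactly right, and your case analysis on the first index $m$ where $W$ and $W'$ differ is clean. The one place a reader might pause, the tie at position $m$, you handle correctly: equality of $\min(a_m,\sigma-s)$ and $\min(b_m,\sigma-s)$ together with $a_m<b_m$ forces $\sigma-s\le a_m$, hence $j=k=m$ and $Z=Z'$.
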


\paragraph*{Mobile robots}
Our model of mobile robots is among the standard ones defined in~\cite{xxFlPS12,book2}. A swarm of $n>1$ robots is located on a circle $C\subset\mathbb R^2$, where each robot is a computational unit that occupies a point of $C$ (which may change over time) and operates in deterministic Look-Compute-Move cycles.

Time is discretized and subdivided into units, and at each time unit an adversarial (\emph{semi-synchronous}) \emph{scheduler} decides which robots are active and which are inactive. An inactive robot remains idle for that time unit, whereas an active robot takes a \emph{snapshot} of its surroundings, consisting of an arc $B\subseteq C$ and a list of points of $B$ that are currently occupied by robots, it computes a destination point in $B$ as a function of the snapshot, and it instantly moves to the destination point. The only restriction to the scheduler is that no robot should remain inactive for infinitely many consecutive time units.

Robots may have \emph{full visibility}, in which case the arc $B$ defining a snapshot coincides with the entire circle $C$, or they may have \emph{limited visibility}, in which case the arc $B$ consists of the $\vartheta$-neighborhood of the current position of the robot taking the snapshot, where $\vartheta$ is a positive constant called the \emph{visibility range} of the robots.

Furthermore, each robot has its own \emph{local coordinate system}, meaning that each snapshot it takes of an arc $B\subseteq C$ is actually a roto-translated copy of $B$ and the positions of the robots within $B$. Such a copy of $B$ has its midpoint at the origin of the coordinate system (this corresponds to the location of the robot taking the snapshot) and its endpoints have non-negative $x$ coordinate and the same $y$ coordinate.

Robots are also capable of \emph{weak multiplicity detection}, meaning that the snapshots they take contain some information on how many robots occupy each location. Specifically, a robot can tell if a point in a snapshot contains no robots, exactly one robot, or more than one robot: no information on the precise number of robots is given if this number is greater than one. A point occupied by more than one robot is also called a \emph{multiplicity point}.

In order to simplify our notation, when no confusion arises, we will often identify a robot with its position on the circle. So, we may improperly refer to a robot as a point $p\in C$ or to a swarm of robots as a set $S\subset C$.

\paragraph*{Gathering}
A \emph{distributed algorithm} is a function that maps a snapshot to a point within the snapshot itself. A robot \emph{executes} a distributed algorithm $A$ if, whenever it is activated and takes a snapshot $Q$, it moves to the destination point corresponding to $A(Q)$. In other words, at each time unit, an active robot chooses its destination point deterministically within its visibility range, based solely on the snapshot it currently has.

We stress that, as a consequence of the previous definitions, the robots in this model are \emph{oblivious} (i.e., they have no memory of past observations), \emph{anonymous} (i.e., a robot only identifies other robots by their positions in its local coordinate system, and not for instance by their IDs), \emph{silent} (i.e., they cannot send messages to one another), \emph{deterministic} (i.e., they cannot flip coins), \emph{rigid} (i.e., they always reach the destination points they compute), they have \emph{chirality} (i.e., they all agree on the clockwise direction on the circle), and they have no knowledge of $n$ (i.e., a robot can only see other robots within its visibility range, and it does not know whether there are further robots outside of it).

We say that a distributed algorithm $A$ solves the \emph{Gathering problem} under condition $P$ if, whenever all the $n>1$ robots of a swarm located on a circle execute $A$, they eventually reach a configuration where all robots are in the same point of the circle and no robot ever moves again, provided that their initial configuration satisfies condition $P$, and regardless of the activation choices of the adversarial scheduler. Equivalently, we say that $A$ is a \emph{Gathering algorithm} under condition $P$.

We remark that all the robots in the swarm must execute the same algorithm $A$ (i.e., robots are \emph{uniform}), and the algorithm has to work for swarms of any size $n>1$, where $n$ is \emph{not} a parameter of $A$. Also note that the robots' positions should not simply converge to the same limit, but they must actually become coincident in a finite number of time units for Gathering to be achieved (albeit there is no bound on the number of time units this process may take).

\paragraph*{Initial conditions}
There are several meaningful options concerning our choice of the initial condition $P$ for the Gathering problem. A typical assumption is that the $n$ robots be initially located in $n$ distinct points of the circle: while not strictly necessary, this is a common requirement for the Gathering problem (e.g.,~\cite{xxSuzY99,xCiFPS12,Flocchini2019}).

Another assumption that we may make is that the \emph{visibility graph} of the robots be initially connected. By ``visibility graph'' we mean the graph whose nodes are the $n$ robots, where there is an edge between two robots if and only if they are mutually visible, i.e., if their angular distance is less than $\vartheta$. This assumption is another common one (e.g.,~\cite{Bhagat2019,xxAnOSY99,xxFlPSW05,defago09,degener11}) and, although not strictly necessary, it is justified by the intuition that different connected components of the visibility graph may never become aware of each other, and therefore may fail to gather. We will make this assumption in \cref{sec:3} to strengthen our impossibility result, and we will not need to explicitly make it in \cref{sec:4}, because it will come as a consequence of other assumptions.

An important mandatory condition is that the multiset of the robots' positions on the circle should be rotationally asymmetric, due to the following.
\begin{proposition}\label{p:symmetry}
Let $S$ be any rotationally symmetric multiset of $n>1$ points on a circle. There is no Gathering algorithm under the initial condition that the multiset of robots' positions is $S$.
\end{proposition}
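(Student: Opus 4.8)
The plan is to show that, for an arbitrary distributed algorithm $A$, there is an adversarial scheduler under which a swarm executing $A$ from the initial configuration $S$ never gathers. Since $S$ is rotationally symmetric, I fix once and for all a non-identical rotation $\rho$ about the center of $C$ with $\rho(S)=S$ as multisets; thus $\rho$ is the rotation by $2\pi/k$ radians for some integer $k>1$, and in particular $\rho$ has no fixed point on $C$. The scheduler I would use is the simplest one possible: activate every robot at every time unit. This is a legal semi-synchronous schedule, since then no robot is ever idle.

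The heart of the proof is the following invariant: \emph{if the multiset $T$ of robots' positions at some time unit is $\rho$-invariant, then so is the multiset of positions at the next time unit.} To prove this, I would show that two robots located at $\rho$-related points $q$ and $\rho(q)$ take identical snapshots in their respective local coordinate systems. Indeed, the arc visible to the robot at $\rho(q)$, together with the positions occupied in it and their multiplicities, is exactly the $\rho$-image of the arc visible to the robot at $q$ (the visibility ranges are equal, $\rho(T)=T$, and weak multiplicity detection is $\rho$-invariant). Since $\rho$ is orientation-preserving, it is compatible with the robots' shared clockwise direction, so when each robot embeds its visible arc into its own local frame using the common canonical convention fixed in \cref{sec:2}, the two resulting snapshots literally coincide. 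As $A$ is deterministic and uniform, both robots then compute the same point of their local frame, which corresponds to a destination $d$ in the global frame for the robot at $q$ and to $\rho(d)$ for the robot at $\rho(q)$. Carrying out all moves of the turn simultaneously yields a new position multiset $T'$ with $\rho(T')=T'$. (Robots sharing a location take the same snapshot and hence stay together, which is consistent with and only reinforces the invariant.)

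Applying the invariant inductively starting from $S$, the robots' positions form a $\rho$-invariant multiset at every time unit forever. To finish, I observe that a gathered configuration is never $\rho$-invariant: if all $n>1$ robots sit on a single point $p$, then $\rho$ maps that multiset to $n$ copies of $\rho(p)$, and $\rho(p)\neq p$. Hence the swarm never even reaches a configuration in which all robots coincide, so a fortiori $A$ does not solve Gathering under the condition that the initial configuration is $S$.

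I expect the one genuinely delicate step to be the claim that $\rho$-related robots obtain identical local snapshots. This is exactly where the model hypotheses enter: obliviousness and determinism make a robot's move a function of its current snapshot alone; anonymity and uniformity make it the \emph{same} function for every robot; and the symmetry being a \emph{rotation} rather than a reflection is what makes it compatible with chirality — a reflective symmetry would instead produce mirror-image snapshots, which robots sharing a handedness can tell apart, and the argument would break down. I would make this step fully rigorous by writing out the canonical embedding of a visibility arc into a local coordinate system described in \cref{sec:2} and checking that a robot at $q$ and a robot at $\rho(q)$ record the very same data.
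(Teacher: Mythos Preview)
Your proof is correct and follows essentially the same approach as the paper's: use the fully-synchronous schedule, argue that robots at $\rho$-related positions receive identical snapshots and hence move to $\rho$-related destinations, so the rotational symmetry is preserved forever and gathering (a single point, which is never $\rho$-invariant) is never achieved. The paper phrases the invariant slightly differently, partitioning the swarm into classes of $k$ robots at the vertices of regular $k$-gons and noting the symmetry can only increase, but the substance is the same; your write-up is in fact more explicit about why chirality (rotations being orientation-preserving) is what makes the snapshots literally coincide.
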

\begin{proof}
Let $S$ have a $k$-fold rotational symmetry, with $k>1$. This means that a rotation by $2\pi/k$ radians around the center of the circle leaves $S$ unchanged. So, the swarm can be partitioned into $n/k$ classes, where each class consists of $k$ robots located at the vertices of a regular $k$-gon.

It follows that, if the scheduler activates all robots in the swarm, then the $k$ robots in a same class get identical snapshots (regardless of the value of $\vartheta$), and therefore move to $k$ destination points that once again are the vertices of a regular $k$-gon. Hence, the new configuration has a $k'$-fold rotational symmetry, with $k'\geq k$.

So, if the scheduler keeps activating all robots at every time unit, no two robots in the same class will ever reach the same point, and the Gathering problem will never be solved.
\end{proof}
Since the robots are oblivious, this condition should hold true not only at the beginning, but at all times during the execution of a Gathering algorithm: the robots should never ``accidentally'' form a rotationally symmetric multiset, or they will be unable to gather.
\begin{corollary}\label{c:symmetry}
Throughout the execution of any Gathering algorithm, the robots' positions must always form a rotationally asymmetric multiset.\qed
\end{corollary}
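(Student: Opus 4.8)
The plan is to derive this corollary directly from \cref{p:symmetry} by exploiting obliviousness. Suppose, for the sake of contradiction, that $A$ is a Gathering algorithm under some initial condition $P$, and that there exist a legal initial configuration $S_0$ satisfying $P$, a legal scheduler $\sigma$, and a time unit $t\geq 0$ such that the configuration $S_t$ reached from $S_0$ by executing $A$ under $\sigma$ is rotationally symmetric, say with $k$-fold symmetry for some $k>1$. First I would observe that $S_t$ cannot be a gathered (all-coincident) configuration: since $n>1$, a gathered configuration consists of $n$ copies of a single point $q\in C$, and no non-identical rotation about the center of $C$ fixes $q$, so a gathered configuration is in fact rotationally asymmetric. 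Hence, at time $t$ the swarm has certainly not yet solved Gathering.

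Next I would modify the scheduler from time $t$ onward. Define $\sigma'$ to agree with $\sigma$ on the first $t$ time units and to activate \emph{all} robots at every time unit from $t$ on; $\sigma'$ is legal, since every robot is activated infinitely often. Because the robots are oblivious, an active robot's destination at any time unit is a function of its current snapshot alone, with no dependence on the history of the execution. Therefore the configurations produced from time $t$ onward by executing $A$ under $\sigma'$ are exactly those produced by executing $A$ on the initial configuration $S_t$ under the scheduler that activates everyone at every turn. By the argument in the proof of \cref{p:symmetry} — partitioning the swarm into $n/k$ classes, each occupying the vertices of a regular $k$-gon, and noting that synchronous activation maps each such class to another regular $k$-gon — the configuration stays $k'$-fold symmetric with $k'\geq k>1$ forever, and no two robots lying at distinct vertices of a common $k$-gon ever become coincident. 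Consequently, the swarm never gathers along this execution.

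This yields an execution of $A$ starting from the valid initial configuration $S_0$ under the legal scheduler $\sigma'$ in which Gathering is never achieved, contradicting the assumption that $A$ is a Gathering algorithm under $P$. Hence no reachable configuration $S_t$ can be rotationally symmetric, which is the claim. I do not expect a substantive obstacle here; the only points needing care are the clean invocation of obliviousness to ``restart'' the analysis from the symmetric configuration $S_t$, and the verification that splicing the symmetry-preserving (fully synchronous) tail onto an arbitrary legal prefix still yields a legal scheduler.
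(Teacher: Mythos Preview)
Your argument is correct and is precisely the justification the paper leaves implicit: the corollary is stated with a bare \qed, so the intended proof is exactly the obliviousness-based reduction to \cref{p:symmetry} that you spell out (splice a fully-synchronous tail onto the prefix that reaches the symmetric configuration, then invoke the symmetry-preservation argument). There is nothing substantively different between your approach and the paper's; you have simply made explicit the two small points the paper takes for granted, namely that obliviousness lets one ``restart'' from $S_t$, and that a gathered configuration is not rotationally symmetric.
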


\section{Impossibility of Gathering for $\vartheta\leq \pi/2$}\label{sec:3}
\paragraph*{Overview}
In this section we prove that, if each robot can see at most an open semicircle (i.e., $\vartheta\leq \pi/2$), then no distributed algorithm solves the Gathering problem, even under some strong assumptions on the initial configuration, and even if the robots know the size of the swarm.

Our technique is essentially probabilistic, and it starts by defining a set of perturbations of a regular configuration. Then, by analyzing the behavior of a generic distributed algorithm on all perturbations of a swarm that satisfy some initial conditions, we will show that the algorithm either (i)~allows the construction of a rotationally asymmetric configuration that can evolve into a rotationally symmetric one (under a semi-synchronous scheduler) or (ii)~leaves the configuration unchanged forever. In both cases, the algorithm does not solve the Gathering problem on some configurations.

\paragraph*{Summary}
We are interested in special types of perturbations of regular swarm configurations, depending on $\vartheta$. In \cref{p:size}, we show that the desired requirements of such perturbations can actually be satisfied by arbitrarily large swarms of robots. \cref{p:epsilon} demonstrates that these perturbations have good structural properties, such as preserving the visibility graph of swarm configurations (\cref{c:isomorphic}).

Leveraging the preceding statements, we provide strong constraints on how robots must behave under a Gathering algorithm. Essentially, we argue that at every step, all robots must move to locations already occupied by robots (\cref{l:imposs1}). Furthermore, if a Gathering algorithm prescribes that in a given configuration, a robot $r_i$ moves to another robot $r_j$, perturbing $r_i$ will never cause it to move to the same robot $r_j$ (\cref{l:imposs2}).

With these lemmas in place, it is relatively easy to conclude that there is always a perturbation of the regular configuration that causes any given algorithm to fail to make all robots gather. We formally prove this in \cref{t:impossible}, making use of the technical result of \cref{l:derandomize}.

\paragraph*{Perturbations}
For the rest of this section, we will denote by $C$ the unit circle centered at the origin. A finite set $S\subset C$ is \emph{regular} if $(1,0)\in S$ and all points of $S$ have the same angle sequence. Hence, for every positive integer $n$, there is a unique regular set of size $n$: for $n\geq 3$, this is the set of vertices of the regular $n$-gon centered at the origin and having a vertex in $(1,0)$.

Let $S$ be the regular set of size $n$, and let $p_1$, $p_2$, \dots, $p_n$ be the points of $S$ taken in clockwise order, starting from $p_1=(1,0)$. Let $\varepsilon\in \mathbb R$ with $0<\varepsilon<2\pi/n$, and let $\gamma=(\gamma_1, \gamma_2,\dots,\gamma_n) \in [0,1]^n\subset \mathbb R^n$. The \emph{$\varepsilon$-perturbation} of $S$ with \emph{coefficients} $\gamma$ is the set $S'\subset C$ of size $n$ such that, for all $1\leq i\leq n$, there is a (unique) point $p'_i\in S'$ with $cw(p_i,p'_i)=\gamma_i\cdot \varepsilon$, called the \emph{perturbed copy} of $p_i$. So, any $\varepsilon$-perturbation of $S$ is obtained by rotating each point of $S$ clockwise around the origin by an angle in $[0,\varepsilon]$.

Furthermore, for $1\leq i\leq n$, we say that two coefficient $n$-tuples $\gamma=(\gamma_1, \gamma_2, \dots, \gamma_n)$ and $\gamma'=(\gamma'_1, \gamma'_2, \dots, \gamma'_n)$ are \emph{$i$-related} if and only if they differ at most by their $i$th terms, i.e., for all $j\neq i$, we have $\gamma_j=\gamma'_j$. Note that the $i$-relation is an equivalence relation on $[0,1]^n$. With the previous paragraph's notation, we say that the set of all $\varepsilon$-perturbations of $S$ whose coefficients are in a same equivalence class of the $i$-relation is a \emph{bundle} of $\varepsilon$-perturbations of $p_i$. Intuitively, a bundle of $\varepsilon$-perturbations of $p_i$ is obtained by first fixing a perturbation of all points of $S$ except $p_i$, and then perturbing $p_i$ in all possible ways.

\paragraph*{Size of the swarm}
We will prove that Gathering is impossible for any given visibility range $\vartheta\leq \pi/2$, provided that the size of the swarm $n$ is appropriate. Specifically, we say that a positive integer $n$ is \emph{compatible} with $\vartheta$ if three conditions hold on the regular set $S$ of size $n$ (refer to \cref{f:compatible}):
\begin{enumerate}
\item For every $p\in S$, the open semicircle centered at $p$ contains exactly half of the points of $S$.
\item No two points of $S$ have an angular distance of exactly $\vartheta$.
\item There are two distinct points of $S$ whose angular distance is smaller than $\vartheta$.
\end{enumerate}

We can show that there are arbitrarily large such integers:
\begin{proposition}\label{p:size}
For any $\vartheta\leq \pi/2$, there are arbitrarily large integers compatible with $\vartheta$.
\end{proposition}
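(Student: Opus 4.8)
The plan is to exhibit, for each $\vartheta\le\pi/2$, an explicit infinite family of compatible integers, namely integers of the form $n=4k+2$; since $\vartheta$ is fixed, ``arbitrarily large'' then reduces to ``with $k$ large enough'', together with a mild congruence restriction on $k$ in one subcase. The three compatibility conditions are checked in turn, starting with the two that are essentially free.

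For Condition~1, I would argue directly on the regular set $S$ of size $n=4k+2$, whose points lie at the angles $2\pi i/n$ with $0\le i<n$. Taking $p$ to be the point at angle $0$, the angular distance from $p$ to the point at angle $2\pi i/n$ equals $\tfrac{2\pi}{n}\min(i,n-i)$, and this is strictly less than $\pi/2$ precisely when $\min(i,n-i)\le k$; the key fact here is that $n/4=k+\tfrac12\notin\mathbb Z$, so no point of $S$ sits at angular distance exactly $\pi/2$ from $p$. A short count gives exactly $2k$ indices $i\in\{1,\dots,n-1\}$ with $\min(i,n-i)\le k$, and adding $p$ itself yields $2k+1=n/2$ points of $S$ in the open semicircle centered at $p$; by the rotational symmetry of $S$ this holds from every point of $S$. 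This is also where I would remark that the hypothesis $n\equiv 2\pmod 4$ is essential: when $4\mid n$ the antipode of $p$ lies on the boundary of the semicircle and the count drops to $n/2-1$.

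Condition~3 is immediate, since the minimum angular distance realized in the regular $n$-gon is $2\pi/n$, so it suffices to require $n>2\pi/\vartheta$, which holds for all large $n$.

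Condition~2 is the real obstacle. The angular distances occurring in $S$ are exactly the numbers $2\pi j/n$ with $1\le j\le n/2$, so Condition~2 asks that $\vartheta\neq 2\pi j/n$ for every such $j$; since $\vartheta\le\pi/2$ forces $\vartheta n/(2\pi)\le n/4<n/2$, this is equivalent to demanding that $\vartheta n/(2\pi)$ not be a positive integer. With $n=4k+2$ this amounts to $\vartheta(2k+1)/\pi\notin\mathbb Z$. I would then split into cases. If $\vartheta/\pi$ is irrational, $\vartheta(2k+1)/\pi$ is never an integer, so every large $n\equiv 2\pmod 4$ is compatible. If $\vartheta/\pi=a/b$ in lowest terms, then $0<\vartheta\le\pi/2$ forces $b\ge 2$, and $\vartheta(2k+1)/\pi=a(2k+1)/b$ is an integer iff $b\mid 2k+1$; this never occurs when $b$ is even, and occurs only for $k$ in a single residue class modulo $b$ when $b$ is odd. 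In every case there remain arbitrarily large $k$ for which Conditions~1, 2, and 3 simultaneously hold, which is the desired conclusion.
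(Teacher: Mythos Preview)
Your proof is correct and follows the same overall plan as the paper: restrict to $n=4k+2$ so that Condition~1 holds automatically, take $n>2\pi/\vartheta$ for Condition~3, and then argue that Condition~2 fails for at most a thin set of $k$. The only real difference is in how you handle Condition~2. You rewrite it as $\vartheta(2k+1)/\pi\notin\mathbb Z$ and split on whether $\vartheta/\pi$ is rational, obtaining an explicit description of exactly which $k$ are bad (none, or a single residue class modulo an odd denominator). The paper instead avoids the rational/irrational dichotomy by showing that two \emph{consecutive} candidates $4k+2$ and $4k+6$ cannot both fail Condition~2: if $a/(4k+2)=b/(4k+6)=\vartheta/(2\pi)$, then $a(2k+3)=b(2k+1)$ with $\gcd(2k+1,2k+3)=1$ forces $a=2k+1$ and hence $\vartheta=\pi$, contradicting $\vartheta\le\pi/2$. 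Your argument is slightly more informative (it pins down the exceptional $k$ exactly), while the paper's is a one-line coprimality trick that sidesteps any case analysis; both yield arbitrarily large compatible $n$ with equal ease.
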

\begin{proof}
It is easy to see that the first condition is satisfied if and only if $n$ is of the form $4k+2$, for some non-negative integer $k$.

Let us turn to the second condition. The angular distance between two points of $S$ is a number of the form $a\cdot 2\pi/n$, for some positive integer $a\leq n/2$. So, the second condition is satisfied if and only if there is no $a$ such that $a/n=\vartheta/2\pi$.

We will prove that, given any two integers of the form $4k+2$ and $4k+6$, i.e., two consecutive numbers satisfying the first condition, at least one of them satisfies the second condition: this will allow us to conclude that there are arbitrarily large integers that satisfy both conditions (\cref{f:compatible} shows an example for $k=3$).

For the sake of contradiction, let us assume the opposite: there exists $1\leq a\leq 2k+1$ such that $a/(4k+2) = \vartheta/2\pi$, and there exists $1\leq b\leq 2k+3$ such that $b/(4k+6) = \vartheta/2\pi$. By combining the two equations, we have $a/(4k+2)=b/(4k+6)$, or equivalently $a(2k+3)=b(2k+1)$. The numbers $2k+1$ and $2k+3$ are two consecutive odd integers, and therefore they are relatively prime. It follows that $2k+1$ must be a divisor of $a$, and so $a=2k+1$. This implies that $\vartheta=2\pi\cdot (2k+1)/(4k+2)=\pi$, contradicting the assumption that $\vartheta\leq \pi/2$.

To satisfy the third condition, simply pick $n>2\pi/\vartheta$.
\end{proof}

\begin{figure}
\begin{center}
\includegraphics[scale=1]{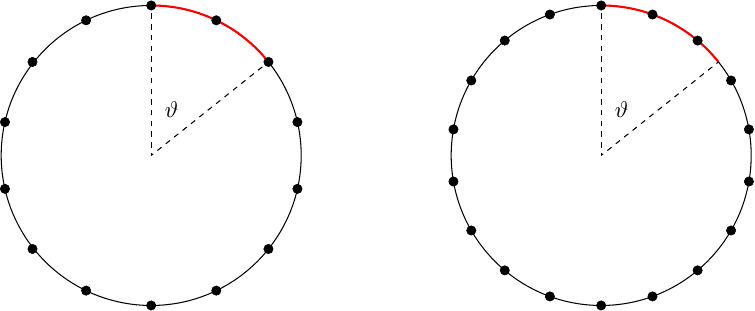}
\end{center}
\caption{(Left) $n=14$ is not compatible with $\vartheta=2\pi/7$ because it does not satisfy condition~2. (Right) $n=18$ is compatible with $\vartheta=2\pi/7$.}
\label{f:compatible}
\end{figure}

\paragraph*{Choice of $\varepsilon$}
For every integer $n$ compatible with $\vartheta$, we define a positive number $\varepsilon_{\vartheta, n}$, which will be used to construct perturbations of the regular set $S$ of size $n$. We set $\varepsilon_{\vartheta, n}=\delta/2$, where $\delta=\min\{|\vartheta - 2\pi a/n| \mid a\in\mathbb N,\ 0\leq a\leq n\}$.

Since $n$ is compatible with $\vartheta$, it easily follows that $\varepsilon_{\vartheta, n}>0$. Also, $\delta$ is at most half the angular distance between two consecutive points of $S$, and therefore $\varepsilon_{\vartheta, n}\leq \pi/2n$. Moreover, our choice of $\varepsilon_{\vartheta, n}$ has some other desirable properties:
\begin{proposition}\label{p:epsilon}
Let $n$ be an integer compatible with $\vartheta\leq \pi/2$, let $S$ be the regular set of size $n$, and let $S'$ be an $\varepsilon_{\vartheta, n}$-perturbation of $S$. If $p\in S$, and $p'\in S'$ is the perturbed copy of $p$, the following hold:
\begin{enumerate}
\item The $\vartheta$-neighborhood of $p$ contains a point $q\in S$ if and only if the $\vartheta$-neighborhood of $p'$ contains the perturbed copy of $q$ in $S'$.
\item The open semicircle $D$ centered at $p$ contains exactly half of the points of $S'$, which are the perturbed copies of the points of $S$ contained in $D$.
\item If $D'$ is the open semicircle centered at $p'$, then $S'\cap D=S'\cap D'$, and hence $D'$ contains exactly half of the points of $S'$.
\end{enumerate}
\end{proposition}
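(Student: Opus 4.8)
The plan is to reduce all three statements to a single elementary distortion estimate for clockwise angles, together with two arithmetic facts about the regular set $S$. Write $p=p_i$, and for each $j$ let $p_j'\in S'$ be the perturbed copy of $p_j$, so that $cw(p_j,p_j')=\gamma_j\varepsilon$ with $\varepsilon:=\varepsilon_{\vartheta,n}\le\pi/2n$ and $\gamma_j\in[0,1]$; write $d(x,y)=\min\{cw(x,y),\,2\pi-cw(x,y)\}$ for the angular distance. For $i\neq j$ the angle $cw(p_i,p_j)$ is a positive multiple of $2\pi/n$ and hence lies in $[2\pi/n,\,2\pi-2\pi/n]$; since $\varepsilon<2\pi/n$, no clockwise angle ``wraps around'' modulo $2\pi$ under the perturbation, and one obtains the exact identities
\[
cw(p_i',p_j')=cw(p_i,p_j)+(\gamma_j-\gamma_i)\varepsilon\qquad\text{and}\qquad cw(p_i,p_j')=cw(p_i,p_j)+\gamma_j\varepsilon,
\]
so each of these differs from $cw(p_i,p_j)$ by at most $\varepsilon$. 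The two facts I will invoke are: \textbf{(A)} for $i\neq j$ the angular distance $d_{ij}:=d(p_i,p_j)$ is a multiple of $2\pi/n$, so by the definition of $\delta$ (and compatibility, which guarantees $\delta>0$) it satisfies $|d_{ij}-\vartheta|\ge\delta=2\varepsilon$; and \textbf{(B)} since $n\equiv 2\pmod 4$, neither $n/4$ nor $3n/4$ is an integer, so every multiple of $2\pi/n$ — in particular $cw(p_i,p_j)$ — is at distance $\ge\pi/n$, and in particular $>\varepsilon$ (because $\pi/n\ge 2\varepsilon$), from both $\pi/2$ and $3\pi/2$.

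For Part~1 I would fix $q=p_j$ (the case $q=p$ being immediate) and argue as follows. If $d_{ij}<\vartheta$, then by \textbf{(A)} $d_{ij}\le\vartheta-2\varepsilon<\pi$, so $cw(p_i,p_j)$ lies in $[2\pi/n,\,\vartheta-2\varepsilon]$ or in $[2\pi-\vartheta+2\varepsilon,\,2\pi-2\pi/n]$; since $cw(p_i',p_j')$ differs from it by at most $\varepsilon$, it lies in $(0,\vartheta-\varepsilon]$ resp.\ in $[2\pi-\vartheta+\varepsilon,\,2\pi)$, and in either case $d(p_i',p_j')\le\vartheta-\varepsilon<\vartheta$. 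If instead $d_{ij}>\vartheta$, then by \textbf{(A)} $cw(p_i,p_j)\in[\vartheta+2\varepsilon,\,2\pi-\vartheta-2\varepsilon]$, hence $cw(p_i',p_j')\in[\vartheta+\varepsilon,\,2\pi-\vartheta-\varepsilon]$ and so $d(p_i',p_j')\ge\vartheta+\varepsilon>\vartheta$. As no angular distance in $S$ equals $\vartheta$ (again by \textbf{(A)}), $q$ lies in the $\vartheta$-neighborhood of $p$ iff $d_{ij}<\vartheta$, and $q'$ lies in the $\vartheta$-neighborhood of $p'$ iff $d(p_i',p_j')<\vartheta$; putting these together gives Part~1.

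For Parts~2 and~3 I would first record that a point $x$ lies in the open semicircle centered at a point $y$ exactly when $cw(y,x)\in[0,\pi/2)\cup(3\pi/2,2\pi)$. For Part~2: by \textbf{(B)}, $cw(p_i,p_j)$ is more than $\varepsilon$ from both $\pi/2$ and $3\pi/2$, while $cw(p_i,p_j')$ is within $\varepsilon$ of it, so the two angles lie on the same side of $\pi/2$ and of $3\pi/2$; hence $p_j\in D\iff p_j'\in D$. Since the perturbation is a bijection $S\to S'$ and $|S\cap D|=n/2$ by compatibility, this gives $S'\cap D=\{p_j':p_j\in S\cap D\}$, a set of exactly $n/2$ points. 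For Part~3: both $cw(p_i,p_j')$ and $cw(p_i',p_j')$ lie within $\varepsilon$ of $cw(p_i,p_j)$, which by \textbf{(B)} is more than $\varepsilon$ from $\pi/2$ and $3\pi/2$; hence both are on the same side of those thresholds as $cw(p_i,p_j)$, and therefore on the same side as each other, so $p_j'\in D\iff p_j'\in D'$ for every $j$, i.e.\ $S'\cap D=S'\cap D'$; combined with Part~2 this yields $|S'\cap D'|=n/2$.

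I expect the only genuinely delicate point — and the reason $\varepsilon_{\vartheta,n}$ is set to $\delta/2$ rather than $\delta$ — to be the preservation of the \emph{strict} inequalities ``$<\vartheta$'' and ``$<\pi/2$'' in the definitions of the neighborhoods: the crude estimate ``angular distances move by at most $2\varepsilon$ and start at distance $\ge 2\varepsilon$ from the threshold'' yields only a non-strict conclusion, so one must instead exploit that a \emph{clockwise} angle moves by at most $\varepsilon$ while the \emph{unperturbed} angle retains the full margin $2\varepsilon$ (resp.\ $\pi/n$). The remaining ingredients — the arithmetic facts \textbf{(A)} and \textbf{(B)}, and the check that no clockwise angle wraps around modulo $2\pi$ (which holds since $cw(p_i,p_j)\ge 2\pi/n>\varepsilon$ for $i\neq j$) — are routine.
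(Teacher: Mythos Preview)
Your proof is correct and follows essentially the same strategy as the paper's: both arguments exploit that the unperturbed clockwise angles are at distance at least $2\varepsilon$ from the relevant thresholds ($\vartheta$ and $2\pi-\vartheta$ for Part~1, $\pi/2$ and $3\pi/2$ for Parts~2--3), while any perturbed angle differs from the unperturbed one by at most $\varepsilon$, so no threshold is crossed or touched. The paper phrases this via the arc endpoints of $B$, $B'$, $D$, $D'$ and tracks $cw(a',q')$ etc.; you phrase it via the distortion identities for $cw(p_i',p_j')$ and $cw(p_i,p_j')$ and the membership criterion $cw(y,x)\in[0,\pi/2)\cup(3\pi/2,2\pi)$, but the content is the same.
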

\begin{proof}
To prove the first statement, let $B$ and $B'$ be the $\vartheta$-neighborhoods of $p$ and $p'$, respectively. Let $a$ and $b$ be the endpoints of $B$, such that $cw(a,b)=2\vartheta$, and let $a'$ and $b'$ be the endpoints of $B'$, such that $cw(a',b')=2\vartheta$. We have $cw(a,a')=cw(b,b')=cw(p,p')\leq \varepsilon_{\vartheta, n}$. Also, by definition of $\varepsilon_{\vartheta, n}$, for every point $q\in S$, we have $cw(q,a)\geq 2\varepsilon_{\vartheta, n}$, $cw(a,q)\geq 2\varepsilon_{\vartheta, n}$, $cw(q,b)\geq 2\varepsilon_{\vartheta, n}$, and $cw(b,q)\geq 2\varepsilon_{\vartheta, n}$. Moreover, if $q'\in S'$ is the perturbed copy of $q$, we have $cw(q,q')\leq \varepsilon_{\vartheta, n}$.

Suppose now that $q\in B$, and so $cw(a,q)\in [2\varepsilon_{\vartheta, n},\  2\vartheta-2\varepsilon_{\vartheta, n}]$. This means that $cw(a',q)\in [\varepsilon_{\vartheta, n},\  2\vartheta-2\varepsilon_{\vartheta, n}]$, and also that $cw(a',q')\in [\varepsilon_{\vartheta, n},\  2\vartheta-\varepsilon_{\vartheta, n}]$, which in turn implies that $q'\in B'$. Conversely, if $q\notin B$, we have $cw(b,q)\in [2\varepsilon_{\vartheta, n},\  2\pi-2\vartheta-2\varepsilon_{\vartheta, n}]$, which leads to $cw(b',q)\in [\varepsilon_{\vartheta, n},\  2\pi-2\vartheta-2\varepsilon_{\vartheta, n}]$, and to $cw(b',q')\in [\varepsilon_{\vartheta, n},\  2\pi-2\vartheta-\varepsilon_{\vartheta, n}]$, implying that $q'\notin B'$.

The second and third statements are proved in a similar way. Note that, since $n$ is compatible with $\vartheta$, $D$ contains exactly $n/2$ points of $S$. Recall from the proof of \cref{p:size} that $n$ is of the form $4k+2$, and therefore the smallest angular distance between a point of $S$ and an endpoint of $D$ is $\pi/n\geq 2\varepsilon_{\vartheta, n}$. Now we can repeat the proof of the first statement verbatim, with $D$ and $D'$ instead of $B$ and $B'$, to conclude that a point $q\in S'$ lies in $D$ if and only if it lies in $D'$, which is true if and only if $q$ is the perturbed copy of a point of $S$ that lies in $D$. Since $|S'|=|S|=n$, both $D$ and $D'$ contain exactly $n/2$ points of $S'$.
\end{proof}
\begin{corollary}\label{c:isomorphic}
Let $n$ be an integer compatible with $\vartheta\leq \pi/2$, and let $S'$ be an $\varepsilon_{\vartheta, n}$-perturbation of the regular set $S$ of size $n$. If two swarms of robots form $S$ and $S'$ respectively, their visibility graphs are isomorphic.
\end{corollary}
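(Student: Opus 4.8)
The plan is to produce an explicit graph isomorphism and reduce everything to the first statement of \cref{p:epsilon}. Let $S$ be the regular set of size $n$ and $S'$ its $\varepsilon_{\vartheta,n}$-perturbation. I would define the map $\phi\colon S\to S'$ sending each point $p\in S$ to its perturbed copy $p'\in S'$. Since by construction every point of $S$ has a unique perturbed copy in $S'$, and $|S|=|S'|=n$, the map $\phi$ is a bijection between the vertex sets of the two visibility graphs; the remaining task is to show that $\phi$ both preserves and reflects adjacency.

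To verify this, recall that two robots are adjacent in a visibility graph precisely when their angular distance is strictly less than $\vartheta$, i.e., when each lies in the $\vartheta$-neighborhood of the other (a symmetric relation). So I would take distinct $p,q\in S$ and argue as follows: if $p$ and $q$ are adjacent in the visibility graph of $S$, then $q$ lies in the $\vartheta$-neighborhood of $p$, hence by \cref{p:epsilon}(1) the perturbed copy $q'$ of $q$ lies in the $\vartheta$-neighborhood of $p'=\phi(p)$, so $\phi(p)$ and $\phi(q)$ are adjacent in the visibility graph of $S'$. Conversely, if $\phi(p)$ and $\phi(q)$ are adjacent in the visibility graph of $S'$, then $q'$ lies in the $\vartheta$-neighborhood of $p'$, and the ``only if'' direction of \cref{p:epsilon}(1) gives that $q$ lies in the $\vartheta$-neighborhood of $p$, so $p$ and $q$ are adjacent in the visibility graph of $S$. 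This establishes that $\phi$ is a graph isomorphism.

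I do not expect a genuine obstacle here: the entire content has been front-loaded into \cref{p:epsilon}(1), whose biconditional formulation was chosen precisely so that the $\vartheta$-neighborhood structure is transported back and forth between $S$ and $S'$. The only points needing a moment's care are (i)~that $\phi$ is truly a bijection, which holds because $0<\varepsilon_{\vartheta,n}\leq\pi/2n<2\pi/n$ keeps the perturbed copies of distinct points of $S$ distinct (so $|S'|=n$), and (ii)~that the ``if and only if'' in \cref{p:epsilon}(1) supplies both implications, which is what upgrades $\phi$ from a mere homomorphism to an isomorphism. Everything else is immediate, so the proof will be short.
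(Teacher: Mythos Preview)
Your proposal is correct and follows exactly the same approach as the paper: define the isomorphism by sending each $p\in S$ to its perturbed copy in $S'$, and invoke the biconditional in \cref{p:epsilon}(1) to conclude that edges are preserved and reflected. The paper's proof is just a terser version of what you wrote.
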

\begin{proof}
The isomorphism is realized by mapping a robot in the first swarm, located in $p\in S$, to the robot in the second swarm located in the perturbed copy of $p$. By the first statement of \cref{p:epsilon}, there is an edge between two robots in the first swarm if and only if there is an edge between the corresponding robots in the second swarm.
\end{proof}

\paragraph*{Combining configurations}
Next we will describe a way of combining two configurations of robots into a new one that takes an open semicircle from each. This operation will be used to construct configurations of robots where a given distributed algorithm fails to make the robots gather.

Let $S_1$ and $S_2$ be two subsets of $C$, and let $D$ be an open semicircle. The \emph{$D$-combination} of $S_1$ and $S_2$ is defined as the set $(S_1\cap D)\cup\rho(S_2\cap D)$, where $\rho$ is the rotation of $\pi$ around the origin. In other words, this operation takes $S_1$, discards the points that do not lie in $D$, and replaces them with the points of $S_2$ that lie in $D$, by mapping them to their antipodal points.

\paragraph*{Preliminary lemmas}
We are now ready to give our first two lemmas, which deal with swarms forming perturbations of a regular configuration, and analyze the distributed algorithms that make robots move in some ways. The first lemma states that, if an algorithm makes a robot move to a point not currently occupied by another robot, then the algorithm cannot solve the Gathering problem.
\begin{lemma}\label{l:imposs1}
Let $A$ be a distributed algorithm, let $n$ be compatible with $\vartheta\leq \pi/2$, and consider a swarm of robots that forms an $\varepsilon_{\vartheta, n}$-perturbation $S'$ of the regular set of size $n$. If there is a robot that, executing $A$, moves to a point not in $S'$, then $A$ does not solve the Gathering problem, even under the condition that the swarm initially forms a rotationally asymmetric set of $n$ distinct points with a connected visibility graph.
\end{lemma}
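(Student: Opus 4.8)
Say the robot in question sits at $p'_i\in S'$ and, under $A$, moves to $q\notin S'$, and let $D$ be the open semicircle centered at $p'_i$. The only thing I will extract from the hypothesis $\vartheta\leq\pi/2$ is that the $\vartheta$-neighborhood of $p'_i$ is contained in $D$; hence the snapshot of the robot at $p'_i$ depends only on $S'\cap D$, and in particular $q\in D$. The plan is then to manufacture from $S'$ a \emph{single} configuration $T$ that still satisfies all the hypotheses of the statement --- a set of $n$ distinct points, rotationally asymmetric, with a connected visibility graph --- but from which the scheduler can force a rotationally symmetric configuration in one semi-synchronous step. Once that is done, \cref{p:symmetry} (or \cref{c:symmetry}) finishes the argument, since by obliviousness the swarm behaves from any reached configuration exactly as if it had started there.

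\textbf{The configuration $T$.}
I will take $T$ to be the $D$-combination of $S'$ with $(S'\setminus\{p'_i\})\cup\{q\}$; writing $L=S'\cap D$ and $M=(L\setminus\{p'_i\})\cup\{q\}$, this is $T=L\cup\rho(M)$, where $\rho$ is the rotation by $\pi$. The routine facts I would check first: by \cref{p:epsilon}, $|L|=n/2$ and $L$ is the set of perturbed copies of the points of $S$ in $D$; since $q\in D$ and $q\notin S'$ we get $q\notin L$, so $|M|=n/2$; and as $L\subseteq D$ while $\rho(M)\subseteq\rho(D)$, the set $T$ has exactly $n$ distinct points and $T\cap D=L$. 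Since $T$ and $S'$ agree on $D$, the robot at $p'_i$ takes the same snapshot in $T$ as in $S'$ and again moves to $q$, and $q\notin T$ because $q\in D$ but $q\notin L=T\cap D$.

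\textbf{Asymmetry and connectivity of $T$.}
The points of $T$ whose antipode is \emph{not} in $T$ are exactly $p'_i$ (since $L\setminus M=\{p'_i\}$) and $\rho(q)$ (since $M\setminus L=\{q\}$), and these are distinct because $q$, at angular distance $\vartheta\leq\pi/2<\pi$ from $p'_i$, is not its antipode; so $T$ is rotationally asymmetric by \cref{p:antipodalasymmetry}. For connectivity I would argue that $T\setminus\{\rho(q)\}$ is obtained from a genuine $\varepsilon_{\vartheta,n}$-perturbation $S^\dagger$ of $S$ by deleting the single point $\rho(p_i)$: by \cref{c:isomorphic} its visibility graph is isomorphic to that of $S$ minus a vertex, and the visibility graph of $S$ contains the $n$-cycle on consecutive vertices (because $2\pi/n<\vartheta$), which stays connected after one deletion. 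Then $\rho(q)$ attaches to this component: since $q\notin S'$, it lies strictly between two consecutive --- hence, by \cref{p:epsilon}, mutually visible --- points $u,v$ of $S'$, so $q$ is within visibility range of both; at least one of $u,v$ lies in $D\setminus\{p'_i\}$, call it $w$, and then $\rho(w)\in T$ lies within visibility range of $\rho(q)$.

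\textbf{Conclusion, and the main obstacle.}
Starting the swarm in $T$ and letting the scheduler activate only the robot at $p'_i$, that robot moves to $q$ and the configuration becomes $M\cup\rho(M)$, which is $\rho$-invariant, hence rotationally symmetric, and still has $n$ distinct points; by \cref{p:symmetry} the swarm can never gather from there, so $A$ fails on the admissible initial configuration $T$, proving the lemma. I expect the genuinely delicate step to be the connectivity argument --- precisely, pinning down where $q$, and hence $\rho(q)$, is allowed to land relative to the perturbed regular points so that $\rho(q)$ is guaranteed a visible neighbor in $T$ --- which is exactly the kind of estimate for which the quantitative choice of $\varepsilon_{\vartheta,n}$ and \cref{p:epsilon} are tailored. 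Everything else (the point count, the equality of the snapshots at $p'_i$, the two-element antipodal set) is bookkeeping riding on the single observation that $\vartheta\leq\pi/2$ confines the snapshot at $p'_i$ to $D$.
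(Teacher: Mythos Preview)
Your proof is correct and follows essentially the same route as the paper's: build the $D$-combination $T$ of $S'$ and $(S'\setminus\{p'_i\})\cup\{q\}$, verify it has $n$ distinct points, use \cref{p:antipodalasymmetry} on the two non-antipodal points $p'_i,\rho(q)$ for asymmetry, argue connectivity by recognizing $T\setminus\{\rho(q)\}$ as a perturbation minus one vertex and then attaching $\rho(q)$, and finally activate only the robot at $p'_i$ to land in the $\rho$-invariant set $M\cup\rho(M)$. Two small remarks: the deleted vertex should be $\rho(p'_i)$ (the antipode of the \emph{perturbed} point), not $\rho(p_i)$; and for attaching $\rho(q)$, the paper sidesteps your case analysis by observing that every point of $C$ lies in the $\vartheta$-neighborhood of at least two points of the auxiliary perturbation $S^\dagger$, hence of at least one point of $S^\dagger\setminus\{\rho(p'_i)\}$---this is a cleaner way to handle the step you flagged as the main obstacle.
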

\begin{proof}
Let $p'\in S'$, and let $D$ be the open semicircle centered at $p'$. By the third statement of \cref{p:epsilon}, $|S'\cap D|=n/2$. Assume that, if the robot located in $p'\in S'$ executes $A$, it moves to a point $q\notin S'$ (see \cref{f:lemmas12} (left)). Let $S''=(S'\setminus \{p'\})\cup\{q\}$, and let $Q$ be the $D$-combination of $S'$ and $S''$. Observe that, since $\vartheta\leq \pi/2$, the $\vartheta$-neighborhood of $p'$ is a subset of $D$, and therefore $q\in D$. Hence, $|S''\cap D|=|S'\cap D|=n/2$, implying that $|Q|=n$.

Let $q'\in Q$ be the antipodal point of $q$. Note that the only points of $Q$ whose antipodal points are not in $Q$ are $p'$ and $q'$. So, $Q$ satisfies the hypotheses of \cref{p:antipodalasymmetry}, and therefore it is rotationally asymmetric.

Next we will prove that $Q$ has a connected visibility graph. Let $p''$ be the antipodal point of $p'$, let $Q'=(Q\setminus \{q'\})\cup\{p''\}$, and observe that $Q'$ is an $\varepsilon_{\vartheta, n}$-perturbation of the regular set $S$ of size $n$. Since $n$ is compatible with $\vartheta$, the $\vartheta$-neighborhood of any point of $S$ contains at least two more points of $S$, one on each side of it. It follows that the visibility graph of $S$ has a Hamiltonian cycle, and so does the visibility graph of $Q'$, by \cref{c:isomorphic}. Hence, the visibility graph of $Q'\setminus\{p''\}$ has a Hamiltonian path. By the first statement of \cref{p:epsilon}, the $\vartheta$-neighborhood of any point of $Q'$ contains two more points of $Q'$, one on each side of it. Therefore, each point of the circle $C$ lies in the $\vartheta$-neighborhood of at least two points of $Q'$. So, each point of $C$ lies in the $\vartheta$-neighborhood of at least one point of $Q'\setminus\{p''\}$. In particular, this is true of $q'$. We conclude that the visibility graph of $(Q'\setminus\{p''\})\cup\{q'\}=Q$ contains a path $P$ connecting $n-1$ nodes (corresponding to $Q'\setminus\{p''\}$), plus a node (corresponding to $q'$) connected to some node of $P$, implying that the whole graph is connected.

Now, consider a swarm initially forming $Q$, which, as we recall, is a rotationally asymmetric set of $n$ distinct points with a connected visibility graph. Suppose that, in the first time unit, the scheduler activates only the robot in $p'$, which executes $A$. Since the $\vartheta$-neighborhood of $p'$ is contained in $D$, the robot gets the same snapshot as it would if the swarm's configuration were $S'$. Therefore the robot moves to $q$, and the swarm's new configuration is $R=(Q\setminus \{p'\})\cup\{q\}$. Observe that the antipodal of each point of $R$ is also in $R$, and so any two antipodal points of $R$ have the same angle sequence. Due to \cref{p:leader}, $R$ is rotationally symmetric, and hence $A$ is not a Gathering algorithm, by \cref{c:symmetry} (recall from \cref{sec:2} that, by definition, the existence of a single schedule that prevents the robots from gathering is sufficient to conclude that $A$ is not a Gathering algorithm).
\end{proof}

\begin{figure}
\begin{center}
\includegraphics[scale=1]{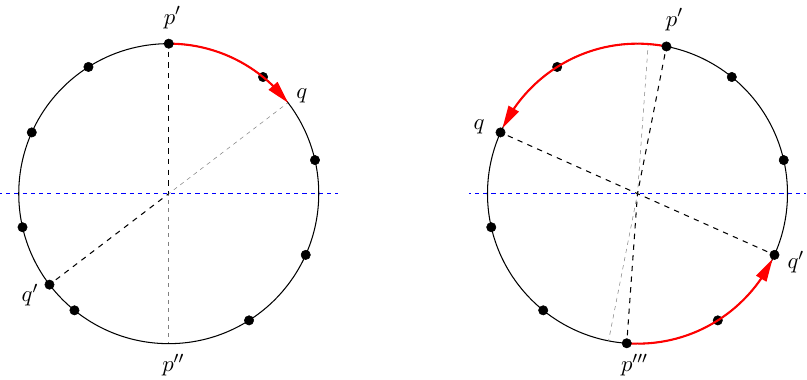}
\end{center}
\caption{Illustrations of \cref{l:imposs1} (left) and \cref{l:imposs2} (right).}
\label{f:lemmas12}
\end{figure}

The second lemma states that, if a distributed algorithm makes a robot $r$ move on top of another robot $r'$, and there is a perturbation of $r$ such that the same algorithm makes $r$ move on top of the same robot $r'$, then the algorithm does not solve the Gathering problem.
\begin{lemma}\label{l:imposs2}
Let $A$ be a distributed algorithm, let $n$ be compatible with $\vartheta\leq \pi/2$, let $S$ be the regular set of size $n$, and let $S'$ and $S''$ be two distinct sets in the same bundle of $\varepsilon_{\vartheta, n}$-perturbations of $p\in S$, where $p'\in S'$ and $p''\in S''$ are the perturbed copies of $p$. Assume that, if a swarm of robots forms $S'$ and the robot in $p'$ executes $A$, it moves to another robot, located in $q\in S'$. Also assume that, if a swarm of robots forms $S''$ and the robot in $p''$ executes $A$, it moves to the same point $q$. Then, $A$ does not solve the Gathering problem, even under the condition that the swarm initially forms a rotationally asymmetric set of $n$ distinct points with a connected visibility graph.
\end{lemma}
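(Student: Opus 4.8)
The plan is to follow the template of the proof of \cref{l:imposs1}, but to combine $S'$ and $S''$ across an open semicircle in such a way that \emph{two} robots are forced to move, each onto an already occupied point, producing a rotationally symmetric configuration in a single step. Fix coordinates so that $p$ has angle $0$, write $\varepsilon=\varepsilon_{\vartheta,n}$, let $\rho$ denote the rotation by $\pi$ about the origin, and let $D'$ be the open semicircle centered at $p'$. Since $S'$ and $S''$ belong to the same bundle of $\varepsilon$-perturbations of $p$, we have $S'\setminus\{p'\}=S''\setminus\{p''\}$ and $p'\neq p''$; since the robot in $p'$ moves onto $q$, we have $q\in N(p')\subseteq D'$ (using $\vartheta\le\pi/2$), $q\neq p'$, and hence $q\in S''$ as well. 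Define $Q$ to be the $D'$-combination of $S'$ and $S''$, namely $Q=(S'\cap D')\cup\rho(S''\cap D')$. Because $D'$ and $\rho(D')$ are complementary semicircles and $|S'\cap D'|=|S''\cap D'|=n/2$ (the first by \cref{p:epsilon}, the second by the same argument, as $D'$ is a semicircle centered within $\varepsilon$ of $p$), the set $Q$ consists of $n$ distinct points; it contains $p'$ and $q$ in $D'$ and $\rho(p'')$ and $\rho(q)$ in $\rho(D')$, and these four points are pairwise distinct.

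I would then check that $Q$ is an admissible initial configuration. Rotational asymmetry follows from \cref{p:antipodalasymmetry}: since $S'\cap D'$ and $S''\cap D'$ differ only in that one contains $p'$ and the other $p''$, the same bookkeeping as in \cref{l:imposs1} shows that $p'$ and $\rho(p'')$ are the only two points of $Q$ whose antipodes do not lie in $Q$, and $p'\neq\rho(p'')$ since they lie in opposite semicircles. For connectivity, set $Q'=(Q\setminus\{\rho(p'')\})\cup\{\rho(p')\}$; a direct computation gives $Q'=(S'\cap D')\cup\rho(S'\cap D')$, and, because $\rho$ commutes with the perturbing rotations and $S\cap D$ together with $\rho(S\cap D)$ partition the $n=4k+2$ vertices of $S$, this $Q'$ is itself an $\varepsilon_{\vartheta,n}$-perturbation of $S$. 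By \cref{c:isomorphic} its visibility graph is isomorphic to that of $S$, which has a Hamiltonian cycle because $n$ is compatible with $\vartheta$ (consecutive vertices of $S$ are at angular distance $2\pi/n<\vartheta$); deleting $\rho(p')$ leaves a Hamiltonian path on the remaining $n-1$ vertices, and, exactly as in \cref{l:imposs1} (via \cref{p:epsilon}), every point of the circle---in particular $\rho(p'')$---is visible to at least one of those $n-1$ vertices, so the visibility graph of $Q=(Q'\setminus\{\rho(p')\})\cup\{\rho(p'')\}$ is connected.

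Now let a swarm form $Q$ and let the scheduler activate, in the first time unit, exactly the robots at $p'$ and $\rho(p'')$. The robot at $p'$ sees $Q\cap N(p')=(S'\cap D')\cap N(p')=S'\cap N(p')$ because $N(p')\subseteq D'$, which is its view in $S'$, so it moves to $q$. For the robot at $\rho(p'')$, the claim is that $Q\cap N(\rho(p''))=\rho\bigl(S''\cap N(p'')\bigr)$: the part $\rho(S''\cap D')$ of $Q$ meets $N(\rho(p''))$ in $\rho\bigl((S''\cap D')\cap N(p'')\bigr)=\rho\bigl(S''\cap N(p'')\bigr)$, using $N(p'')\subseteq D''$ together with $S''\cap D'=S''\cap D''$ (again the argument of \cref{p:epsilon}), while the part $S'\cap D'$ contributes nothing, since each of its points is a perturbed copy of a point of $S$ lying within angular distance $\pi/2-\pi/n$ of $p$, hence (as $\rho(p'')$ is within $\varepsilon_{\vartheta,n}\le\pi/2n$ of the antipode of $p$) at angular distance at least $\pi/2+\pi/n-2\varepsilon_{\vartheta,n}\ge\pi/2\ge\vartheta$ from $\rho(p'')$, i.e.\ outside the open neighborhood. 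Since a robot's snapshot is expressed in a coordinate frame determined only by its position and the geometry of its visible arc, the robot at $\rho(p'')$ in $Q$ and the robot at $p''$ in $S''$ obtain identical snapshots, so the former moves to $\rho(q)$. The resulting configuration $R$ has multiplicity $2$ at each of $q$ and $\rho(q)$ and multiplicity $1$ on the $n-4$ points of the $\rho$-invariant set $(S'\cap D'\setminus\{p',q\})\cup\rho(S'\cap D'\setminus\{p',q\})$; hence $R=\rho(R)$, so $R$ is rotationally symmetric, and by \cref{c:symmetry} the algorithm $A$ fails to solve Gathering even from the asymmetric, $n$-point, connected configuration $Q$.

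I expect the delicate point to be the analysis of the view of the robot at $\rho(p'')$ in $Q$: one must ensure that importing an antipodal semicircle of $S''$ neither inserts a spurious robot into, nor deletes a robot from, the \emph{open} $\vartheta$-neighborhood of $\rho(p'')$, and that the retained semicircle of $S'$ stays clear of that neighborhood. Because the combining semicircle $D'$ differs from the semicircle $D''$ naturally attached to $p''$ whenever $p'\neq p''$, and because for $\vartheta=\pi/2$ these neighborhoods occupy entire semicircles, the required inclusions are tight and hinge on the precise value of $\varepsilon_{\vartheta,n}$ and on $n$ being of the form $4k+2$; they reduce, however, to the kind of estimates already carried out in the proof of \cref{p:epsilon}.
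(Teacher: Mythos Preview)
Your proof is correct and follows the same core approach as the paper: form the $D$-combination of $S'$ and $S''$, check it is rotationally asymmetric with connected visibility graph, activate the two ``special'' robots, and obtain a $\rho$-invariant (hence rotationally symmetric) multiset in one step.

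The only noteworthy difference is an unnecessary detour in your connectivity argument. You center the combining semicircle at $p'$ and then pass through the auxiliary set $Q'=(Q\setminus\{\rho(p'')\})\cup\{\rho(p')\}$, replaying the Hamiltonian-path argument from \cref{l:imposs1}. The paper instead centers the semicircle at $p$ and observes directly that $Q$ itself is an $\varepsilon_{\vartheta,n}$-perturbation of $S$: indeed, $S'\cap D$ consists of perturbed copies of $S\cap D$, and $\rho(S''\cap D)$ consists of perturbed copies of $S\cap\rho(D)$ (since $\rho$ maps the perturbed copy of $p_i$ to a perturbed copy of $p_{i+n/2}$ with the same coefficient). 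Connectivity is then a one-line consequence of \cref{c:isomorphic}. Because $S'\cap D'=S'\cap D$ and $S''\cap D'=S''\cap D$ by the third statement of \cref{p:epsilon}, your $Q$ is in fact the same set as the paper's, so you could have made this simpler observation too; your $Q'$ and the Hamiltonian-path reattachment are correct but superfluous here.
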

\begin{proof}
Let $D$ be the open semicircle centered at $p$, and let $Q$ be the $D$-combination of $S'$ and $S''$. By the second statement of \cref{p:epsilon}, $|S'\cap D|=|S''\cap D|=n/2$, and so $|Q|=n$.

Since $n$ is compatible with $\varepsilon$, the visibility graph of $S$ is connected. Observe that $Q$ is an $\varepsilon_{\vartheta, n}$-perturbation of $S$, and so, by \cref{c:isomorphic}, its visibility graph is isomorphic to that of $S$. This implies that the visibility graph of $Q$ is connected, as well.

Let $p'''\in Q$ be the antipodal point of $p''$. Since $p'\neq p''$, it follows that $p'$ and $p'''$ are not antipodal to each other. Moreover, as $S'$ and $S''$ are in the same bundle of $\varepsilon_{\vartheta, n}$-perturbations of $p$, we have $S'\setminus \{p'\}=S''\setminus \{p''\}$. As a consequence, the only points of $Q$ whose antipodal points are not in $Q$ are $p'$ and $p'''$. So, $Q$ satisfies the hypotheses of \cref{p:antipodalasymmetry}, and therefore it is rotationally asymmetric.

Consider a swarm initially forming $Q$, which, as we proved, is a rotationally asymmetric set of $n$ distinct points with a connected visibility graph. Suppose that, in the first time unit, the scheduler activates only the robots in $p'$ and $p'''$, which execute $A$. 

By the third statement of \cref{p:epsilon} applied to $S'$ and $Q$ (both of which are $\varepsilon_{\vartheta, n}$-perturbations of $S$), and since the $\vartheta$-neighborhood of $p'$ is contained in $D$, the robot in $p'$ gets the same snapshot as it would if the swarm's configuration were $S'$. Hence, this robot moves to $q$. Similarly, the robot in $p'''$ gets the same snapshot that it would if it were in $p''$ and the swarm's configuration were $S''$. So, this robot moves to $q'$, the antipodal point of $q$. Since $q$ and $q'$ are antipodal to each other, the swarm's new configuration is a rotationally symmetric multiset, because each robot has an antipodal one with the same angle sequence, and therefore \cref{p:leader} applies. We conclude that $A$ is not a Gathering algorithm, due to \cref{c:symmetry}.
\end{proof}

\paragraph*{Probabilistic argument}
Our concluding argument goes as follows. Suppose for a contradiction that there is a Gathering algorithm $A$ for some $\vartheta\leq \pi/2$. Let $n$ be an arbitrarily large integer compatible with $\vartheta$, and let $S$ be the regular set of size $n$. We will derive a contradiction by studying the behavior of $A$ on the swarms forming the $\varepsilon_{\vartheta, n}$-perturbations of $S$. Specifically, let $p_1$, $p_2$, \dots, $p_n$ be the points of $S$ taken in clockwise order, starting from $p_1=(1,0)$. Suppose that a swarm of $n$ robots forms a generic $\varepsilon_{\vartheta, n}$-perturbations of $S$, with robot $r_i$ occupying the perturbed copy of $p_i$, and let all robots execute algorithm $A$.

Let us first restrict ourselves to a bundle $P$ of $\varepsilon_{\vartheta, n}$-perturbations of some $p_i\in S$, and let us analyze the possible behaviors of the robot $r_i$. Recall that, by definition of bundle, the perturbations in $P$ have fixed coefficients for all the points of $S$ except $p_i$, and perturb $p_i$ in every possible way by varying the coefficient $\gamma_i\in [0,1]$. Observe that, by \cref{l:imposs1}, $A$ should never make $r_i$ move to some unoccupied location, or $A$ would not be a Gathering algorithm. Also, if two or more perturbations in the bundle $P$ made $r_i$ move to the same robot, then $A$ would not be a Gathering algorithm, due to \cref{l:imposs2}. However, by the pigeonhole principle, if $n$ perturbations in $P$ made $r_i$ move to some other robot, then at least two of them would make it move to the same robot. It follows that at most $n-1$ perturbations in $P$ can make $r_i$ move at all. So, all perturbations in $P$ except a finite number of them must make $r_i$ stay still.

Now, let us pick an $\varepsilon_{\vartheta, n}$-perturbation of $S$ by choosing its coefficients $\gamma\in [0,1]^n$ uniformly at random. Let us also define $n$ random variables $X_i\colon [0,1]^n\to \{0,1\}$, with $1\leq i\leq n$, such that $X_i(\gamma)=0$ if and only if algorithm $A$ makes the robot $r_i$ stay still when the swarm's configuration is the $\varepsilon_{\vartheta, n}$-perturbation of $S$ defined by the coefficients $\gamma$. By the above argument, for every bundle $P$ of $\varepsilon_{\vartheta, n}$-perturbations of $p_i$, we have $\Pr[X_i(\gamma)=1\mid \gamma\in P]=0$. Thus, with the convention that $\gamma=(\gamma_1, \gamma_2, \dots, \gamma_n)$, we have
\begin{equation*}
\begin{split}
\Pr[X_i=1] & = \int_{[0,1]^n}X_i(\gamma)\, d\gamma = \int_0^1\cdots\int_0^1 X_i(\gamma)\, d\gamma_1\, d\gamma_2 \cdots d\gamma_n \\
 & = \int_0^1\cdots\int_0^1 X_i(\gamma)\, d\gamma_i\, d\gamma_1 \cdots d\gamma_{i-1}\, d\gamma_{i+1} \cdots d\gamma_n \\
 & = \int_0^1\cdots\int_0^1 0\, d\gamma_1 \cdots d\gamma_{i-1}\, d\gamma_{i+1} \cdots d\gamma_n = 0.
\end{split}
\end{equation*}

Hence, the probability that $A$ will make the robot $r_i$ stay still when the swarm's configuration is picked at random among all $\varepsilon_{\vartheta, n}$-perturbations of $S$ is $1$. Since this is true of all robots separately, it is also true of all robots collectively, by the inclusion-exclusion principle. In other words, with probability $1$, on a random $\varepsilon_{\vartheta, n}$-perturbation of $S$, no robot will be able to move, and therefore the robots will be unable to gather. Moreover, with probability $1$, a random $\varepsilon_{\vartheta, n}$-perturbation of $S$ is rotationally asymmetric. As a consequence, there is at least one initial configuration (actually, a great deal of configurations) where the swarm forms a rotationally asymmetric set of $n$ distinct points with a connected visibility graph, and where no robot is able to move. We conclude that $A$ cannot be a Gathering algorithm, even under such strong conditions.

\paragraph*{Technical obstacles}
The probabilistic proof we outlined above is sound for the most part, but unfortunately making it rigorous is a delicate matter. The problem is that, in order for $X_i$ to be a random variable, it has to be a measurable function. For this to be true, the set of coefficients corresponding to perturbations where algorithm $A$ makes the robot $r_i$ stay still should be a measurable subset of $[0,1]^n$. In turn, this requires some assumptions on the nature of $A$, whereas we only defined $A$ as a generic function mapping a snapshot to a point.

However, since the function $A$ actually implements an algorithm, which typically is a finite sequence of operations that are well-behaved in an analytic sense, most reasonable assumptions on $A$ would rule out the pathological non-measurable cases, and would therefore make $X_i$ a properly defined random variable, allowing the rest of the proof to go through.

Nonetheless, we choose to adopt a different approach, which is both less technical and more general in scope. Indeed, we will give a ``derandomized'' version of the above proof, which will not deal with probability spaces and random variables, and will not require a more restrictive re-definition of which functions are computable by mobile robots.

\paragraph*{Derandomization}
Next we will show how to complete the previous argument without the use of probability. Note that we do not need to prove that a random perturbation causes all robots to stay still with probability $1$: we merely have to show that there is at least one perturbation with such a property. This is significantly easier, and is achieved by the next lemma, where $X_i$ no longer denotes a random variable but simply a set of coefficients.
\begin{lemma}\label{l:derandomize}
Let $m,n\in \mathbb N^+$, and let $X_1$, $X_2$, \dots, $X_n$ be subsets of the unit hypercube $[0,1]^n\subset \mathbb R^n$ such that every line parallel to the $i$th coordinate axis intersects $X_i$ in less than $m$ points, for all $1\leq i\leq n$. Then, there is a point in $[0,1]^n$ whose $n$ coordinates are all distinct that does not lie in any of the sets $X_1$, $X_2$, \dots, $X_n$.
\end{lemma}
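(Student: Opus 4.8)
The plan is to sidestep every measure-theoretic subtlety by replacing the continuum $[0,1]$ with a sufficiently large \emph{finite} set of candidate coordinate values and then running a direct counting argument. The point is that the hypothesis on $X_i$ is a uniform sparseness bound --- fewer than $m$ points on each axis-parallel line --- which is automatically inherited by any finite restriction; so it will suffice to exhibit a large grid on which the ``forbidden'' points (those lying in some $X_i$ or having two equal coordinates) are too few to cover everything.

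\textbf{Key steps.} First I would set $K=n(m-1)+\binom{n}{2}+1$ and fix an arbitrary set $S\subseteq[0,1]$ of $K$ distinct real numbers; the candidate points are the $K^n$ elements of $S^n\subseteq[0,1]^n$. Next, for a fixed index $i$, partition $S^n$ according to the values of the coordinates $x_j$ with $j\neq i$: each of these $K^{n-1}$ classes is contained in a single line parallel to the $i$th coordinate axis, hence meets $X_i$ in at most $m-1$ points by hypothesis. Therefore $|S^n\cap X_i|\le (m-1)K^{n-1}$, and summing over the $n$ indices, at most $n(m-1)K^{n-1}$ points of $S^n$ lie in $X_1\cup\cdots\cup X_n$. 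Separately, the number of points of $S^n$ with two equal coordinates is at most $\binom{n}{2}K^{n-1}$ (choose the colliding pair, its common value, and the remaining $n-2$ coordinates, and apply a union bound over the $\binom{n}{2}$ pairs). Adding the two estimates, the number of forbidden points is at most $\bigl(n(m-1)+\binom{n}{2}\bigr)K^{n-1}=(K-1)K^{n-1}<K^n=|S^n|$, so some point of $S^n$ avoids every $X_i$ and has all coordinates distinct, which is exactly what the lemma asserts.

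\textbf{Main obstacle.} Honestly I do not expect a real obstacle here: the argument is elementary counting, and the only thing needing care is picking $K$ so that the forbidden count stays strictly below $K^n$ --- and even this is optional, since the forbidden count grows like $K^{n-1}$ while $|S^n|=K^n$, so any sufficiently large $K$ works. I would stress in the write-up that no regularity, continuity, or measurability assumption on the sets $X_i$ is used anywhere, which is precisely the advantage of this derandomized statement over the probabilistic argument sketched earlier. For context, the lemma is meant to be invoked with $m=n$ and with $X_i$ the set of perturbation coefficients $\gamma\in[0,1]^n$ for which algorithm $A$ moves robot $r_i$: the hypothesis then follows from \cref{l:imposs1}, \cref{l:imposs2}, and the pigeonhole principle (at most $n-1$ perturbations in any bundle can move $r_i$), while the distinctness of the coordinates of the produced point is what ensures that the resulting perturbation is rotationally asymmetric.
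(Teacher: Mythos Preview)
Your proof is correct and takes a genuinely different, in fact more economical, route than the paper's. The paper argues inductively: it fixes mutually disjoint finite sets $Y_1,\dots,Y_n\subset[0,1]$ with $|Y_i|=m^{2^{i-1}}$, sets $Z_i=Y_1\times\dots\times Y_i$, and shows by induction on $i$ that every set of the form $Z_i\times\{x\}$ contains a point avoiding $X_1\cup\dots\cup X_i$; the inductive step uses pigeonhole via the identity $|Y_i|/|Z_{i-1}|=m$, and the disjointness of the $Y_i$ handles the ``distinct coordinates'' requirement for free. Your approach replaces this coordinate-by-coordinate pigeonhole with a single global union bound over the grid $S^n$, counting forbidden points directly. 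What you gain is simplicity and a much smaller grid (size linear in $m$ and quadratic in $n$, versus the paper's doubly exponential $m^{2^{n-1}}$); what the paper's argument buys is that the ``distinct coordinates'' condition is built into the construction rather than handled by a separate count, and the point is located somewhat more explicitly layer by layer. Either way, both proofs are purely combinatorial and require no measurability hypothesis on the $X_i$, which is the whole purpose of the lemma.
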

\begin{proof}
For $1\leq i\leq n$, let $a_i=m^{2^{i-1}}$ and let $s_i=\sum_{j\leq i} a_j$, with $s_0=0$. Now let $Y_i=\{j/s_n \mid s_{i-1}< j\leq s_i\}\subset [0,1]$, and let $Z_i=Y_1\times Y_2\times\dots\times Y_i\subset [0,1]^i$. Observe that the sets $Y_i$ are mutually disjoint, and therefore all the points of $Z_n$ have distinct coordinates. We claim that $Z_n$ contains a point that does not lie in any of the sets $X_i$.

We will prove by induction on $i$ that, for all $x\in [0,1]^{n-i}$, the set $Z_i\times \{x\}\subset [0,1]^n$ contains a point that does not lie in $\bigcup_{j\leq i}X_j$. Thus, we will obtain our claim for $i=n$.

To prove the base step, consider the set $Z_1\times\{x\}$, with $x\in [0,1]^{n-1}$. This is a set of $|Z_1|=a_1=m$ points lying on a line parallel to the first coordinate axis, and therefore it must contain a point that does not lie in $X_1$.

For the inductive step, let $i>1$, and consider the set $Z_i\times\{x\}$, with $x\in [0,1]^{n-i}$. Since $Z_i=Z_{i-1}\times Y_i$, we have $Z_i\times\{x\} = \bigcup_{y\in Y_i} Z_{i-1}\times \{y\}\times \{x\}$. By the induction hypothesis, each of the $|Y_i|$ sets of the form $Z_{i-1}\times \{y\}\times \{x\}$ with $y\in Y_i$ contains a point not in $\bigcup_{j\leq i-1}X_j$. In other words, for each $y\in Y_i$ there exists $z\in Z_{i-1}$ such that the point $\{z\}\times \{y\}\times \{x\}$ is not in $\bigcup_{j\leq i-1}X_j$. Note that $|Z_{i-1}|=\prod_{j\leq i-1}a_j=\prod_{j\leq i-1}m^{2^{j-1}} = m^{2^{i-1}-1}=a_i/m = |Y_i|/m$, and hence $|Y_i|/|Z_{i-1}|=m$. So, by the pigeonhole principle, there exists $z\in Z_{i-1}$ such that there are at least $m$ different choices of $y\in Y_i$ for which the point $\{z\}\times \{y\}\times \{x\}$ is not in $\bigcup_{j\leq i-1}X_j$. Since these are at least $m$ points lying on a line parallel to the $i$th coordinate axis, at least one of them does not lie in $X_i$. It follows that such a point is in $Z_i\times\{x\}$ and does not lie in $\bigcup_{j\leq i}X_j$.
\end{proof}

We can now prove the main result of this section.
\begin{theorem}\label{t:impossible}
If $\vartheta\leq\pi/2$, and for arbitrarily large $n$, there is no Gathering algorithm under the condition that the swarm initially forms a rotationally asymmetric set of $n$ distinct points with a connected visibility graph.
\end{theorem}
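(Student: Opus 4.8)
The plan is to suppose, for contradiction, that a distributed algorithm $A$ solves the Gathering problem under the stated condition for some $\vartheta\le\pi/2$, and then to exhibit a single initial configuration that satisfies all the required hypotheses and on which $A$ provably freezes. First I would use \cref{p:size} to fix an arbitrarily large integer $n$ compatible with $\vartheta$; let $S=\{p_1,\dots,p_n\}$ be the regular set of size $n$, taken clockwise from $p_1=(1,0)$, and set $\varepsilon=\varepsilon_{\vartheta,n}$. For every coefficient tuple $\gamma\in[0,1]^n$, picture the swarm forming the $\varepsilon$-perturbation $S'_\gamma$ of $S$, with robot $r_i$ on the perturbed copy of $p_i$, and define
\[
X_i=\{\gamma\in[0,1]^n \mid A\text{ makes }r_i\text{ move when the swarm forms }S'_\gamma\}.
\]

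The heart of the argument is to verify that every line parallel to the $i$th coordinate axis meets $X_i$ in fewer than $n$ points, so that \cref{l:derandomize} applies with $m=n$. Such a line is obtained by fixing $\gamma_j$ for all $j\ne i$ and letting $\gamma_i$ range over $[0,1]$; the corresponding perturbations form precisely a bundle $P$ of $\varepsilon$-perturbations of $p_i$, in which every robot other than $r_i$ is frozen. Since $A$ is assumed to solve Gathering, \cref{l:imposs1} forbids $r_i$ from ever moving to an unoccupied point, so for $\gamma\in X_i$ the robot $r_i$ must move to one of the $n-1$ positions of $S'_\gamma\setminus\{p'_i\}$, which are common to all of $P$; and \cref{l:imposs2} forbids two distinct perturbations in $P$ from sending $r_i$ to the same point. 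Hence at most $n-1<n$ elements of $P$ lie in $X_i$, as required.

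Applying \cref{l:derandomize} then yields a tuple $\gamma^*\in[0,1]^n$ whose $n$ coordinates are pairwise distinct and which lies in no $X_i$. It remains to check that the swarm forming $S'_{\gamma^*}$ satisfies the hypotheses of the theorem. It consists of $n$ distinct points by construction. Its visibility graph is connected: compatibility condition~3 gives $2\pi/n<\vartheta$, so the visibility graph of $S$ contains the $n$-cycle $p_1p_2\cdots p_n$ and is connected, and \cref{c:isomorphic} transports this to $S'_{\gamma^*}$. For rotational asymmetry I would argue directly: a non-identical rotation fixing $S'_{\gamma^*}$ permutes its points by a cyclic shift $i\mapsto i+m$ (indices mod $n$), and matching the rotation angle across all indices forces the difference $\gamma^*_{i+m}-\gamma^*_i$ to be a constant independent of $i$; summing this constant around the cycle shows it must be $0$, whence $\gamma^*_{i+m}=\gamma^*_i$ for all $i$, contradicting the injectivity of $\gamma^*$. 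Finally, because $\gamma^*$ lies in no $X_i$, no robot ever moves from $S'_{\gamma^*}$ under $A$ whatever the scheduler does, so the $n$ distinct positions persist forever and Gathering is never achieved — contradicting our assumption. Since $n$ was arbitrarily large, the stated impossibility holds.

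I expect the main obstacle to be not a single deep step but the careful bookkeeping needed to confirm that the frozen configuration $S'_{\gamma^*}$ really meets the strong hypotheses — in particular extracting rotational asymmetry from the bare ``pairwise distinct coordinates'' guarantee of \cref{l:derandomize}, which requires the short self-contained argument sketched above — together with pinning down the exact parameter ($m=n$) that lets the ``at most $n-1$ movers per bundle'' bound feed into \cref{l:derandomize}.
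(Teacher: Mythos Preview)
Your proposal is correct and follows essentially the same approach as the paper: fix a compatible $n$, use \cref{l:imposs1,l:imposs2} to bound each bundle by $n-1$ movers, apply \cref{l:derandomize} with $m=n$, and verify asymmetry from the distinct coordinates. Your asymmetry argument (cyclic-shift permutation, constant difference summing to zero) is a slight repackaging of the paper's version, which simply picks the specific pair $\gamma_1,\gamma_{n/k+1}$ and reads off their equality directly; both are equally short and valid.
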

\begin{proof}
Let $n$ be an arbitrarily large integer compatible with $\vartheta$, which exists due to \cref{p:size}. Note that all $\varepsilon_{\vartheta, n}$-perturbations of the regular set of size $S$ have a connected visibility graph, by \cref{c:isomorphic}. As before, we assume for a contradiction that $A$ is a Gathering algorithm, and we consider a swarm of size $n$ where all robots execute $A$, and each robot $r_i$ is initially located in the perturbed copy of point $p_i\in S$, for some $\varepsilon_{\vartheta, n}$-perturbation of $S$.

For $1\leq i\leq n$, let $X_i\subseteq [0,1]^n$ be the set of coefficients corresponding to perturbations where algorithm $A$ causes $r_i$ to make a non-null movement. As we already proved, due to \cref{l:imposs1,l:imposs2}, in each bundle of $\varepsilon_{\vartheta, n}$-perturbations of $p_i$, at most $n-1$ perturbations cause $r_i$ to move. Rephrased in geometric terms, every line in $\mathbb R^n$ parallel to the $i$th coordinate axis intersects $X_i$ in less than $n$ points.

So, the sets $X_1$, $X_2$, \dots, $X_n$ satisfy the hypotheses of \cref{l:derandomize} with $m=n$. As a consequence, there exists $\gamma=(\gamma_1,\gamma_2,\dots,\gamma_n)\in [0,1]^n$, where the coefficients $\gamma_i$ are all distinct, such that, in the perturbation corresponding to $\gamma$, algorithm $A$ causes all robots to stay still, and therefore does not allow them to gather.

It remains to check that the perturbation $S'$ corresponding to $\gamma$ is rotationally asymmetric. Let $p_1$, $p_2$, \dots, $p_n$ be the points of $S$ in clockwise order, and let $p'_i\in S'$ be the perturbed copy of $p_i$, for $1\leq i\leq n$. Suppose for a contradiction that $S'$ has a $k$-fold rotational symmetry with $k>1$, implying that the angular distance between $p'_1$ and $p'_{n/k+1}$ is $\alpha=2\pi/k$. Note that $\alpha$ is also the angular distance between $p_1$ and $p_{n/k+1}$. Moreover, by definition of perturbation, $\alpha=cw(p_1,p_{n/k+1})-cw(p_1,p'_1)+cw(p_{n/k+1},p'_{n/k+1})=2\pi/k-\gamma_1\cdot\varepsilon_{\vartheta, n}+\gamma_{n/k+1}\cdot\varepsilon_{\vartheta, n}$. It follows that $\gamma_1=\gamma_{n/k+1}$, which contradicts the fact that the coefficients $\gamma_i$ are all distinct (indeed, $k\geq 2$ implies that $1<n/k+1\leq n$).
\end{proof}

We remark that, throughout the proofs of \cref{l:imposs1,l:imposs2,t:impossible}, only swarms of the same size $n$ appear. It follows that our impossibility result holds even if all robots know the size of the swarm.

\section{Gathering algorithm for $\vartheta=\pi$}\label{sec:4}
\paragraph*{Overview}
In this section we give a Gathering algorithm for robots that can see the entire circle except their antipodal point (i.e., $\vartheta=\pi$), under the condition that the initial configuration is a rotationally asymmetric set with no multiplicity points.

First we will describe a simple Gathering algorithm for robots with full visibility, which already provides some useful ideas: elect a leader, form a unique multiplicity point, and gather there. We will then extend the same ideas to the limited-visibility case with $\vartheta=\pi$. There are some difficulties arising from the fact that not all robots will necessarily agree on the same leader, because they may have different views of the rest of the swarm. For instance, two antipodal robots will not see each other, and, if the configuration is rotationally asymmetric, they will obtain two non-isometric snapshots, which may cause them to elect two different leaders.

We will show how to cope with these difficulties. Essentially, based on what a robot $r$ knows, there are only two possibilities on who the ``true'' leader may be, depending on whether there is a robot antipodal to $r$ or not. If $r$ happens to be elected leader in both scenarios, then $r$ has no doubt of being the leader, and therefore behaves as in the full-visibility algorithm, creating a multiplicity point. In most cases, however, no robot will be so fortunate, but the swarm will still have to make some sort of progress toward gathering. So, the robots that see themselves as possible leaders (but could be wrong) make some preparatory moves that will ideally ``strengthen their leadership'' in the next turns. We will argue that, after a finite number of turns, one robot will become aware of being the leader and will create a multiplicity point, even under a semi-synchronous scheduler.

The design and analysis of our Gathering algorithm are further complicated by some undesirable special cases, where two distinct multiplicity points end up being created, or the multiplicity point is antipodal to some robot, and therefore invisible to it.

\paragraph*{Summary}
This section is structured as follows. We begin by describing a simple algorithm for the full-visibility case, then proceed to the scenario where $\vartheta=\pi$. In this context, we distinguish between the \emph{cognizant leader}, i.e., a robot certain of its leadership, and the \emph{undecided leaders}, i.e., robots that perceive themselves as leaders but are unsure due to incomplete information. \cref{p:possleader} states that the true leader must belong to one of these two categories.

Next, we establish a technical result, \cref{p:middle}, which supports the proof of the so-called \emph{point-addition lemma} (\cref{l:addition}). This crucial lemma describes a relationship between configurations that differ by only one robot; it leads to \cref{c:cognizant}, which provides a non-trivial characterization of a cognizant leader.

These preliminary results pave the way for our Gathering algorithm, which consists of several \emph{numbered rules}; any active robot applies a rule based on the configuration it observes. The rest of the section is dedicated to proving the correctness of this algorithm.

\cref{l:algomove} establishes several constraints on which robots can execute certain rules and under what configurations. A notable implication of \cref{l:algomove} is that only an undecided or cognizant leader may move (\cref{c:algomove}). Furthermore, \cref{l:algotwo} shows that at most two robots can move simultaneously, one of which is the true leader of the swarm.

These insights greatly simplify the analysis of potential executions of our Gathering algorithm. To begin with, it is relatively straightforward to prove that, if a multiplicity point is ever formed, the Gathering problem is eventually solved (\cref{l:algomult}). For all other scenarios, \cref{l:algocollision} shows that certain undesirable configurations do not occur as a result of executing the algorithm. Applying this knowledge, we establish that if a robot executes rule~3 or rule~4.a, the Gathering problem is eventually solved (\cref{l:algorule3,l:algorule4a}).

The rest of the analysis focuses on the remaining cases, where the only rules that are ever executed are rule~4.b and rule~4.c. \cref{l:algonoanti} shows that under these rules, the configuration remains rotationally asymmetric without multiplicity points; \cref{l:algoonemove} addresses the case where only the leader robot moves. Finally, \cref{t:algorithm} puts together all previous findings and concludes the proof of correctness.

\paragraph*{Full visibility and leader election}
We will describe a simple Gathering algorithm for the scenario where robots have full visibility.

Let $S$ be a rotationally asymmetric finite set of points on a circle. Recall from \cref{p:leader} that all points of $S$ have distinct angle sequences, and therefore there is a unique point $p\in S$ with the lexicographically smallest angle sequence: $p$ is called the \emph{head} of $S$.

The Gathering algorithm uses the fact that all robots agree on where the head of the swarm is, and the robot located at the head is elected the leader. The algorithm makes the leader move clockwise to the next robot, while all other robots wait. As soon as there is a multiplicity point, the closest robot in the clockwise direction moves counterclockwise to the multiplicity point. The process continues until all robots have gathered.

Note that this algorithm also works in ASYNCH and with non-rigid robots (i.e., robots that can be stopped by an adversary before reaching their destination). Indeed, as the leader moves toward the next robot, its angle sequence remains the lexicographically smallest, and so it remains the leader. After a multiplicity point has been created, only one robot is allowed to move at any time, and therefore no other multiplicity points are accidentally formed.

Before delving into our main algorithm, we need definitions and lemmas that will justify some of our design choices.

\paragraph*{Undecided leaders and cognizant leader}
Let us now consider a swarm of robots with visibility range $\vartheta=\pi$ forming a rotationally asymmetric set $S$ of $n$ distinct points. We say that the \emph{true leader} of the swarm is the robot located at the head of $S$.

For each robot $r$, we define the \emph{visible configuration} $V(r)$ as the set of robots that are visible to $r$, and the \emph{ghost configuration} as $G(r)=V(r)\cup\{r'\}$, with $r'$ antipodal to $r$. Note that exactly one between $V(r)$ and $G(r)$ is isometric to the ``real'' configuration $S$, and therefore at least one between $V(r)$ and $G(r)$ is rotationally asymmetric.

The \emph{visible head} $v(r)$ is defined as follows: if $V(r)$ is rotationally asymmetric, $v(r)$ is the head of $V(r)$; otherwise, $v(r)$ is the head of $G(r)$. The \emph{ghost head} $g(r)$ is defined similarly: if $G(r)$ is rotationally asymmetric, $g(r)$ is the head of $G(r)$; otherwise, $g(r)$ is the head of $V(r)$. Note that the true leader of the swarm must be either $v(r)$ or $g(r)$.

If $v(r)\neq g(r)$, and either $r=v(r)$ or $r=g(r)$, then $r$ is said to be an \emph{undecided leader}: a robot that is possibly the true leader, but does not know for sure. If $r=v(r)=g(v)$, then $r$ is a \emph{cognizant leader}: a robot that is aware of being the true leader of the swarm.

\begin{proposition}\label{p:possleader}
In a rotationally asymmetric swarm with no multiplicity points, the true leader is either an undecided or a cognizant leader, and no robot other than the true leader can be a cognizant leader.
\end{proposition}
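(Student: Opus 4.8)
The plan is to derive both halves of the statement from the two observations recorded just before it: for every robot $r$, exactly one of $V(r)$ and $G(r)$ is the real configuration $S$ (up to the isometry induced by $r$'s local coordinate system), and as a consequence the true leader $\ell$ always belongs to the pair $\{v(r), g(r)\}$. I would begin by making the second observation precise. Whichever of $V(r)$ and $G(r)$ is isometric to $S$ is rotationally asymmetric, since $S$ is; hence by Proposition~\ref{p:leader} its head is well-defined, and — because the notion of head is invariant under (orientation-preserving) isometries of the circle — that head corresponds to the head of $S$, i.e.\ to the true leader $\ell$. By the definitions of visible head and ghost head, this yields $v(r) = \ell$ if the isometric copy is $V(r)$, and $g(r) = \ell$ if it is $G(r)$; in either case $\ell \in \{v(r), g(r)\}$.

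For the first half I would specialize this to $r = \ell$, obtaining $\ell = v(\ell)$ or $\ell = g(\ell)$, and then distinguish two cases. If $v(\ell) = g(\ell)$, then $\ell = v(\ell) = g(\ell)$, which is exactly the definition of a cognizant leader. If $v(\ell) \neq g(\ell)$, then $\ell$ satisfies both $v(\ell) \neq g(\ell)$ and $\ell \in \{v(\ell), g(\ell)\}$, which is the definition of an undecided leader. So $\ell$ is always one of the two.

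For the second half, suppose $r$ is a cognizant leader, i.e.\ $r = v(r) = g(r)$, so that $\{v(r), g(r)\} = \{r\}$. Since $\ell \in \{v(r), g(r)\}$, this forces $\ell = r$. Hence no robot other than the true leader can be cognizant.

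I do not expect a genuine obstacle: the whole argument is just an unwinding of the definitions of head, visible head, and ghost head. The only point that merits care is the opening step — confirming that the true leader always lies in $\{v(r), g(r)\}$ — which hinges on the fact that ``being the head'' is an isometry-invariant property of a finite asymmetric point set on the circle (with chirality, only orientation-preserving isometries occur), so that whichever of $V(r)$, $G(r)$ reproduces $S$ automatically reproduces its head. Once that is in hand, the two case analyses are immediate.
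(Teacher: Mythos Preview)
Your proof is correct and follows essentially the same approach as the paper's: both arguments rest on the observation that whichever of $V(r)$, $G(r)$ coincides with the real configuration $S$ is rotationally asymmetric and has head $\ell$, forcing $\ell\in\{v(r),g(r)\}$. The paper's proof is marginally more direct (it specializes to $r=h$ immediately for the first half and argues the second half by noting that a cognizant $r$ is the head of whichever of $V(r),G(r)$ equals $S$), whereas you route both halves through the single preliminary fact $\ell\in\{v(r),g(r)\}$; but this fact is already stated in the text just before the proposition, so the difference is purely cosmetic.
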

\begin{proof}
Let the swarm form configuration $S$, and let $h$ be the head of $S$, i.e., the true leader. Either $V(h)=S$ or $G(h)=S$, which means that either $v(h)=h$ or $g(h)=h$: therefore, $h$ is an undecided or a cognizant leader.

Conversely, if a robot $r$ is a cognizant leader, then $v(r)=g(r)=r$, implying that $r$ is the head of both $V(r)$ and $G(r)$. But either $V(r)=S$ or $G(r)=S$, and hence $r$ must be the head of $S$, i.e., the true leader.
\end{proof}

\paragraph*{Point-addition lemma}
We will now present some technical results that have important implications on the design of our Gathering algorithm. Although these results logically belong before the description of the algorithm itself, they may be skipped on a first reading, as they may be better appreciated after becoming familiar with the algorithm.

\begin{proposition}\label{p:middle}
Let $S$ be a rotationally asymmetric finite set of points on a circle $C$, let $a,b\in S$, and let $c\in C$, such that $0< cw(a,b)=cw(b,c)<\pi$. If the angle sequence of $a$ truncated at $b$ is equal to the angle sequence of $b$ truncated at $c$, then $b$ is not the head of $S$.
\end{proposition}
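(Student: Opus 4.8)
The plan is to argue by contradiction: assume $b$ is the head and deduce that $S$ must be rotationally symmetric, contradicting the hypothesis. First I would fix coordinates along the circle by letting $p_1=a,p_2,\dots,p_n$ be the points of $S$ in clockwise order, and writing $w_i=cw(p_i,p_{i+1})$ with indices taken modulo $n$, so that the angle sequence of $a$ is $W_a=(w_1,\dots,w_n)$. Since $cw(a,b)>0$ we have $b=p_{L+1}$ for some $L\ge 1$, the angle sequence of $b$ is the cyclic shift $W_b=(w_{L+1},\dots,w_n,w_1,\dots,w_L)$, and $cw(a,b)=w_1+\dots+w_L=:t$ with $0<t<\pi$. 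Throughout, "$b$ is the head'' will be used only through the fact that $W_b\preceq W_p$ for every $p\in S$.

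The first, and most delicate, step is to decode the hypothesis. The angle sequence of $a$ truncated at $b=p_{L+1}$ is exactly $(w_1,\dots,w_L)$, of length $L$; hence the angle sequence of $b$ truncated at $c$ must also have length $L$, which by the definition of truncation forces $c$ to lie on the half-open clockwise arc from $p_{2L}$ to $p_{2L+1}$. Since the arcs $[a,b)$ and $[b,c)$ are disjoint, of equal length $t<\pi$, and each carries exactly $L$ points of $S$, this also yields $2L\le n$. Matching the two length-$L$ truncated sequences entry by entry then gives $w_i=w_{L+i}$ for $i=1,\dots,L-1$ together with $cw(p_{2L},c)=w_L$; combined with $cw(p_{2L},c)\le cw(p_{2L},p_{2L+1})=w_{2L}$, we obtain $w_L\le w_{2L}$. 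I expect this bookkeeping with the truncation definition and the index arithmetic modulo $n$ to be the main obstacle; everything afterwards is a clean lexicographic argument.

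Now assume $b$ is the head, so in particular $W_b\preceq W_a$. Since $W_a$ and $W_b$ agree on their first $L-1$ entries (from the previous step) and have $w_L$ and $w_{2L}$ respectively at entry $L$, minimality of $W_b$ forces $w_{2L}\le w_L$, hence $w_L=w_{2L}$. Consequently $c=p_{2L+1}$, so $c\in S$; moreover $c\ne a$, since otherwise $cw(b,c)=2\pi-t=t$ would give $t=\pi$. Hence $2L+1\le n$, and we now have $w_i=w_{i+L}$ for all $i=1,\dots,L$. It remains to propagate this periodicity around the whole circle. If $w_i=w_{i+L}$ holds (modulo $n$) for every $i$, then $cw(p_i,p_{i+L})=w_i+\dots+w_{i+L-1}$ is independent of $i$ (shifting $i$ changes it by $w_{i+L}-w_i=0$), hence equals $t$; so the clockwise rotation by $t$ sends each $p_i$ to $p_{i+L}$ and therefore maps $S$ onto itself, and since $0<t<2\pi$ this is a non-trivial rotational symmetry, contradicting the asymmetry of $S$. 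Otherwise, let $r$ be the smallest index with $w_r\ne w_{r+L}$; by the step above, $r\ge L+1$. Comparing $W_b$ with $W_a$: by minimality of $r$ they agree on their first $r-1$ entries, and at entry $r$ they read $w_{r+L}$ and $w_r$ respectively, so $W_b\preceq W_a$ forces $w_{r+L}<w_r$. Comparing $W_b$ with $W_{c}$ (legitimate, as $c\in S$): reading off entries cyclically, they agree on their first $r-L-1$ entries and at entry $r-L$ read $w_r$ and $w_{r+L}$ respectively, so $W_b\preceq W_{c}$ forces $w_r<w_{r+L}$ — a contradiction. In either case the assumption that $b$ is the head is untenable, which is exactly the claim.
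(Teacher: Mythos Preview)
Your proof is correct and follows the same overall strategy as the paper's: assume $b$ is the head, show $c\in S$, and then derive contradictory inequalities by comparing $W_b$ with both $W_a$ and $W_c$. The paper's execution is more compact: setting $X=W(a,b)=W(b,c)$ and $Y=W(c,a)$, it writes $W(a)=XXY$, $W(b)=XYX$, $W(c)=YXX$; then $W(b)\prec W(a)$ yields $YX\prec XY$ while $W(b)\prec W(c)$ yields $XY\prec YX$, an immediate contradiction. Your explicit index argument with the minimal $r$ where the period-$L$ pattern breaks is essentially this same comparison unrolled (your $r$ locates the first position where $XY$ and $YX$ differ, shifted appropriately), and your separate ``fully periodic'' case corresponds to $XY=YX$, which in the paper's framework would force $W(a)=W(b)$ and is thus handled implicitly by asymmetry. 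The string-decomposition viewpoint buys brevity and sidesteps the mod-$n$ bookkeeping; your version is more explicit and makes the rotational-symmetry contradiction in the degenerate case visible rather than implicit.
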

\begin{proof}
If $x\in S$ and $y\in C$, we denote by $W(x)$ the angle sequence of $x$, and by $W(x,y)$ the angle sequence of $x$ truncated at $y$. Suppose for a contradiction that $b$ is the head of $S$, and hence $W(b)\prec W(a)$. Observe that $c\in S$, otherwise the hypotheses that $W(a,b)=W(b,c)$ and $b\in S$ would imply that $W(a)\prec W(b)$. As a consequence, the angle sequence $W(c)$ is well defined, and so is $W(c,y)$ for $y\in C$.

Note that the conditions on $a$, $b$, and $c$ imply that they are three distinct points, and that the clockwise arcs $ab$, $bc$, and $ca$ are internally disjoint (refer to \cref{f:p8l4} (left)). So, if we let $X=W(a,b)=W(b,c)$ and $Y=W(c,a)$, we obtain $W(a)=XXY$, $W(b)=XYX$, and $W(c)=YXX$. Since $b$ is the head of $S$, we have $XYX=W(b)\prec W(a)=XXY$ and $XYX=W(b)\prec W(c)=YXX$. From the first inequality we derive $YX\prec XY$, and from the second we derive $XY\prec YX$, a contradiction.
\end{proof}

\begin{figure}
\begin{center}
\includegraphics[scale=1]{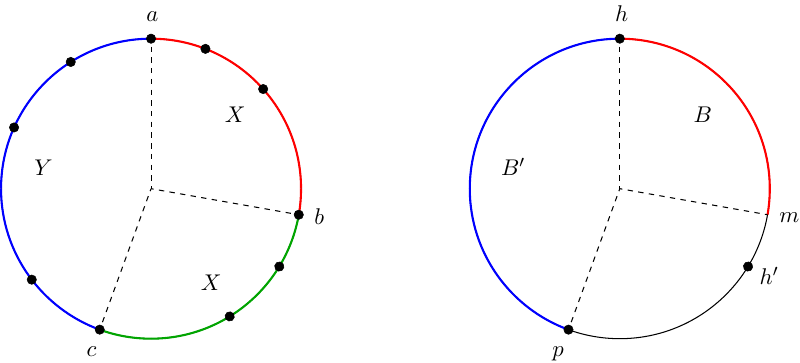}
\end{center}
\caption{Illustrations of \cref{p:middle} (left) and \cref{l:addition} (right).}
\label{f:p8l4}
\end{figure}

\begin{lemma}\label{l:addition}
Let $S$ be a finite non-empty set of points on a circle $C$, and let $S'=S\cup \{p\}$, where $p\in C\setminus S$. Assume that $S$ and $S'$ are rotationally asymmetric, and let $h\in S$ be the head of $S$ and $h'\in S'$ be the head of $S'$. Then, either $h=h'$ or $cw(h, p)>2\cdot cw(h',p)$.
\end{lemma}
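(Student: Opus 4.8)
I would split into two cases. If $h'=p$, then since $h\in S$ and $p\notin S$ we have $h\neq p$, so $cw(h,p)>0=cw(p,p)=cw(h',p)$ and the claim holds. From now on assume $h'\neq p$, so that $h$, $h'$, $p$ are pairwise distinct (recall also $h\neq p$). My plan is to first show that the clockwise cyclic order of $h,h',p$ is exactly $h,h',p$; this gives $cw(h,p)=cw(h,h')+cw(h',p)$ and reduces the claim to the single inequality $cw(h,h')>cw(h',p)$.

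To pin down the cyclic order, let $p^-$ and $p^+$ be the two points of $S$ adjacent to $p$ (so $p$ lies strictly inside the clockwise arc from $p^-$ to $p^+$). Passing from $S$ to $S'$ replaces, in each angle sequence, the single entry corresponding to the arc $(p^-,p^+)$ of $S$ by the two entries $cw(p^-,p),cw(p,p^+)$; call the index where this happens the \emph{split index} of that sequence. I would first check that $p^-\neq h$: otherwise $W_{S'}(h)$ would begin with $cw(h,p)<cw(p^-,p^+)$, and since $h$ is the head of $S$ this last quantity is the smallest arc of $S$; the head $h'$ of $S'$ would then have to begin with an even smaller arc, forcing $h'=p^-=h$ or $h'=p$, contrary to our assumptions. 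Now suppose for contradiction that the clockwise order is $h,p,h'$. Then $p^-$ is strictly between $h$ and $h'$, so the split index $m$ of $W_{S'}(h)$ satisfies $2\le m$ and is strictly smaller than the split index of $W_{S'}(h')$. Since $h$ is the head of $S$ but $h'\neq h$, we have $W_S(h)\prec W_S(h')$; let $t$ be the first index where they differ. If $t<m$, neither sequence has been split by index $t$, so $W_{S'}(h)\prec W_{S'}(h')$. If $t\ge m$, then $W_{S'}(h)$ and $W_{S'}(h')$ still agree on indices $1,\dots,m-1$, while at index $m$ we have $W_{S'}(h)_m=cw(p^-,p)<cw(p^-,p^+)=W_S(h)_m\le W_S(h')_m=W_{S'}(h')_m$, so again $W_{S'}(h)\prec W_{S'}(h')$. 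Either way $h'$ is not the head of $S'$, a contradiction; hence the clockwise order is $h,h',p$.

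It remains to prove $cw(h,h')>cw(h',p)$. Assume instead $cw(h,h')\le cw(h',p)$ and let $r\in C$ be the point with $cw(h',r)=cw(h,h')$; then $r$ lies on the clockwise arc from $h'$ to $p$, and $0<cw(h,h')<\pi$ (from $cw(h,h')\le cw(h',p)$ and $cw(h,h')+cw(h',p)<2\pi$). Because $p$ lies on neither of the clockwise arcs $[h,h']$ and $[h',r]$, the angle sequence of $h$ truncated at $h'$ is the same computed in $S$ or in $S'$, and likewise for the angle sequence of $h'$ truncated at $r$. Applying \cref{p:angle} inside $S$ (using $W_S(h)\preceq W_S(h')$, $cw(h,h')=cw(h',r)$) gives that ``$W_S(h)$ truncated at $h'$'' $\preceq$ ``$W_S(h')$ truncated at $r$''; applying \cref{p:angle} inside $S'$ (using $W_{S'}(h')\preceq W_{S'}(h)$, $cw(h',r)=cw(h,h')$) gives the reverse inequality for the corresponding $S'$-truncations. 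Since these coincide with their $S$-counterparts, we conclude that the angle sequence of $h$ in $S'$ truncated at $h'$ equals the angle sequence of $h'$ in $S'$ truncated at $r$. But then \cref{p:middle}, applied to $S'$ with $a=h$, $b=h'$, $c=r$ (whose hypotheses we have just verified), asserts that $h'$ is not the head of $S'$ --- a contradiction. Hence $cw(h,h')>cw(h',p)$, and the lemma follows.

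I expect the most delicate points to be the bookkeeping around split indices in the middle step (in particular the special case $p^-=h$, which I dispatch using minimality of the head's first arc), and the boundary case $cw(h,h')=cw(h',p)$, i.e.\ $r=p$, in the final step, where one must check that ``truncating at $p$'' yields the same sequence whether $p$ is regarded as a point of $S'$ or as an arbitrary point of the circle --- this is what allows the two applications of \cref{p:angle} (one in $S$, one in $S'$) to be chained into an equality.
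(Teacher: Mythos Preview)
Your proof is correct and follows the same two-case skeleton as the paper: first rule out that $h'$ lies on the clockwise arc from $p$ to $h$ (your ``cyclic order is $h,h',p$'' step), then rule out that $h'$ lies in the first half of the clockwise arc from $h$ to $p$ (your ``$cw(h,h')>cw(h',p)$'' step). The second step is handled identically in both proofs, via \cref{p:middle} after chaining two applications of \cref{p:angle} (one in $S$, one in $S'$).

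The genuine difference is in the first step. The paper dispatches it with another truncation argument: it picks $p'$ with $cw(h,p)=cw(h',p')$, observes that $p$ does not lie on the clockwise arc $h'p'$, and then uses \cref{p:angle} in $S$ and in $S'$ to force $W(h,p)=W(h',p')$, from which $p'\in S'\setminus S=\{p\}$ yields a contradiction. You instead argue combinatorially, tracking the ``split index'' at which the arc $cw(p^-,p^+)$ gets subdivided when passing from $S$ to $S'$, and comparing the sequences entry by entry. Your route is more elementary---it avoids a second use of the truncation machinery---but requires explicit bookkeeping, including the separate disposal of $p^-=h$ via the minimality of the head's first arc. The paper's route is more uniform (both cases are instances of the same truncate-and-compare trick) and slightly shorter. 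Either way the structure and the key tool (\cref{p:middle}) are the same.

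Your flagged boundary case $r=p$ is indeed fine: by the paper's definition of truncation, the sequence of $h'$ truncated at $p$ is $(\dots,cw(p^-,p))$ whether computed in $S$ or in $S'$, so the two applications of \cref{p:angle} chain as claimed.
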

\begin{proof}
Let $m\in C$ be the midpoint of the clockwise arc $hp$, i.e., $cw(h,p)=2\cdot cw(h,m)$. Let $B=\{x\in C \mid 0<cw(h,x)\leq cw(h,m)\}$ and $B'=\{x\in C \mid 0<cw(p,x)< cw(p,h)\}$. We have to prove that $h'\notin B\cup B'$, as in \cref{f:p8l4} (right). For the sake of contradiction, assume the contrary. Then, since $h,p\notin B\cup B'$, we have $h\neq h'\neq p$, as well as $p\neq h$, because $p\notin S$. It follows that $h'\in S$. Let $W(x)$, with $x\in S$, be the angle sequence of $x$ with respect to $S$, and let $W'(x)$, with $x\in S'$, be the angle sequence of $x$ with respect to $S'$. If $y\in C$, we define $W(x,y)$ (respectively, $W'(x,y)$) as the angle sequence $W(x)$ (respectively, $W'(x)$) truncated at $y$.

Suppose first that $h'\in B$, and note that $0< cw(h,h')\leq cw(h,m)=cw(h,p)/2<\pi$. Let $h''\in C$ be such that $cw(h,h')=cw(h',h'')$, and observe that $h''$ lies on the clockwise arc $h'p$ (possibly, $h''=p$). Since $W(h)\prec W(h')$, by \cref{p:angle} we have $W(h,h')\preceq W(h',h'')$; similarly, since $W'(h')\prec W'(h)$, we have $W(h',h'')=W'(h',h'')\preceq W'(h,h')=W(h,h')$. Thus, $W'(h,h')=W'(h',h'')$. Observe that the hypotheses of \cref{p:middle} are satisfied by the set $S'$, with $a=h$, $b=h'$, and $c=h''$. We conclude that $h'$ is not the head of $S'$, a contradiction.

Suppose now that $h'\in B'$. Let $p'\in C$ be such that $cw(h,p)=cw(h',p')$, and observe that $p$ does not lie on the clockwise arc $h'p'$ (in particular, $p\neq p'$). Since $W(h)\prec W(h')$, by \cref{p:angle} we have $W(h,p)\preceq W(h',p')$; similarly, since $W'(h')\prec W'(h)$, we have $W(h',p')=W'(h',p')\preceq W'(h,p)=W(h,p)$. Thus, $W(h,p)=W(h',p')=W'(h,p)=W'(h',p')$. Since $p\notin S$, we must also have $p'\notin S$, otherwise we would have $W(h')\prec W(h)$. Similarly, since $p\in S'$, we must also have $p'\in S'$, otherwise we would have $W'(h)\prec W'(h')$. So, $p'\in S'\setminus S$, implying that $p'=p$, a contradiction.
\end{proof}
\begin{corollary}\label{c:cognizant}
In a rotationally asymmetric swarm with no multiplicity points, a robot $r$ is a cognizant leader if and only if $r=g(r)$.
\end{corollary}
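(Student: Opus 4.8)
The plan is to observe that the ``only if'' direction is trivial and to reduce the ``if'' direction to \cref{l:addition} applied to the nested pair $V(r)\subseteq G(r)$. If $r$ is a cognizant leader, then $r=v(r)=g(r)$ by definition, so in particular $r=g(r)$. For the converse, I would assume $r=g(r)$ and prove $r=v(r)$, which gives $r=v(r)=g(r)$ and hence that $r$ is cognizant. The structural facts I would use are: $G(r)=V(r)\cup\{p\}$, where $p$ is the point antipodal to $r$; $p\notin V(r)$, because every robot visible to $r$ is at angular distance strictly less than $\pi$ from $r$ while $cw(r,p)=\pi$; at least one of $V(r)$ and $G(r)$ is rotationally asymmetric; and the absence of multiplicity points guarantees that $V(r)$ and $G(r)$ are genuine sets of points, so that their heads are well defined.

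First I would dispose of the cases where one of $V(r)$, $G(r)$ is rotationally symmetric. If $V(r)$ is rotationally symmetric, then $G(r)$ is rotationally asymmetric, and by the definitions of $v(r)$ and $g(r)$ both are equal to the head of $G(r)$; hence $v(r)=g(r)=r$, so $r$ is cognizant. The mirror argument disposes of the case where $G(r)$ is rotationally symmetric. This leaves the case where both $V(r)$ and $G(r)$ are rotationally asymmetric, which is the only one that uses \cref{l:addition}: here $v(r)$ is the head of $V(r)$ and $g(r)=r$ is the head of $G(r)$. I would apply \cref{l:addition} with $S:=V(r)$ and $S':=G(r)=V(r)\cup\{p\}$, whose heads are $h=v(r)$ and $h'=r$. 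The lemma yields either $v(r)=r$, and we are done, or $cw(v(r),p)>2\cdot cw(r,p)=2\pi$, which is impossible since the clockwise angle determined by two points of a circle is at most $2\pi$. Therefore $v(r)=r=g(r)$, and $r$ is cognizant.

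The proof is short once \cref{l:addition} is available, so the main obstacle is really just the bookkeeping of the case split: verifying that $v(r)$ and $g(r)$ genuinely coincide in the two symmetric subcases, and that the hypotheses of \cref{l:addition} (in particular $p\notin V(r)$, and that $V(r)$ is non-empty) hold in the asymmetric subcase. The reason the argument closes is that the factor $2$ in \cref{l:addition} is exactly what is needed to make the alternative conclusion vacuous, using that $p$ is antipodal to $r=g(r)$ and hence $cw(r,p)=\pi$.
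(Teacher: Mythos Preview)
Your proof is correct and follows essentially the same approach as the paper: the trivial ``only if'' direction, the observation that $v(r)=g(r)$ by definition whenever one of $V(r)$, $G(r)$ is rotationally symmetric, and in the remaining case an application of \cref{l:addition} to $S=V(r)$ and $S'=G(r)$, ruling out the second alternative because $cw(r,p)=\pi$ forces $cw(v(r),p)>2\pi$. Your write-up is in fact slightly more careful than the paper's in explicitly noting that $p\notin V(r)$ and that $V(r)$ is non-empty.
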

\begin{proof}
If $r$ is a cognizant leader, then obviously $r=g(r)$. For the converse, assume that $r=g(r)$. If $V(r)$ or $G(r)$ is rotationally symmetric, then $v(r)=g(r)$ by definition, and therefore $r$ is a cognizant leader. If $V(r)$ and $G(r)$ are rotationally asymmetric, we can apply \cref{l:addition} with $S=V(r)$ and $S'=G(r)=V(r)\cup\{r'\}$ (and hence $h=v(r)$ and $h'=g(r)$), obtaining either $v(r)=g(r)$ or $cw(v(r), r')>2\cdot cw(g(r),r')$. However, since $r=g(r)$, the second condition is impossible, because it implies that $cw(v(r), r')>2\cdot cw(r,r')=2\pi$. It follows that $v(r)=g(r)=r$, and so $r$ is a cognizant leader.
\end{proof}

\paragraph*{Gathering algorithm}

\begin{lstlisting}[caption={Gathering algorithm for $\vartheta=\pi$\label{l:algorithm}},label=list:8-6,captionpos=t,float,abovecaptionskip=-\medskipamount,mathescape=true,backgroundcolor = \color{Azure1}, basicstyle=\small]
The algorithm is executed by a generic robot $r$.
Input: $V(r)$, the set of points occupied by robots visible to $r$
       (expressed in $r$'s coordinate system), with weak multiplicity.
Output: a destination point for $r$.

Let $s\in V(r)$ be such that $cw(r,s)>0$ is minimum, if it exists
($s$ is the visible robot closest to $r$ in the clockwise direction).

Let $V'(r)$ be the set of all the points in $V(r)$ (without multiplicity)
plus their antipodal points.

Let $\delta$ be the smallest $cw(a,b)>0$ with $a,b\in V'(r)$.

 $\bf{1.}$ If $r$ sees some multiplicity points, then:
   $\bf{1.a.}$ If $r$ sees a unique multiplicity point, then $r$ moves to it.
   $\bf{1.b.}$ If $r$ sees two multiplicity points $a$ and $b$ with $cw(a,b)>cw(b,a)$,
       then $r$ moves to $a$.
 $\bf{2.}$ Else, if $r$ sees no other robots, then $r$ moves clockwise by $\pi/2$.
 $\bf{3.}$ Else, if $r=g(r)$, then $r$ moves to $s$.
 $\bf{4.}$ Else, if $r=v(r)$, then:
   $\bf{4.a.}$ If $g(r)$ is antipodal to $r$, and $s$ has an antipodal robot,
       then $r$ moves clockwise by $cw(r,s)+\delta/3$.
   $\bf{4.b.}$ Else, if there is a point $m\in V'(r)$ such that $cw(r,m)=cw(r,s)/2$,
       then $r$ moves clockwise by $cw(r,s)/2+\delta/7$.
   $\bf{4.c.}$ Else, $r$ moves clockwise by $cw(r,s)/2$.
 $\bf{5.}$ Else, $r$ does not move.
\end{lstlisting}
Our Gathering algorithm for $\vartheta=\pi$ is illustrated in \cref{l:algorithm,f:rules3-4c,f:rules4b-4a}. A robot $r$ executing the algorithm first checks if the current configuration falls under some special cases (which will be discussed later), and then it attempts to determine the true leader of the swarm. By \cref{c:cognizant}, checking if $r=g(r)$ is equivalent to checking if $r$ is a cognizant leader. In this case, by \cref{p:possleader}, $r$ is the true leader, and hence it behaves like in the full-visibility algorithm: it moves clockwise to the next robot, $s$ (rule~3).

If $r$ is not a cognizant leader, it checks if it is at least an undecided leader: $r=v(r)$. In this case, $r$ cannot commit itself to moving to $s$, because several robots may be undecided leaders, and this would create more than one multiplicity point. Instead, $r$ attempts to ``strengthen its leadership'' by moving halfway toward $s$ (rule~4.c): this ensures that, in the next turn, $r$ will have a lexicographically smaller angle sequence than it currently has (unless, of course, $s$ moves as well).

\begin{figure}
\begin{center}
\includegraphics[scale=1]{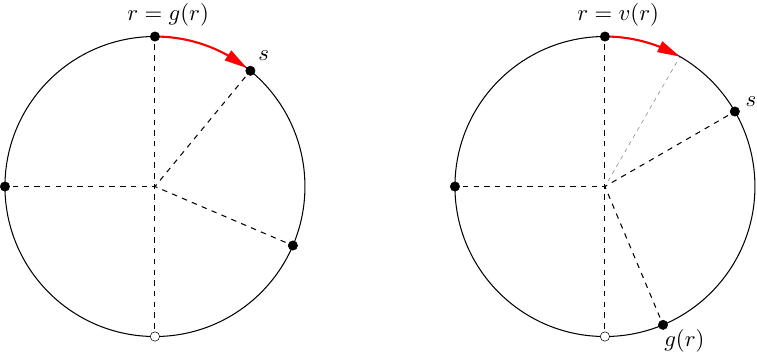}
\end{center}
\caption{Examples of rule~3 (left) and rule~4.c (right) of the Gathering algorithm in \cref{l:algorithm}. Black dots indicate robots that are visible to $r$. A white dot indicates the point antipodal to $r$, which may or may not be occupied by a robot.}
\label{f:rules3-4c}
\end{figure}

Another goal of $r$ is to be able to see the entire swarm in the next turns. Therefore, if the midpoint of $r$ and $s$ happens to be antipodal to some robot $q$, then $r$ moves a bit further past the midpoint (rule~4.b). This way, $r$ will be sure to see $q$ in the next turn (unless, of course, $q$ moves as well).

An exception to the above is when $g(r)$ is antipodal to $r$, and $s$ has an antipodal robot $s'$. In this situation, if $r$ had an antipodal robot $r'$, then $r'$ would be the true leader, which would then either form a multiplicity point with $s'$ (if $r'$ is activated) or would become visible to $r$ (if $r$ is activated but not $r'$). However, if $r'$ does not exist, then $r$ is the true leader, but $r$ may never find out: it may keep approaching $s$ without ever reaching it, and there may always be a ghost head antipodal to $r$. For this reason, if $r$ detects this configuration, it moves slightly past $s$ (rule~4.a): this way, $s$ will be the new leader, and it will not be in the same undesirable configuration, because $r$ will not have an antipodal robot.

\begin{figure}
\begin{center}
\includegraphics[scale=1]{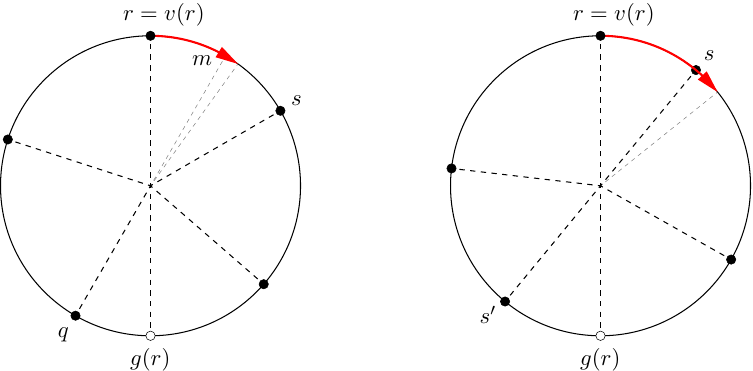}
\end{center}
\caption{Examples of rule~4.b (left) and rule~4.a (right) of the Gathering algorithm in \cref{l:algorithm}}
\label{f:rules4b-4a}
\end{figure}

Note that, when describing rule~4.a and rule~4.b, we mentioned some undefined ``small'' distances. According to \cref{l:algorithm}, these are respectively $\delta/3$ and $\delta/7$. In turn, $\delta$ is defined as the smallest angular distance between two points in $V'(r)$, where $V'(r)$ is the set of all the points in $V(r)$ plus their antipodal points. It is easy to see that all robots that are activated at the same time compute isometric sets $V'$, and therefore they implicitly agree on the value of $\delta$. The reason why the specific values $\delta/3$ and $\delta/7$ have been chosen will become apparent in the proof of correctness of the algorithm.

Finally, let us discuss the special cases, all of which arise when some multiplicity points have been created (due to a cognizant leader moving to some other robot). If only one multiplicity point is visible to $r$, then $r$ simply moves to it, as in the full-visibility algorithm (rule~1.a). In some exceptional circumstances, two multiplicity points $a$ and $b$ may be created, but we will prove that $a$ and $b$ will not be antipodal to each other, and there will never be a third multiplicity point. In this case, there is an implicit order between $a$ and $b$ on which all robots agree, and so they will all move to the same multiplicity point, say $a$ (rule~1.b). The last special case is when all robots have gathered in a point, except a single robot $r$ located in the antipodal point. $r$ detects this situation because it sees no robots other than itself (and its current location is not a multiplicity point). So, $r$ just moves to another visible point, say, the one forming a clockwise angle of $\pi/2$ with $r$ (rule~2). This ensures that $r$ will see the multiplicity point on its next turn.

\paragraph*{Correctness}
In the following, we will assume that all robots in a swarm of size $n>1$ execute the algorithm in \cref{l:algorithm} starting from a rotationally asymmetric initial configuration with no multiplicity points. We will prove that, no matter how the adversarial semi-synchronous scheduler activates them, all robots will eventually gather in a point and no longer move.

We say that, in a given configuration $S$, a robot $r$ \emph{is able to apply rule~$j$} if, assuming that $r$ is activated when the swarm forms $S$, $r$ executes rule~$j$ (and no other rule).

\begin{lemma}\label{l:algomove}
Assume that the swarm forms a rotationally asymmetric configuration with no multiplicity points, and let $\ell$ be the true leader. Then:
\begin{enumerate}
\item No robot is able to apply rule~1 or rule~2.
\item At most one robot is able to apply rule~3: the true leader $\ell$.
\item At most one robot is able to apply rule~4.a: either the true leader $\ell$ (if there is no robot antipodal to $\ell$) or the robot antipodal to $\ell$.
\item A robot $r\neq \ell$ is able to apply rule~4 only if $\pi/2 < cw(r,\ell) \leq \pi$, and only if there is a robot antipodal to $r$.
\end{enumerate}
\end{lemma}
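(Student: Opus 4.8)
The four items are largely independent, and the first three reduce quickly to the definitions combined with \cref{p:possleader} and \cref{c:cognizant}; the fourth is the substantive one and rests on the point-addition lemma (\cref{l:addition}). I would dispose of item~1 first: rule~1 requires $r$ to see a multiplicity point, which is impossible since the $n$ robots occupy $n$ distinct points; rule~2 requires $r$ to see no other robot, which (with $\vartheta=\pi$) forces all other robots onto the single point antipodal to $r$ — impossible, because for $n\geq 3$ that point would be a multiplicity point, and for $n=2$ the configuration would be two antipodal points, hence rotationally symmetric. So every activated robot falls through to rule~3 or beyond. Item~2 is then immediate: rule~3 fires exactly when $r=g(r)$, which by \cref{c:cognizant} means $r$ is a cognizant leader, and by \cref{p:possleader} the only such robot is the true leader $\ell$.

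For item~3, suppose $r$ applies rule~4.a. Then $r=v(r)\neq g(r)$, and $g(r)$ is the point $r'$ antipodal to $r$. Recalling that the true leader always coincides with $v(r)$ or $g(r)$, either $\ell=r$ or $\ell=r'$. If $\ell=r$, there can be no robot at $r'$, for otherwise $G(r)$ would be the real (asymmetric) configuration $S$ and $g(r)$ its head, i.e. $g(r)=r$, contradicting $g(r)\neq r$; this is the ``no robot antipodal to $\ell$'' alternative. If $\ell=r'$, the true leader occupies the point antipodal to $r$, so $r$ is the robot antipodal to $\ell$. A symmetric check — using that when $\ell$ has an antipodal robot one gets $G(\ell)=S$, hence $g(\ell)=\ell$, so $\ell$ would execute rule~3 rather than rule~4 — shows the two alternatives never coexist and each admits a single qualifying robot, which gives the ``at most one'' statement.

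Item~4 is where the real work is. Let $r\neq\ell$ apply rule~4, so $r=v(r)\neq g(r)$. If $r$ had no antipodal robot, $V(r)$ would equal the asymmetric real configuration $S$ and $v(r)$ would be its head $=\ell$, contradicting $r\neq\ell$; hence $r$ has an antipodal robot $r'$, so $G(r)=S$ and $g(r)=\ell$. Moreover $V(r)=S\setminus\{r'\}$ must itself be rotationally asymmetric, for otherwise the visible head would be taken in $G(r)=S$ and again $v(r)=\ell$; thus $r$ is the head of $V(r)$ and $\ell$ is the head of $V(r)\cup\{r'\}$. Now I would apply \cref{l:addition} with the small set $V(r)$, added point $r'$, $h=r$, $h'=\ell$: since $r\neq\ell$, it yields $cw(r,r')>2\cdot cw(\ell,r')$, i.e. $cw(\ell,r')<\pi/2$ because $cw(r,r')=\pi$. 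To convert this into a bound on $cw(r,\ell)$, note that since $r'$ is antipodal to $r$ the clockwise tour $r\to\ell\to r'$ sweeps a total angle of either $\pi$ or $3\pi$, and $3\pi$ is impossible (it would force $cw(\ell,r')>\pi$); hence $cw(r,\ell)=\pi-cw(\ell,r')\in(\pi/2,\pi]$, and we have already shown a robot antipodal to $r$ exists.

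The main obstacle I anticipate is the degenerate case $\ell=r'$ in items~3 and~4 — the true leader sitting at the point antipodal to $r$ — where $cw(\ell,r')=0$ and \cref{l:addition} yields only the trivial $cw(r,r')>0$; here one must instead note directly that $cw(r,\ell)=cw(r,r')=\pi$, still inside the claimed range. A secondary care point is verifying the hypotheses of \cref{l:addition}, namely that both $V(r)$ and $G(r)$ are rotationally asymmetric, before invoking it; this is exactly where the standing hypothesis $r=v(r)$, together with the convention that the visible head is taken in $G(r)$ whenever $V(r)$ is symmetric, is needed. With those handled, the remaining steps are routine verifications.
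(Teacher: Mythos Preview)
Your proposal is correct and follows essentially the same approach as the paper: item~1 is a direct case check, items~2 and~3 reduce to \cref{p:possleader} and \cref{c:cognizant}, and item~4 is obtained by applying \cref{l:addition} with $S=V(r)$, $S'=G(r)$, $h=r$, $h'=\ell$, $p=r'$ to get $cw(\ell,r')<\pi/2$ and then converting to a bound on $cw(r,\ell)$. Your explicit treatment of the degenerate case $\ell=r'$ and the extra ``symmetric check'' in item~3 are slightly more verbose than the paper, which absorbs these into the general inequalities, but the arguments are otherwise identical.
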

\begin{proof}
To prove the first statement, observe that rule~1 and rule~2 only apply when there are multiplicity points, with one exception: when the swarm consists of exactly two robots, which are antipodal to each other. However, such a configuration is rotationally symmetric.

For the second statement, note that a robot $r$ is able to apply rule~3 only if $r=g(r)$, which, by \cref{c:cognizant}, implies that $r$ is a cognizant leader. Therefore, by \cref{p:possleader}, $r=\ell$.

To prove the third statement, assume that robot $r$ is able to apply rule~4.a, which means that $r=v(r)$ and $g(r)$ is antipodal to $r$. Recall that either $\ell=v(r)$ or $\ell=g(r)$. Specifically, if there is a robot in $g(r)$, then this robot is $\ell$, and therefore $r$ is antipodal to $\ell$. Otherwise there is no robot in $g(r)$, and so $\ell=v(r)=r$.

Let us now prove the fourth statement, which is illustrated in \cref{f:l5l6} (left). Assume that a robot $r\neq \ell$ is able to apply rule~4, which means that $r=v(r)$, and therefore $\ell=g(r)$. Note that, since the true leader is $g(r)$, the swarm's configuration is $G(r)$, and so there must be a robot $r'$ antipodal to $r$. Also, both $V(r)$ and $G(r)=V(r)\cup \{r'\}$ must be rotationally asymmetric, otherwise we would have $v(r)=g(r)$, by definition of visible and ghost head. Thus, we can apply \cref{l:addition} with $S=V(r)$ and $S'=G(r)$ (hence $p=r'$, $h=v(r)=r$, and $h'=g(r)=\ell$) to conclude that either $r=\ell$ (which is not the case) or $cw(r,r')>2\cdot cw(\ell, r')$. Since $cw(r,r')=\pi$, we have $cw(\ell, r')<\pi/2$. This inequality implies that $\ell$ lies on the clockwise arc $rr'$, and therefore $cw(r,\ell)+cw(\ell,r')=cw(r,r')=\pi$, or equivalently, $cw(\ell,r')=\pi-cw(r,\ell)$. By plugging this into our last inequality, we obtain $\pi/2 < cw(r,\ell)$. Also, from the obvious fact that $cw(\ell,r')\geq 0$, we deduce that $cw(r,\ell)\leq \pi$.
\end{proof}

In light of \cref{l:algomove}, if the swarm forms a rotationally asymmetric configuration with no multiplicity points, we say that a robot \emph{is able to move} if it is able to apply rule~3 or rule~4: indeed, these are the only rules that result in a non-null movement.

\begin{corollary}\label{c:algomove}
In any rotationally asymmetric configuration with no multiplicity points, a robot is able to move if and only if it is an undecided or a cognizant leader.
\end{corollary}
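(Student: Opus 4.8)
The plan is to trace the control flow of \cref{l:algorithm} and match it against the definitions of undecided and cognizant leader, using \cref{c:cognizant} and statement~1 of \cref{l:algomove} as the main tools. Recall that, by the remark preceding the corollary, in a rotationally asymmetric configuration with no multiplicity points ``$r$ is able to move'' means precisely ``$r$ is able to apply rule~3 or rule~4''. Since statement~1 of \cref{l:algomove} guarantees that rule~1 and rule~2 are never applicable in such a configuration, the control flow of the algorithm shows that $r$ is able to apply rule~3 exactly when $r=g(r)$, and is able to apply rule~4 exactly when $r\neq g(r)$ but $r=v(r)$.

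First I would prove the forward direction. Suppose $r$ is able to move. If $r$ is able to apply rule~3, then $r=g(r)$, and \cref{c:cognizant} immediately gives that $r$ is a cognizant leader. If instead $r$ is able to apply rule~4, then $r=v(r)$ and $r\neq g(r)$; in particular $r$ is not a cognizant leader (again by \cref{c:cognizant}), and moreover $v(r)\neq g(r)$, since $v(r)=g(r)$ together with $r=v(r)$ would force $r=g(r)$. Hence $v(r)\neq g(r)$ and $r=v(r)$, which is exactly the definition of an undecided leader.

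For the converse, suppose $r$ is a cognizant or an undecided leader. If $r$ is a cognizant leader, then $r=v(r)=g(r)$, so in particular $r=g(r)$; since rules~1 and~2 are inapplicable, $r$ applies rule~3 and hence is able to move. If $r$ is an undecided leader, then $v(r)\neq g(r)$ and either $r=v(r)$ or $r=g(r)$. The case $r=g(r)$ is impossible: by \cref{c:cognizant} it would make $r$ a cognizant leader, forcing $v(r)=g(r)$, contradicting $v(r)\neq g(r)$. So $r=v(r)$ and $r\neq g(r)$, and since rules~1,~2 (inapplicable) and rule~3 ($r\neq g(r)$) are all skipped, $r$ reaches rule~4 and executes exactly one of rules~4.a,~4.b,~4.c. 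Here I would close the last gap by noting that the visible clockwise-closest robot $s$ exists: if $r$ saw no other robot, all $n-1>0$ remaining robots would sit at the point antipodal to $r$, which (since there are no multiplicity points) forces $n=2$ with the two robots antipodal, a rotationally symmetric configuration. Thus $s$ exists with $cw(r,s)>0$, and $\delta>0$ since $V'(r)$ contains at least two distinct points, so each of rules~4.a,~4.b,~4.c moves $r$ by a positive clockwise angle; hence $r$ is able to move.

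The argument is essentially bookkeeping once \cref{c:cognizant} is in hand; the only place requiring a little care — and the one I would flag as the ``obstacle'', such as it is — is verifying that an undecided leader cannot satisfy $r=g(r)$ (so that the disjunction in its definition collapses to $r=v(r)$) and that $s$ genuinely exists, which is where rotational asymmetry is used to rule out the degenerate two-antipodal-robot configuration.
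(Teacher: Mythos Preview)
Your proof is correct and follows essentially the same approach as the paper's: use statement~1 of \cref{l:algomove} to rule out rules~1 and~2, then match the rule~3/rule~4 conditions against the leader definitions via \cref{c:cognizant}. Your extra verifications (that an undecided leader cannot have $r=g(r)$, that $s$ exists, and that rule~4 yields a non-null move) are correct but more than the paper spells out---the existence of $s$ is already implicit in the inapplicability of rule~2, and ``able to move'' is defined as ``able to apply rule~3 or~4''---so the paper's version is simply a terser rendition of your argument.
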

\begin{proof}
By the first statement of \cref{l:algomove}, no robot is able to apply rule~1 or rule~2. Hence, a robot $r$ is able to apply rule~3 if and only if $r=g(r)$, and it is able to apply rule~4 if and only if $r=v(r)\neq g(r)$. By \cref{c:cognizant}, $r=g(r)$ if and only if $r$ is a cognizant leader, and so $r=v(r)\neq g(r)$ if and only if $r$ is an undecided leader.
\end{proof}

\begin{figure}
\begin{center}
\includegraphics[scale=1]{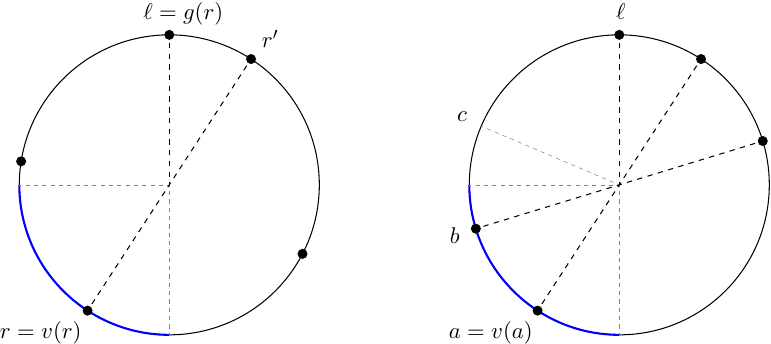}
\end{center}
\caption{Illustrations of the fourth statement of \cref{l:algomove} (left) and \cref{l:algotwo} (right).}
\label{f:l5l6}
\end{figure}

\begin{lemma}\label{l:algotwo}
In any rotationally asymmetric configuration with no multiplicity points, at most two robots are able to move, and the true leader is always able to move.
\end{lemma}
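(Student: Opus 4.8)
The plan is to prove the two assertions separately. The claim that \emph{the true leader $\ell$ is always able to move} is immediate: by \cref{p:possleader}, $\ell$ is an undecided or a cognizant leader, and by \cref{c:algomove} every undecided or cognizant leader is able to move. For the bound \emph{at most two robots are able to move}, \cref{c:algomove} reduces the task to showing that, besides $\ell$, at most one robot is an undecided or a cognizant leader. I would first observe that no robot $r\ne\ell$ without an antipodal robot can be such a leader: for such an $r$ we have $V(r)=S$ and $G(r)=S\cup\{r'\}$, where $r'$ is the point antipodal to $r$; since $V(r)=S$ is asymmetric, $r=v(r)$ would force $r=\ell$, so $r$ could only be a leader-candidate through $r=g(r)=\mathrm{head}(S\cup\{r'\})$, and then \cref{l:addition} applied to $S$ and $S\cup\{r'\}$ would give $cw(\ell,r')>2\,cw(r,r')=2\pi$, an impossibility. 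Hence every leader-candidate $r\ne\ell$ has an antipodal robot $r'$, its visible configuration $V(r)=S\setminus\{r'\}$ is asymmetric, and $r=\mathrm{head}(S\setminus\{r'\})$; applying \cref{l:addition} to $S\setminus\{r'\}$ and $S=(S\setminus\{r'\})\cup\{r'\}$ then yields $cw(r,r')>2\,cw(\ell,r')$, i.e.\ $cw(\ell,r')<\pi/2$ (equivalently, $\pi/2<cw(r,\ell)\le\pi$).

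Suppose now, for contradiction, that $r_1\ne r_2$ are two leader-candidates distinct from $\ell$, with antipodal robots $r_1',r_2'$. Work in clockwise-angular coordinates centered at $\ell$: put $\ell$ at $0$ and $r_i'$ at angle $\alpha_i$ with $0\le\alpha_1<\alpha_2<\pi/2$ (possible by the previous paragraph, after relabeling), so that $r_i$ sits at angle $\pi+\alpha_i$. A short check with these bounds (using $\alpha_2-\alpha_1\in(0,\pi/2)$, which excludes coincidences such as $r_1=r_2'$ or $r_2=r_1'$) shows that $r_1,r_2,r_1',r_2'$ are four distinct robots, so in particular $r_1,r_2\in S\setminus\{r_1',r_2'\}$. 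I would then apply \cref{l:addition} along the chains $S\setminus\{r_1',r_2'\}\subset S\setminus\{r_1'\}\subset S$ and $S\setminus\{r_1',r_2'\}\subset S\setminus\{r_2'\}$. Comparing $S\setminus\{r_1',r_2'\}$ with $S\setminus\{r_1'\}$ (whose head is $r_1$): the alternative branch of \cref{l:addition} would assert $cw(\mathrm{head}(S\setminus\{r_1',r_2'\}),r_2')>2\,cw(r_1,r_2')$, but $cw(r_1,r_2')=\pi+\alpha_2-\alpha_1>\pi$, so the right-hand side exceeds $2\pi$ and the branch is impossible; hence $\mathrm{head}(S\setminus\{r_1',r_2'\})=r_1$. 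Comparing then $S\setminus\{r_1',r_2'\}$ (head $r_1$) with $S\setminus\{r_2'\}$ (head $r_2$): since $r_1\ne r_2$, \cref{l:addition} forces $cw(r_1,r_1')>2\,cw(r_2,r_1')$, i.e.\ $\pi>2\bigl(\pi-(\alpha_2-\alpha_1)\bigr)$, i.e.\ $\alpha_2-\alpha_1>\pi/2$, contradicting $\alpha_1,\alpha_2\in[0,\pi/2)$. This contradiction establishes that at most one leader-candidate other than $\ell$ exists, hence at most two robots are able to move.

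The step I expect to be the main obstacle is verifying the hypothesis needed to invoke \cref{l:addition} on the smaller set, namely that $S\setminus\{r_1',r_2'\}$ is rotationally asymmetric. The natural route is \cref{p:antipodalasymmetry}: in $S\setminus\{r_1',r_2'\}$ the robots $r_1$ and $r_2$ each have their antipodal point missing, and if these are the only two robots with that property we are done. The remaining case, when the set $N$ of robots of $S$ with no antipodal robot is nonempty, has to be handled directly: a putative nontrivial rotation fixing $S\setminus\{r_1',r_2'\}$ would have to permute its antipode-free robots (the set $N\cup\{r_1,r_2\}$) without fixed points, and this must be ruled out using that $S\setminus\{r_1'\}$ and $S\setminus\{r_2'\}$ are asymmetric together with the fact that $r_1'$ and $r_2'$ lie in a common arc of length $<\pi/2$. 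The borderline case $\alpha_1=0$, in which $\ell$ itself has an antipodal robot (hence is a cognizant leader) and the second candidate is the robot antipodal to $\ell$, is already covered by allowing $\alpha_1=0$ throughout.
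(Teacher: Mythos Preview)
Your reduction to ``at most one undecided/cognizant leader besides $\ell$'' via \cref{c:algomove} and your derivation of $cw(\ell,r_i')<\pi/2$ for each non-$\ell$ candidate $r_i$ are fine and match what the paper obtains from the fourth statement of \cref{l:algomove}. The divergence is in how you extract the final contradiction: you chain \cref{l:addition} through the auxiliary set $T=S\setminus\{r_1',r_2'\}$, whereas the paper never removes two points at once.

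The gap you flag is genuine, and worse than you suggest. Your ``easy route'' through \cref{p:antipodalasymmetry} is in fact vacuous: since $S$ is rotationally asymmetric, the rotation by~$\pi$ cannot fix $S$, so the set $N$ of robots in $S$ with no antipodal robot is always nonempty. The antipode-free points of $T$ are exactly $N\cup\{r_1,r_2\}$ (a disjoint union, since $r_i$ has antipodal $r_i'\in S$ and $r_i'\notin N$), hence there are always at least three of them, and \cref{p:antipodalasymmetry} never applies. You are therefore always in the ``remaining case,'' for which you only sketch a plan (``must be ruled out using that $S\setminus\{r_1'\}$ and $S\setminus\{r_2'\}$ are asymmetric together with\dots''). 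That sketch is not a proof: knowing that $T\cup\{r_1'\}$ and $T\cup\{r_2'\}$ are asymmetric tells you only that a putative symmetry $\rho$ of $T$ does not fix $r_1'$ or $r_2'$, which is automatic for a nontrivial rotation; it does not by itself preclude a $k$-fold symmetry of $T$ with $k\geq 3$. Filling this in would require a separate argument of nontrivial length.

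The paper avoids this detour entirely. From $\pi/2<cw(a,\ell)\le\pi$ and $\pi/2<cw(b,\ell)\le\pi$ it gets $0<cw(a,b)<\pi/2$ (after relabeling), so $a$ and $b$ see each other and the clockwise arc $ac$ with $cw(a,b)=cw(b,c)$ lies in the common part of $V(a)$ and $V(b)$. Then the truncated angle sequences satisfy $W(a,b)\preceq W(b,c)$ in $V(a)$ (from $a=v(a)$) and $W'(b,c)\preceq W'(a,b)$ in $V(b)$ (from $b=v(b)$); since these truncated sequences coincide on the shared arc, one obtains $W'(a,b)=W'(b,c)$, and \cref{p:middle} yields that $b$ is not the head of $V(b)$, a contradiction. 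This argument needs only \cref{p:angle} and \cref{p:middle} and never introduces $S\setminus\{r_1',r_2'\}$, so no extra asymmetry hypothesis is required.
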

\begin{proof}
By \cref{p:possleader}, the true leader is either an undecided or a cognizant leader, and therefore, by \cref{c:algomove}, it is able to move.

Suppose for a contradiction that, in some rotationally asymmetric configuration with no multiplicity points, at least three distinct robots are able to move. One of these robots must be the true leader $\ell$, hence there are two robots $a$ and $b$, distinct from $\ell$, that are able to move. Due to the second statement of \cref{l:algomove}, only $\ell$ is able to apply rule~3, and therefore $a$ and $b$ are able to apply rule~4. By the fourth statement of \cref{l:algomove}, we have $\pi/2<cw(a,\ell)\leq \pi$ and $\pi/2<cw(b,\ell)\leq \pi$. It follows that the angular distance between $a$ and $b$ must be less than $\pi/2$, and so, without loss of generality, we may assume that $0<cw(a,b)<\pi/2$. In particular, $a$ and $b$ see each other, i.e., $a\in V(b)$ and $b\in V(a)$. Refer to \cref{f:l5l6} (right) for an example.

For any $x\in V(a)$ (respectively, $x\in V(b)$) and any point $y$ on the circle, we let $W(x,y)$ (respectively, $W'(x,y)$) be the angle sequence of $x$, with respect to $V(a)$ (respectively, $V(b)$), truncated at $y$. Let $c$ be the point on the circle such that $cw(a,b)=cw(b,c)$. Since $a$ is able to apply rule~4, we have $a=v(a)\neq g(a)$, i.e., $a$ is the head of $V(a)$. In particular, $W(a)\prec W(b)$ and, by \cref{p:angle}, $W(a,b)\preceq W(b,c)$. By a similar chain of deductions, since $b$ is able to apply rule~4, we have $W'(b)\prec W'(a)$ and $W'(b,c)\preceq W'(a,b)$. Recall that $cw(a,b)<\pi/2$, and therefore $cw(a,c)<\pi$, which implies that the antipodal points of $a$ and $b$ lie outside of the clockwise arc $ac$. But $V(a)$ and $V(b)$ only differ by the antipodal points of $a$ and $b$, and so $W(a,b)=W'(a,b)$ and $W(b,c)=W'(b,c)$. Thus, from $W(a,b)\preceq W(b,c)$ we derive $W'(a,b)\preceq W'(b,c)$, which, together with $W'(b,c)\preceq W'(a,b)$, yields $W'(a,b)=W'(b,c)$. Since $V(b)$ is rotationally asymmetric (or else we would have $v(b)=g(b)$) and $a,b\in V(b)$, it follows that $V(b)$ satisfies the hypotheses of \cref{p:middle}. We conclude that $b$ is not the head of $V(b)$, a contradiction.
\end{proof}

\begin{lemma}\label{l:algomult}
If the swarm has a unique multiplicity point, then all robots eventually gather and no longer move.
\end{lemma}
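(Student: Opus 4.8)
Denote by $M$ the unique multiplicity point of the swarm and by $M'$ its antipode. The plan is to control the behaviour of the one robot that can cause trouble --- a robot possibly located at $M'$, which is the only robot that does not see $M$ --- and then close the argument with a simple monotone potential.

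The first step is to establish two invariants that hold from now on at every time unit: (i)~there are at most two multiplicity points, and if there are two they are \emph{not} antipodal; (ii)~whenever there is a single multiplicity point $M$, its antipode $M'$ is unoccupied. For~(ii): every robot different from the one at $M'$ sees $M$, hence executes rule~1, which moves it onto an already-occupied multiplicity point, never onto $M'$; and the robot at $M'$ (if any) executes rule~2, 3, 4, or~5, and in each case either stays at $M'$ or moves to a point at positive clockwise distance less than $2\pi$ from $M'$, hence not back to $M'$. For~(i): a second multiplicity point can appear only when the robot at $M'$ executes rule~3 and moves onto its clockwise-closest visible robot $s$; since $M$ is invisible to that robot, $s$ carries a single robot, and the new multiplicity sits at $s\neq M'$, so it is not antipodal to $M$. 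Moreover, from a configuration with two multiplicity points every robot sees at least one of them (a robot is antipodal to at most one point), hence executes rule~1 and lands on a multiplicity point; so no third multiplicity is ever created, and an induction on time closes~(i).

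Now let $f$ be the number of distinct points of the circle currently occupied by at least one robot; gathering is precisely the assertion that $f$ becomes $1$ and stays there. Inspecting the rules, every move either is null (rule~5, or rules~1.a/1.b applied by a robot already on its target multiplicity), or moves a single robot onto a pre-existing multiplicity point (rules~1.a, 1.b, 3), or carries a single robot from its position to a previously empty point (rules~2, 4.a, 4.b, 4.c --- the offsets $\delta/3$, $\delta/7$ and the rule-4.b test being exactly what guarantees the destination is unoccupied). Hence $f$ never increases, and it strictly decreases whenever a single robot reaches a pre-existing multiplicity or rule~3 fires. It remains to rule out that $f$ stays constant at a value $\geq 2$ forever. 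Suppose it does, over a stretch of time; then during that stretch no single robot ever reaches a multiplicity point, and since the scheduler activates every robot infinitely often while any single robot that sees a multiplicity executes rule~1 and moves onto one, it follows that no single robot sees any multiplicity. By invariant~(i) this means there is a single multiplicity $M$ and every single robot sits at $M'$, so there is at most one such robot $r_0$. If $r_0$ sees some other robot it executes rule~3, 4, or~5: rule~3 decreases $f$ (contradiction); rule~4 moves $r_0$ off $M'$ onto an unoccupied point from which it now sees $M$; rule~5 keeps $r_0$ at $M'$, but then each remaining single robot --- all located off $M'$, hence seeing $M$ --- moves onto $M$ when activated, until $r_0$ is the only robot outside $M$, whereupon $r_0$ executes rule~2 and thereafter sees $M$. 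In each case $r_0$ eventually sees $M$, and on its next activation executes rule~1.a, lands on $M$, and decreases $f$ --- a contradiction. In the remaining sub-case, where every robot already sits on a multiplicity point and $f\geq 2$ (so there are exactly two, non-antipodal), every robot sees both and executes rule~1.b, moving to the one, call it $a$, selected by the orientation-based comparison $cw(a,b)>cw(b,a)$ (on which all robots agree by chirality); so the other point $b$ loses a robot at every activation of one of its robots, until it holds a single robot, which then sees the unique multiplicity $a$ and moves onto it, decreasing $f$. Thus $f$ strictly decreases within finitely many time units whenever $f\geq 2$, so after finitely many time units $f=1$: all robots coincide at a point $p$, which each of them sees as the unique multiplicity, so each executes rule~1.a and never moves again.

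The parts I expect to be most delicate are the maintenance of invariant~(i) and, within the potential argument, checking that a robot executing rule~4 while sitting at $M'$ really lands on an empty point within visibility range of $M$ (this is where the specific values $\delta/3$ and $\delta/7$ are needed), that the rule-1.b comparison is computed identically by all robots seeing both multiplicity points, and that the configuration cannot oscillate forever between ``one multiplicity'' and ``two multiplicities'' --- the last of which is exactly what the monotonicity of $f$ prevents.
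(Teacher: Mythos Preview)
Your approach via the monotone potential $f$ (number of distinct occupied points) is sound and genuinely different from the paper's. The paper argues directly by a short case analysis on the single robot $r$ antipodal to the multiplicity $a$: either $r$ waits until everyone else is at $a$ and then executes rule~2; or $r$ executes rule~4 first (landing on an empty point, hence creating no second multiplicity); or $r$ executes rule~3, in which case it is checked by hand that the possible second multiplicity $b$ satisfies $cw(a,b)>\pi>cw(b,a)$, so every robot applying rule~1.a or rule~1.b still targets $a$. That is a three-case argument with no potential function and no global invariant.

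Your route works, but two points in the write-up need correction:

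\emph{Invariant~(ii) is false as stated.} The lemma's hypothesis allows a robot at $M'$ initially, so ``whenever there is a single multiplicity point $M$, its antipode $M'$ is unoccupied'' does not hold from time~0. Your justification actually proves the weaker (and sufficient) fact that no robot ever \emph{moves onto} $M'$. Since you never use~(ii) in the main argument, simply drop it or restate it.

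\emph{The $r_0$ case analysis is misaligned with what you have already derived.} In your contradiction setup you conclude, via invariant~(i) and ``no single robot sees any multiplicity,'' that there is a single multiplicity $M$ and the only single robot $r_0$ sits at $M'$. But then every robot other than $r_0$ is at $M$, hence invisible to $r_0$, so $r_0$ sees no other robot and necessarily executes rule~2 on its next activation. Your subsequent split into rules~3/4/5 under the hypothesis ``if $r_0$ sees some other robot'' analyses a situation that cannot arise at this point of the argument; the ``remaining single robots --- all located off $M'$'' you invoke in the rule~5 branch contradict what you just established. The fix is short: once you have $r_0$ alone at $M'$ and everyone else at $M$, rule~2 fires, $r_0$ leaves $M'$, sees $M$, and on its next activation decreases $f$ --- done. (Similarly, in the two-multiplicity sub-case, note that a single robot antipodal to $a$ would see $b$ and move there, decreasing $f$; so you really are reduced to ``every robot sits on a multiplicity,'' and your rule~1.b argument finishes it.)

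With these two corrections your proof goes through. The trade-off: the paper's case analysis is shorter and avoids the bookkeeping of invariants, while your potential argument is more systematic and would scale better if more intermediate configurations had to be tracked.
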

\begin{proof}
Let $a$ be the unique multiplicity point. Note that all robots are able to see $a$ and will move to it by rule~1.a when activated, except perhaps a single robot $r$, antipodal to $a$. If all robots except $r$ move to $a$, while $r$ is not activated or only applies rule~5, then $r$ will eventually apply rule~2, moving to a different point. After that, $r$ will be able to see $a$, and will move to it by rule~1.a.

Otherwise, if $r$ makes a move before all other robots are in $a$, it does so by applying rule~3 or rule~4. If $r$ applies rule~4, then it will not create a new multiplicity point, all robots (including $r$) will thus see the unique multiplicity point $a$, and will gather in it.

If $r$ applies rule~3, it moves to another robot $s$'s location. If $s$ is activated at the same turn, then $s$ moves to $a$, no new multiplicity point is created, and all robots will eventually gather in $a$. If $s$ is not activated at the same turn as $r$, then $r$ will move to $s$ and create a new multiplicity point $b$. Since rule~3 causes $r$ to move clockwise by an angle smaller than $\pi$, we have $cw(a,b)>\pi>cw(b,a)$. Now all robots see $a$ and perhaps also $b$: either way, they will apply rule~1.a or rule~1.b, both of which result in a move to $a$.
\end{proof}

\begin{lemma}\label{l:algocollision}
Assume that the swarm forms a rotationally asymmetric configuration with no multiplicity points, and let $r$ be a robot that is able to move. Then:
\begin{enumerate}
\item If $r$ is activated and executes rule~4, then it moves by an angular distance strictly smaller than $\pi$, and its destination point is not in $V'(r)$, as defined in \cref{l:algorithm}.
\item If $r$ is activated and executes rule~3, and $r'$ is a robot that is able to move but is not activated, then $r$ does not move to $r'$.
\item If both $r$ and $r'$ are activated and execute rule~4.b or rule~4.c, then $r$ and $r'$ are not antipodal to each other, and their destination points are not antipodal to each other.
\end{enumerate}
\end{lemma}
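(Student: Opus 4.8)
The plan is to handle the three statements separately, each time using the structural information provided by Lemmas~\ref{l:algomove} and~\ref{l:algotwo} together with the point-addition tools (Lemma~\ref{l:addition}, Proposition~\ref{p:middle}) and the specific offsets $\delta/3$ and $\delta/7$ appearing in the algorithm. Throughout, recall that a robot that is able to move must be an undecided or a cognizant leader (Corollary~\ref{c:algomove}), so in any configuration there are at most two such robots, one of which is the true leader $\ell$.

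\textbf{Statement~1.} If $r$ executes rule~3 then $r=g(r)$, so by Corollary~\ref{c:cognizant} $r$ is a cognizant leader, hence $r=\ell$ by Proposition~\ref{p:possleader}. Then $r'$ must be an undecided leader, so by the fourth statement of Lemma~\ref{l:algomove}, $r'$ has a robot antipodal to it, and $\pi/2 < cw(r',\ell) \le \pi$, i.e.\ $cw(r',r)>\pi/2$. Rule~3 moves $r$ clockwise to its nearest clockwise robot $s$; I would argue $cw(r,s) \le \pi/2 < cw(r,r')$, since the antipodal robot of $r'$ lies at clockwise distance $cw(r,r')-\pi$ from $r$ — wait, more carefully: $r'$ itself is visible to $r$ (it is within the open semicircle on the clockwise side since $cw(r,r')>\pi/2$ means... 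I need to recheck which side), so I would locate an explicit robot strictly clockwise-closer to $r$ than $r'$, forcing $s \ne r'$. The cleanest route is: the robot antipodal to $r'$ exists and sits at clockwise distance $\pi - cw(r',r)$ from $r$, which is $<\pi/2< cw(r,r')$... I will need to carefully track whether $s=r'$ is possible and rule it out; this is the one place a short case check on arc positions is required.

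\textbf{Statement~2.} Rule~4 always moves $r$ clockwise by at most $cw(r,s)/2 + \delta/3$ (or a smaller offset), and since $cw(r,s) < 2\pi$ and $\delta$ is at most the smallest gap, one checks $cw(r,s)/2 + \delta/3 < \pi$; the main point is that $cw(r,s) \le \pi$ whenever $s$ exists and, when $r$ sees another robot at all, $cw(r,s)$ is at most the span of the visible arc, which is $<\pi$ on the clockwise side — so $cw(r,s)/2<\pi/2$ and adding $\delta/3 \le (\text{smallest gap})/3$ keeps us below $\pi$. For the second claim, that the destination avoids $V'(r)$: in rule~4.c the destination is the exact midpoint $cw(r,s)/2$, and rule~4.c is only reached when \emph{no} point of $V'(r)$ equals that midpoint, so we are done; in rule~4.b the destination is $cw(r,s)/2 + \delta/7$, and since $\delta/7$ is strictly smaller than the distance from the midpoint to any other point of $V'(r)$ (because $\delta$ is the minimal gap in $V'(r)$ and the midpoint is within $\delta$ of $s\in V'(r)$), no point of $V'(r)$ is hit; in rule~4.a the destination is $cw(r,s)+\delta/3$, which overshoots $s$ by $\delta/3 < \delta$, again missing every point of $V'(r)$. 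This statement is essentially an unwinding of the chosen constants.

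\textbf{Statement~3.} This is the crux and where I expect the real work. Suppose $r$ and $r'$ both execute rule~4.b or~4.c and are activated together. By Lemma~\ref{l:algotwo} one of them, say $r'$, is the true leader $\ell$; by the fourth statement of Lemma~\ref{l:algomove}, $\pi/2 < cw(r,\ell)\le\pi$. If $r$ and $\ell$ were antipodal, then $cw(r,\ell)=\pi$, and I would show that rule~4.a rather than~4.b/4.c governs $r$: indeed, $g(r)$ would then be $\ell$ itself (antipodal to $r$), and $s$ — the clockwise-nearest robot to $r$ — would... I must check whether $s$ has an antipodal robot; if it does, $r$ applies rule~4.a, contradicting the hypothesis, and if it does not, then $r$ still sits in rule~4.b/4.c, so antipodality is not immediately excluded and I will need the finer argument below. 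The robust way is to use the offsets: after moving, $r$'s new position is $cw(r,s)/2$ (or $+\delta/7$) clockwise of $r$, and $\ell$'s new position is $cw(\ell,t)/2$ (or $+\delta/7$) clockwise of $\ell$, where $t$ is $\ell$'s clockwise-nearest robot; antipodality of the destinations would mean the difference of these two displaced positions is exactly $\pi$. Since $r' = \ell$ sees $r$ (as $cw(\ell, r) = 2\pi - cw(r,\ell) < 3\pi/2$, and actually $\ell$ is within $r$'s semicircle means $r$ is within $\ell$'s, so they see each other), both robots' snapshots contain a common segment, so the midpoints they compute are constrained; I would derive that destination-antipodality forces an exact linear relation among the original angular distances that, modulo the $\delta/7$ perturbations, cannot hold — precisely because $\delta/7$ was chosen smaller than every relevant gap, the "generic" antipodal coincidence present in the unperturbed midpoints is destroyed. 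Concretely: if rule~4.c applies to both, the would-be antipodal coincidence says $s$'s antipode lies at the $r$-midpoint, i.e.\ a point of $V'(r)$ equals $cw(r,s)/2$ clockwise of $r$ — but that is exactly the condition that sends $r$ to rule~4.b, not 4.c, a contradiction; and if rule~4.b applies (to one or both), the extra $\delta/7$ offset moves the destination off any point of $V'$, in particular off the antipode of $r'$'s destination, because $\delta/7$ is below the resolution of $V'(r)$. The hardest part will be organizing the case analysis on which of $r,r'$ uses 4.b versus 4.c and verifying the antipodality-forbidding inequality in the mixed case, since there one destination carries a $\delta/7$ and the other does not, and I must confirm the two $\delta$'s (computed in $r$'s and $\ell$'s frames) agree — which they do, because all simultaneously-activated robots compute isometric $V'$ sets and hence the same $\delta$.
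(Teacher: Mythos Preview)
Your plan has the right shape for Statements~1 and~2, but it is missing a preliminary fact that the paper proves once and reuses throughout: any robot $r$ that is able to move has its nearest clockwise robot $s$ at distance $cw(r,s)<\pi$. The argument is that $r$ is the head of $V(r)$ or $G(r)$; if no robot $q$ satisfied $0<cw(r,q)<\pi$, then any visible robot $q$ (there is one, since rule~2 does not apply) would have $0<cw(q,r)<\pi$ and hence a smaller angle sequence than $r$. With this in hand, Statement~1 is immediate: since $r=\ell$, Lemma~\ref{l:algomove}(4) gives $cw(r',r)\le\pi$, hence $cw(r,r')\ge\pi>cw(r,s)$, so $s\neq r'$. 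Your attempt to locate the antipode of $r'$ and compare distances is workable but roundabout, and your hand-waving ``$cw(r,s)$ is at most the span of the visible arc'' in Statement~2 does not actually establish $cw(r,s)<\pi$. Also note a slip in Statement~2: rule~4.a moves by $cw(r,s)+\delta/3$, not $cw(r,s)/2+\delta/3$, so bounding it below $\pi$ requires the separate observation $\delta\le cw(s,a)$ where $a$ is $r$'s antipode.

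For Statement~3 your approach has a genuine gap. First, for ``$r$ and $r'$ not antipodal'': you check from the non-leader's side and correctly find that rule~4.a need not trigger there, but you overlook the clean argument from the \emph{leader's} side. If $\ell$ were antipodal to the other robot, then $G(\ell)$ would equal the actual configuration, whose head is $\ell$; hence $\ell=g(\ell)$ and $\ell$ would execute rule~3, contradicting the hypothesis. Second, for ``destinations not antipodal'': your offset-based case analysis does not work as written. The claim that antipodality of the two midpoints forces ``$s$'s antipode to lie at the $r$-midpoint'' is simply false --- the other robot's destination is the midpoint of $\ell$ and \emph{its} neighbour $t$, and the antipode of that midpoint has no reason to land in $V'(r)$, so rule~4.b versus~4.c gives you no leverage. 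The paper instead uses a one-line triangle inequality: letting $a$ be the robot antipodal to $r'$ (which exists by Lemma~\ref{l:algomove}(4) and differs from $\ell$ by the previous paragraph), the leader $\ell$ lies in the interior of the clockwise arc $r'a$; the leader's destination $p$ stays within that arc at angular distance $<cw(\ell,a)$ from $\ell$, while $r'$'s destination $p'$ stays at angular distance $<cw(r',\ell)$ from $\ell$; summing gives angular distance $(p,p')<cw(r',\ell)+cw(\ell,a)=cw(r',a)=\pi$.
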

\begin{proof}
Since $r$ is able to move, by \cref{c:algomove}, it is either an undecided or a cognizant leader, implying that $r$ is the head of $V(r)$ or of $G(r)$. Also, $r$ sees at least one other robot $q\neq r$, or else it would be able to apply rule~2. If there is no robot $q'$ such that $0< cw(r,q')<\pi$, then the first element in the angle sequence of $r$ with respect to both $V(r)$ and $G(r)$ is at least $\pi$. However, this also means that $0<cw(q,r)<\pi$, and therefore $q$ has a lexicographically smaller angle sequence than $r$ with respect to both $V(r)$ and $G(r)$, contradicting the fact that $r$ is the head of one of them. It follows that there must be some robot $q$ with $0<cw(r,q)<\pi$. In particular, if $s$ is defined as in \cref{l:algorithm}, we have $0<cw(r,s)<\pi$.

Let us prove the first statement: assume that $r$ executes rule~4, and let $p$ be its destination point. According to \cref{l:algorithm}, if $r$ executes rule~4.c, then $p\notin V'(r)$ and $cw(r,p)=cw(r,s)/2$, implying that $cw(r,p)<\pi/2<\pi$. If $r$ executes rule~4.b, then $cw(r,p)=cw(r,s)/2+\delta/7$. By definition of $\delta$, we have $\delta\leq cw(r,s)$, and hence $cw(r,p)\leq cw(r,s)/2+cw(r,s)/7< cw(r,s)<\pi$. This rule is executed only if the midpoint of $r$ and $s$ is $m\in V'(r)$. Since $0<cw(m,p)=\delta/7<\delta$, then $p\notin V'(r)$, by definition of $\delta$. Finally, if $r$ executes rule~4.a, we have $cw(r,p)=cw(r,s)+\delta/3$. Let $a\in V'(r)$ be the antipodal point of $r$, and note that, by definition of $\delta$, we have $\delta\leq cw(s,a)$. It follows that $cw(r,p)=cw(r,s)+\delta/3<cw(r,s)+\delta\leq cw(r,s)+cw(s,a)=cw(r,a)=\pi$ (note that we can write $cw(r,s)+cw(s,a)=cw(r,a)$ because $s$ is on the clockwise arc $ra$). Moreover, since $0<cw(s,p)=\delta/3<\delta$ and $s\in V'(r)$, then we have $p\notin V'(r)$, by definition of $\delta$.

Let us prove the second statement, which is illustrated in \cref{f:l8} (left). By repeating the above reasoning with $r'$, we infer that its next robot in the clockwise direction, $s'$, satisfies $0<cw(r',s')<\pi$. If $r$ executes rule~3, then $r$ is the true leader, and $cw(r',r)\leq \pi$, by \cref{l:algomove}. Thus, $cw(r,r')\geq \pi$, implying that $r'$ is distinct from the destination point $s$ of $r$, because $cw(r,s)<\pi$. We conclude that $r$ does not move to $r'$.

Let us now prove the third statement. Since both $r$ and $r'$ are able to move, due to \cref{l:algotwo}, one of them must be the true leader, say $r$. By the fourth statement of \cref{l:algomove}, $cw(r',r)\leq \pi$, and $r'$ has an antipodal robot. Assume for a contradiction that $r$ is antipodal to $r'$. This means that $G(r)$ coincides with the current configuration, and hence $r$ is the head of $G(r)$, i.e., $r=g(r)$. So, $r$ is able to apply rule~3, which contradicts the fact that it executes rule~4.b or rule~4.c. We have established that $r$ has no antipodal robot, and the robot antipodal to $r'$ is $a\neq r$ (refer to \cref{f:l8} (right)). Also, because $cw(r',r)\leq \pi$, the robot $r$ is located in the interior of the clockwise arc $r'a$. If $s$ is defined as in \cref{l:algorithm}, and $p$ is the destination point of $r$, according to rule~4.b and rule~4.c, we have $cw(r,p)<cw(r,s)\leq cw(r,a)$. So, the angular distance between $r$ and $p$ is smaller than $cw(r,a)$. On the other hand, if $p'$ is the destination point of $r'$, and $s'$ is defined as above, we have $0<cw(r',p')<cw(r',s')\leq cw(r',r)$. Hence, the angular distance between $r$ and $p'$ is smaller than $cw(r',r)$. By the triangle inequality, the angular distance between $p$ and $p'$ is smaller than $cw(r',r)+cw(r,a)=cw(r',a)=\pi$, implying that $p$ and $p'$ are not antipodal to each other.
\end{proof}

\begin{figure}
\begin{center}
\includegraphics[scale=1]{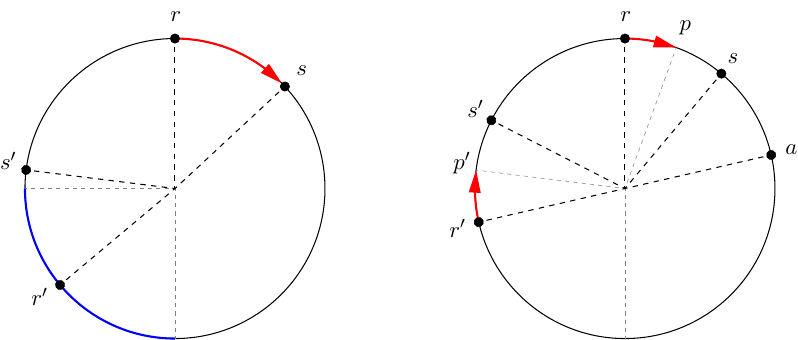}
\end{center}
\caption{Illustrations of the second statement (left) and the third statement (right) of \cref{l:algocollision}.}
\label{f:l8}
\end{figure}

\begin{lemma}\label{l:algorule3}
If the swarm forms a rotationally asymmetric configuration with no multiplicity points and an active robot executes rule~3, then all robots eventually gather in a point and no longer move.
\end{lemma}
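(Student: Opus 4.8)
The plan is to show that a \emph{single} time step after the active robot executes rule~3, the swarm is in a configuration with exactly one multiplicity point, and then to conclude by \cref{l:algomult}. First I would pin down which robot executes rule~3: by the second statement of \cref{l:algomove}, the only robot able to apply rule~3 is the true leader $\ell$ (equivalently, by \cref{c:cognizant}, the cognizant leader). Since $\ell$ sees at least one other robot (otherwise rule~2, not rule~3, would apply), the robot $s$ of \cref{l:algorithm} exists, and, arguing exactly as in the first part of the proof of \cref{l:algocollision} (using that $\ell$ is the head of $V(\ell)$ or of $G(\ell)$), we have $0<cw(\ell,s)<\pi$. By rule~3, $\ell$ moves onto $s$.

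Next I would check that $s$ does not move during this step. Clearly $s\neq\ell$. The robot $s$ cannot be a cognizant leader, for that would force $s=\ell$ by \cref{p:possleader}; and $s$ cannot be an undecided leader either, because by the fourth statement of \cref{l:algomove} an undecided leader $r\neq\ell$ satisfies $cw(r,\ell)\leq\pi$, whereas $cw(s,\ell)=2\pi-cw(\ell,s)>\pi$. Hence, by \cref{c:algomove}, $s$ is not able to move, so it applies rule~5 if activated (rules~1 and~2 being excluded by the first statement of \cref{l:algomove}); either way $s$ stays put. Therefore, after this step, $\ell$ and $s$ share a location, which is a multiplicity point.

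It then remains to rule out a second multiplicity point. By \cref{l:algotwo}, at most one robot $r'\neq\ell$ is able to move, and by the second statement of \cref{l:algomove} such an $r'$ applies rule~4; moreover $r'\neq s$ by the argument above. Every robot other than $\ell$ and $r'$ is unable to move and hence stays (rule~5). If $r'$ is activated and moves to a point $p'$, the second statement of \cref{l:algocollision} gives $p'\notin V'(r')$; since $\vartheta=\pi$, $V(r')$ contains every robot except possibly the one antipodal to $r'$, and $V'(r')$ additionally contains the point antipodal to $r'$, so $V'(r')$ contains every occupied point and $p'$ is therefore unoccupied. Consequently no second multiplicity point appears, and the post-step configuration has a unique multiplicity point. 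Applying \cref{l:algomult} then yields that all robots eventually gather and no longer move.

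The delicate part is the uniqueness of the multiplicity point, i.e.\ controlling the single possibly co-active robot $r'$: this is precisely where \cref{l:algotwo} (to bound the movers by two), the angular constraint in the fourth statement of \cref{l:algomove} (to exclude $r'=s$, hence to guarantee that $\ell$ really lands on a robot that stays), and the second statement of \cref{l:algocollision} (to keep $r'$'s landing point unoccupied) are all needed together. Once a single multiplicity point is present, \cref{l:algomult} is used as a black box and nothing further is required.
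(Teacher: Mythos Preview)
Your proof is correct and follows essentially the same route as the paper's: identify $\ell$ as the robot executing rule~3, show a unique multiplicity point is created at $s$, and conclude via \cref{l:algomult}. The only cosmetic difference is that you argue directly from the fourth statement of \cref{l:algomove} that $s$ cannot be an undecided (or cognizant) leader and hence stays put, whereas the paper cites the first statement of \cref{l:algocollision} for the same conclusion---but that statement's proof is precisely the angular argument you give, so the two are equivalent.
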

\begin{proof}
By \cref{l:algomove}, the robot that executes rule~3 must be the true leader $\ell$, and any robot $r\neq \ell$ that moves at the same time as $\ell$ executes rule~4. Moreover, by \cref{l:algotwo}, if such a robot $r$ exists, it is unique. According to rule~3, the destination point of $\ell$ is the location of another robot $s$. So, if only $\ell$ moves, a unique multiplicity point is created on $s$, and \cref{l:algomult} allows us to conclude the proof.

Otherwise, assume that $r$ exists and moves at the same time as $\ell$. By the second statement of \cref{l:algocollision}, $s\neq r$, and so $\ell$ still creates a multiplicity point on $s$. On the other hand, by the first statement of \cref{l:algocollision}, $r$ does not create a multiplicity point distinct from $s$. Again, there is a unique multiplicity point, and we can apply \cref{l:algomult}.
\end{proof}

\begin{lemma}\label{l:algorule4a}
If the swarm forms a rotationally asymmetric configuration with no multiplicity points and an active robot executes rule~4.a, then all robots eventually gather in a point and no longer move.
\end{lemma}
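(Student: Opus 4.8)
The plan is to reduce \cref{l:algorule4a} to \cref{l:algomult} and \cref{l:algorule3}. By \cref{l:algomove}(3) the robot $r$ executing rule~4.a is either the true leader $\ell$ (when no robot is antipodal to $r$) or the robot antipodal to $\ell$, and by \cref{l:algotwo} at most one other robot moves this turn. Handle first the case $r\neq\ell$ with $\ell$ also activated: then $\ell$ executes rule~3 and lands on its clockwise neighbour $s_\ell$, which is distinct from $r$ because $cw(\ell,s_\ell)<\pi=cw(\ell,r)$, so a multiplicity point appears at $s_\ell$; meanwhile \cref{l:algocollision}(2) says $r$'s destination lies outside $V'(r)$, which contains $s_\ell$ and all robot positions, so $r$ creates no multiplicity. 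Thus there is a unique multiplicity point and \cref{l:algomult} yields Gathering. (The case where only $\ell$ moves is similar, using \cref{l:algocollision}(1).)

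In the remaining case only $r$ moves — together possibly with a single rule~4.b/4.c mover $r'$, which can occur only when $r=\ell$. Let $S_1$ be the resulting configuration, let $s$ be the robot $r$ jumped over, and recall $r$ now lies at clockwise distance $\delta/3$ past $s$. I would first check $S_1$ has no multiplicity: $r$'s destination is outside $V'(r)$ by \cref{l:algocollision}(2), $r'$'s is outside $V'(r')$, and a short arc argument shows the two destinations cannot coincide. Next — the structural heart — I claim $cw(s,\cdot)=\delta/3$ is the \emph{strict} minimum among all gaps of $S_1$: the original gaps of $S$ are all $\ge\delta$, the gap vacated by $r$ only grows, the gap split off by $r$'s new position is $\ge 2\delta/3$, and every gap disturbed by a simultaneous rule~4.b/4.c move stays $>\delta/3$ thanks to the offsets $\delta/3$, $\delta/7$ in the algorithm. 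Hence $s$ is the unique point of $S_1$ with the lexicographically smallest angle sequence, so $S_1$ is rotationally asymmetric; and applying \cref{c:cognizant} with \cref{l:addition} to the pair $V(s)$, $G(s)$ — whose symmetric difference, the (possibly phantom) antipodal point of $s$, is at distance $\pi$ from $s$ and so cannot beat the gap $\delta/3$ — forces $g(s)=s$, i.e.\ $s$ is a \emph{cognizant} leader of $S_1$.

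It remains to show that a configuration with a cognizant leader $c$ whose clockwise gap is the unique global minimum leads to Gathering. Since $c$ belongs to every robot's visible configuration except that of the robot $\bar c$ antipodal to $c$, and $c$'s gap is the strict minimum, no robot other than $c$ or $\bar c$ can be a head, hence can move (\cref{c:algomove}); and $c$ itself uses rule~3. If $\bar c$ is absent the configuration is frozen until the scheduler activates $c$, which then executes rule~3 and \cref{l:algorule3} applies. If $\bar c$ is present and moves, it does so by rule~4.a (the only other option permitted by \cref{l:algomove}(3)) or by rule~4.b/4.c: a rule~4.b/4.c move neither creates a multiplicity (\cref{l:algocollision}(2)) nor dethrones $c$ — the gaps it creates exceed $c$'s gap, again by the offsets — and it vacates $c$'s antipode, reducing to the frozen case; a rule~4.a move instead passes the leadership to the robot $\bar c$ jumped over, which is again a cognizant leader with a uniquely minimal gap by the argument above, but destroys the antipodal pair $\{c,\bar c\}$ (and creates none, since the new position's antipode is empty). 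As the number of antipodal pairs is a non-negative integer that strictly decreases on every rule~4.a move once a cognizant leader exists, rule~4.a fires only finitely often; thereafter the current (cognizant) leader has no antipode, and \cref{l:algorule3} finishes.

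The main obstacle is the second and third steps' bookkeeping: confirming that $s$ and every subsequent baton-holder is genuinely a cognizant leader rather than merely an undecided one, and that no rule~4.b/4.c move of the antipodal robot can ever create a multiplicity, spoil rotational asymmetry, or shrink some gap below the leader's. This is exactly what the offsets $\delta/3$, $\delta/7$ and the auxiliary set $V'$ are engineered for, and it requires tracking through an arbitrarily long adversarial schedule which gaps are ``small'' (the descendants of $\delta/3$ under repeated jumps) and which remain $\Theta(\delta)$.
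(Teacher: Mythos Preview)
Your skeleton is right and broadly parallels the paper: dispose of the case where $\ell$ is simultaneously active (so rule~3 fires and \cref{l:algorule3} applies), then show that after $r$'s rule-4.a jump the robot $s$ it hopped over becomes a cognizant leader with the uniquely smallest clockwise gap $\delta/3$, and finally argue that from such a configuration rule~3 must eventually fire. The paper carries out the last two steps by an explicit three-way case split on whether the simultaneous rule-4.b/4.c mover (if any) is $s'$, is some other robot, or is absent; you instead set up an invariant and an iteration. Either organisation can work.

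There is, however, a genuine gap at the heart of your argument: the sentence ``applying \cref{c:cognizant} with \cref{l:addition} \dots\ the antipodal point of $s$ is at distance $\pi$ from $s$ and so cannot beat the gap $\delta/3$'' does not establish $g(s)=s$. What matters is not the distance from $s$ to $s'$, but the \emph{first term of the angle sequence of $s'$ in $G(s)$}, i.e.\ the clockwise gap from the (possibly phantom) antipodal point $s'$ to its clockwise neighbour. When $s'$ is occupied this is automatic, since then $G(s)$ is the real configuration and $s$, having the unique minimum gap, is its head. But when the second mover was $s'$ itself (the paper's case~(i)), the ghost $s'$'s clockwise neighbour is the \emph{new} position of that mover, and you must bound this gap below by $\delta/3$; this is exactly where the inequality $\beta/2-\delta/7>\delta/3$ is used, and neither \cref{l:addition} nor \cref{c:cognizant} supplies it.

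Two smaller points about your iterative third paragraph. First, the claim that only $c$ and $\bar c$ can move needs the extra hypothesis that $c$'s clockwise neighbour (the robot that just jumped) has no antipodal robot, so that every robot other than $\bar c$ sees \emph{both} ends of the minimal gap; this does hold by \cref{l:algocollision}(2), but must be tracked through each step of your iteration. Second, your rule-4.a branch for $\bar c$ is in fact vacuous: once $s$ has the unique minimum gap $\delta/3$ and $r$ has no antipode, the paper shows that $s'$ cannot satisfy rule-4.a's side condition (its clockwise neighbour $u$ having an antipodal robot $u'$ would give $u'$ a smaller first gap than $s'$ in $V(s')$). So after at most one further rule-4.b/4.c move by $s'$, the leader has no antipode and rule~3 fires; the antipodal-pair counter is unnecessary, and if you keep it you would still owe the cognizance verification for each successive baton-holder.
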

\begin{proof}
Let $t$ be the time at which some robot executes rule~4.a. By the third statement of \cref{l:algomove}, such a robot must be either the true leader $\ell$ or the robot $\ell'$ antipodal to $\ell$. If it is $\ell'$, we know by \cref{l:algotwo} that only $\ell$ and $\ell'$ are able to move at time $t$. However, in this case, $\ell$ is necessarily able to apply rule~3 at time $t$: indeed, since $\ell$ has an antipodal robot, the current configuration coincides with $G(\ell)$, and therefore $\ell$ is the head of $G(\ell)$, i.e., $\ell=g(\ell)$. Thus, if $\ell$ is activated at time $t$, it executes rule~3, and we conclude the proof by \cref{l:algorule3}. So, if $\ell'$ executes rule~4.a at time $t$, we may assume that no other robot moves at the same time. On the other hand, if $\ell$ executes rule~4.a at time $t$, by \cref{l:algomove,l:algotwo} there is at most one other robot $a$ that is able to move at time $t$. Such a robot $a$ is not antipodal to $\ell$ (otherwise $\ell$ would be able to execute rule~3, not rule~4.a), it is not able to execute rule~4.a (only $\ell$ or the robot antipodal to $\ell$ could), it has an antipodal robot $a'$, and $\pi/2<cw(a,\ell)\leq \pi$.

In any case, we have the following situation at time $t$ (illustrated in \cref{f:l10}): a robot $r$ that executes rule~4.a (it does not matter if $r=\ell$ or $r=\ell'$), and possibly another robot $a$ that executes rule~4.b or rule~4.c, with an antipodal robot $a'\neq r$, and such that $\pi/2<cw(a,r)<\pi$ (since $a$ is not antipodal to $r$). Let $s$ be the robot next to $r$ in the clockwise direction, and let $\alpha=cw(r,s)$. According to \cref{l:algorithm}, $r=v(r)$, there is a robot $s'$ antipodal to $s$, and $r'=g(r)$, with $r'$ antipodal to $r$. Note that there is no robot $x\neq r'$ such that $cw(x,r')<\alpha$, or else $r'\neq g(r)$. We conclude that, whether $r'$ is occupied by a robot or not, $\alpha$ is the minimum angular distance between any two robots at time $t$. Therefore, if $q$ is the robot next to $a$ in the clockwise direction, and $\beta=cw(a,q)$, we have $cw(a,r)\geq cw(a,q)=\beta\geq \alpha$. On the other hand, we have $cw(r,a')\geq cw(r,s)$ by definition of $s$. Let $\delta$ be computed as in \cref{l:algorithm}, and recall that all robots that are active at the same time agree on the same $\delta$. Note that $\delta\leq \alpha$ and $\delta\leq \beta$. We distinguish three cases: (i)~$a=s'$, (ii)~$a\neq s'$, and (iii)~$a$ does not exist.

(i)~Let us first assume that $a=s'$ at time $t$ (refer to \cref{f:l10} (left)). Let us examine the configuration at time $t+1$ (hence, where referring to $r$, we mean the position of $r$ at time $t+1$, etc.). According to \cref{l:algorithm}, $cw(s,r)=\delta/3$ (rule~4.a), and either $cw(a,q)=\beta/2-\delta/7\geq \delta/2-\delta/7>\delta/3$ (rule~4.b) or $cw(a,q)=\beta/2\geq \delta/2$ (rule~4.c). The only exceptional case is $q=r$, which implies that $cw(a,q)>\alpha\geq \delta$. In all cases, we have $cw(a,q)>\delta/3=cw(s,r)$. Observe that $s$ no longer has an antipodal robot, and $a$ has moved by at least $\beta/2\geq \delta/2>\delta/3$. It follows that $a$ is not antipodal to $r$, and therefore $r$ has no antipodal robot, either. Moreover, $cw(a,q)>\delta/3$ implies that the angular distance between $s$ and $r$ is strictly smaller than any other, and in particular the configuration at time $t+1$ is still rotationally asymmetric, and $s$ is the new true leader. Also, all robots see both $s$ and $r$ (because $s$ and $r$ have no antipodal robots). So, the only robot $x$ such that $x=v(x)$ is $x=s$, and therefore no robot other than $s$ is able to move at time $t+1$ (recall that, due to the second statement of \cref{l:algomove}, if a robot other than the true leader is able to move, it must be able to apply rule~4). Moreover, the angular distance between the point antipodal to $s$ and the next robot in the clockwise direction, i.e., $a$, is at least $\beta/2>\delta/3$, which implies that $s=g(s)$. It follows that $s$ is able to apply rule~3 at time $t+1$. So, the configuration will not change as long as $s$ is not activated, and eventually the semi-synchronous scheduler will activate $s$. At that point, $s$ will execute rule~3, and \cref{l:algorule3} applies.

(ii)~Let us now assume that $a\neq s'$ at time $t$ (refer to \cref{f:l10} (right)). From $cw(a,r)>\pi/2$, we get $cw(r,a')<\pi/2$, and so $\alpha\leq \pi/2$ (recall that $\alpha$ is the minimum angular distance between any two robots at time $t$). Also, since $a=v(a)$ at time $t$ (or else $a$ would not be able to apply rule~4), and since $a$ sees both $r$ and $s$, we have that $cw(a,q)=\alpha$ (or else $a$ would not be the head of $V(a)$). Hence, $cw(a,q)=\alpha\leq \pi/2<cw(a,r)$, implying that $q\neq r$, and therefore $q$ does not move at time $t$. Le us analyze the configuration at time $t+1$. As in case~(i), once again we have $cw(s,r)=\delta/3$, and we can prove that the angular distance between $s$ and $r$ is strictly smaller than any other. In particular, the configuration at time $t+1$ is still rotationally asymmetric, and $s$ is the new true leader. Also, no robot other than $a$ can be antipodal to $r$, by definition of $V'$ and $\delta$. But $a$ cannot be antipodal to $r$ either, because the point $r'$ antipodal to $r$ satisfies $cw(s',r')=\delta/3$, while $cw(s',a)>\alpha/2\geq \delta/2$. So, $r$ has no antipodal robots at time $t+1$. However, unlike in case~(i), $s$ still has an antipodal robot $s'$. Note that $s$ is able to apply rule~3, because $cw(s',a)>cw(s,r)$, and so $s=g(s)$. All robots other that $s'$ see both $s$ and $r$, and are therefore unable to move at time $t+1$. If $s'$ is unable to move or is not activated before $s$, then $s$ executes rule~3, and \cref{l:algorule3} applies.

So, we may assume that $s'$ is the only robot to move, say, at time $t'\geq t+1$, when the configuration is still the same as at time $t+1$. By the second statement of \cref{l:algomove}, $s'$ must execute rule~4, and so $s'=v(s')$ at time $t'$. Let us prove that $s'$ cannot apply rule~4.a. Assume the opposite, and let $u$ be the robot next to $s'$ in the clockwise direction. According to \cref{l:algorithm}, $u$ must have an antipodal robot $u'$, and $u'\neq r$, because $r$ does not have an antipodal robot. Also, since there are no robots between $s$ and $r$, we have $cw(r,u')<cw(s,u')=cw(s',u)$. It follows that $s'$ cannot be the head of $V(s')$, because it sees both $r$ and $u'$, and so $r$ has a lexicographically smaller angle sequence than $s'$ with respect to $V(s')$. So, $s'$ must execute either rule~4.b or rule~4.c at time $t'$, with $\delta'\leq cw(s,r)=\delta/3<\delta$. Let us consider the configuration at time $t'+1$. This is a situation analogous to case~(i) at time $t+1$: $r$ has executed rule~4.a with some $\delta$, and $s'$ has executed rule~4.b or rule~4.c with some $\delta'<\delta$. At this point, our previous argument repeats almost verbatim, with $s'$ instead of $a$ and $u$ instead of $q$: indeed, the only difference is that the inequality $\delta/2-\delta/7>\delta/3$ becomes $\delta/2-\delta'/7>\delta/2-\delta/7>\delta/3$.

(iii)~Finally, let us assume that $a$ does not exist. At time $t+1$, we have $cw(r,s)=\delta/3$, and no robot other than $r$ has moved. This is analogous to the situation of case~(ii) at time $t+1$: indeed, the angular distance between $s$ and $r$ is the smallest, so the configuration is rotationally asymmetric, and $s$ is the true leader. Also, $s$ has an antipodal robot $s'$, and $r$ has no antipodal robot. From here, the proof proceeds verbatim as in case~(ii).
\end{proof}

\begin{figure}
\begin{center}
\includegraphics[scale=1]{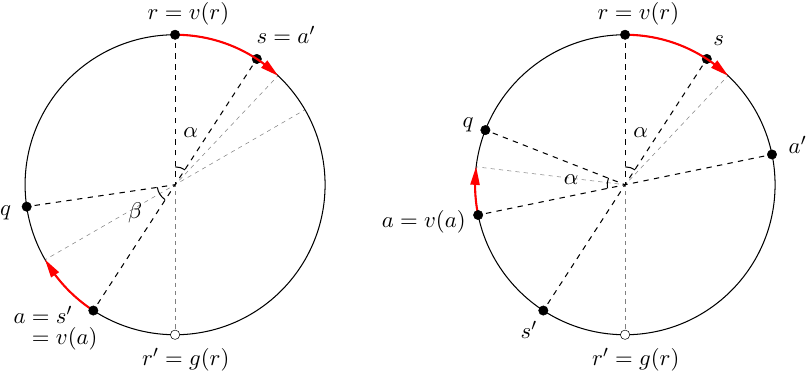}
\end{center}
\caption{Illustrations of \cref{l:algorule4a}.}
\label{f:l10}
\end{figure}

\begin{lemma}\label{l:algonoanti}
Assume that, at time $t$, the swarm forms a rotationally asymmetric configuration with no multiplicity points, and all the robots that move at time $t$ execute rule~4.b or rule~4.c. Then, at time $t+1$, the swarm still forms a rotationally asymmetric configuration with no multiplicity points.
\end{lemma}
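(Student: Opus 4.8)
The plan is to verify the two assertions about the configuration at time $t+1$, which I will call $S^+$: that it has no multiplicity point, and that it is rotationally asymmetric. If no robot moves at time $t$ then $S^+=S$ and there is nothing to prove, so assume at least one robot moves. Throughout I use that, by \cref{l:algotwo}, at most two robots move and the true leader $\ell$ is always among the robots able to move; that a mover executing rule~4.b or~4.c is an undecided (not cognizant) leader, so that if $\ell$ moves then $\ell\neq g(\ell)$, which forces $V(\ell)=S$ (hence $\ell$ has no antipodal robot); and that any mover $r\neq\ell$ satisfies $\pi/2<cw(r,\ell)\leq\pi$ and has an antipodal robot, by \cref{l:algomove}.

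\textbf{No multiplicity point.} For any mover $r$, the set $V'(r)$ of \cref{l:algorithm} contains \emph{every} robot's position, since the only robot $r$ may fail to see is its own antipodal point, which already lies in $V'(r)$ (because $r\in V(r)$). By the second statement of \cref{l:algocollision}, a mover's destination lies outside $V'(r)$, hence it is a point not in $S$. If two robots move they are $\ell$ and some $r'$; using the bounds $cw(\ell,p)<\pi-cw(r',\ell)$ and $cw(r',p')<cw(r',\ell)$ established in the proof of \cref{l:algocollision}, one gets $cw(r',p)=cw(r',\ell)+cw(\ell,p)>cw(r',\ell)>cw(r',p')$, so the destinations $p,p'$ are distinct. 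Hence $S^+$ consists of $n$ distinct points.

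\textbf{Rotational asymmetry.} Suppose for contradiction that a rotation $\rho$ by $2\pi/k$ with $k\geq2$ fixes $S^+$; note that $\rho$ permutes the gaps of $S^+$ and preserves their lengths, so every gap length occurs a positive multiple of $k$ times. The idea is to bound $k$ by counting the ``short'' gaps of $S^+$. If $\ell$ moves, then, being the head of $V(\ell)=S$, its clockwise gap $cw(\ell,s)$ equals the minimum gap $g$ of $S$; moving to $p$ (at most the midpoint of $\ell$ and $s$) turns this gap into $g-cw(\ell,p)<g$ and the incoming gap into one larger than $g$. Since every gap of $S$ is $\geq g$, and among the two gaps created by a possible second mover only its outgoing gap can fall below $g$, the set $S^+$ has at most two gaps smaller than $g$; hence its minimum gap is attained at most twice, forcing $k=2$. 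But $k=2$ means $S^+$ coincides with its antipodal set, contradicting that $p$ has an \emph{unoccupied} antipodal point in $S^+$ (it lies outside $V'(\ell)\supseteq S$, and is not antipodal to the other destination by the third statement of \cref{l:algocollision}).

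If instead $\ell$ does not move, then by \cref{l:algotwo} exactly one robot $r\neq\ell$ moves, and by \cref{l:algomove} it has an antipodal robot $a$; since $r$ executes rule~4 it is the head of $V(r)=S\setminus\{a\}$, so $cw(r,s)$ equals the minimum gap $g'$ of $V(r)$. Moving $r$ to $p$ creates the gap $(p,s)$ of length $g'-cw(r,p)<g'$; since every gap of $V(r)$ is $\geq g'$, the only gaps of $S$ below $g'$ are the two around $a$, so $S^+$ has at most three gaps smaller than $g'$, and therefore $k\leq3$. The case $k=2$ is excluded exactly as before (here $p$'s antipodal is unoccupied in $S^+$ because the only robot $r$ does not see is $a$, whose antipodal point is $r\neq p$). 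If $k=3$, the three sub-$g'$ gaps must all equal the minimum gap and be cyclically permuted by $\rho$; tracing this $3$-cycle of directed edges forces $p$ to coincide with the clockwise neighbour of $a$ in $S$, which is a point of $S$, contradicting $p\notin S$. In every case we reach a contradiction, so $S^+$ is rotationally asymmetric.

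\textbf{Main obstacle.} The asymmetry part carries the weight, and its most delicate point is the case where $\ell$ stays put: one must isolate the correct threshold $g'=cw(r,s)$ (which is exactly where $r$ being the head of $V(r)$ is used), justify the bound of three sub-$g'$ gaps while handling the boundary configuration in which $a$ is adjacent to $r$ in $S$, and then run the order-$3$ edge-permutation argument carefully. The order-$2$ exclusions in both cases hinge on the second statement of \cref{l:algocollision} together with the observation that a mover's $V'$ set already contains the antipodals of all robots it can see.
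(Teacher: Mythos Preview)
Your proof is correct and follows the same core idea as the paper's: bound the order $k$ of a putative rotational symmetry by counting how many gaps of $S^+$ can fall below a well-chosen threshold, then exclude the surviving small values of $k$. The organization, however, is genuinely different.

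The paper splits on the \emph{number} of movers. With a single mover $r$ it takes the threshold $cw(r,s)$ and, when $r$ has an antipodal robot $b$, runs an ad-hoc trichotomy on the three candidate short gaps $x=cw(r,s)$, $y=cw(a,b)$, $z=cw(b,c)$, excluding $k=3$ via the distance inequality $cw(r,a)<\pi<cw(b,r)$ and $k=2$ by checking each of the three ``two equal, one larger'' sub-cases separately. With two movers it reduces to $k=2$ and then invokes the antipodal non-coincidence of the two destinations (third statement of \cref{l:algocollision}).

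You split instead on whether $\ell$ moves. When it does, you use that $\ell$ executing rule~4 forces $V(\ell)=S$, so the true minimum gap $g$ of $S$ is at $\ell$; this yields a clean uniform bound of at most two sub-$g$ gaps in $S^+$ (covering both the one-mover and two-mover situations at once), and your exclusion of $k=2$ via ``$p$'s antipodal is vacant in $S^+$'' (from $p\notin V'(\ell)$ together with the third statement of \cref{l:algocollision}) is more economical than the paper's case-by-case check. When $\ell$ stays put, you take the threshold $g'=cw(r,s)$ as the minimum gap of $V(r)$, get at most three sub-$g'$ gaps, and dispose of $k=3$ by the edge-permutation argument forcing $p=a^+\in S$, which is a different route from the paper's distance contradiction.

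Two small remarks. The parenthetical ``at most the midpoint of $\ell$ and $s$'' is inaccurate for rule~4.b (which overshoots the midpoint by $\delta/7$); fortunately you only use $0<cw(\ell,p)<cw(\ell,s)$, which holds in either case. And in the no-multiplicity paragraph you quote bounds from the \emph{proof} of \cref{l:algocollision} rather than its statement; this is fine here, but in a standalone write-up you would want to restate those inequalities explicitly.
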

\begin{proof}
Due to \cref{l:algotwo}, at most two robots $r$ and $r'$ may move at time $t$. If no robot moves at time $t$, the configuration does not change, and there is nothing to prove. So, we have two cases: (i)~only $r$ moves at time $t$, and (ii)~both $r$ and $r'$ move at time $t$.

(i)~Suppose that only $r$ moves at time $t$. Let $s$ be the next robot in the clockwise direction, and let $p$ be the destination point of $r$. According to \cref{l:algorithm}, $p$ is in the interior of the clockwise arc $rs$, and hence there are no multiplicity points at time $t+1$.

Assume for a contradiction that the configuration at time $t+1$ has a $k$-fold rotational symmetry, with $k>1$. According to \cref{l:algorithm}, $r=v(r)$ at time $t$, which implies that $cw(r,s)$ is the minimum angular distance between any two robots in $V(r)$ at time $t$. If no robot is antipodal to $r$ at time $t$, then no two robots are closer than $r$ and $s$. In this case, at time $t+1$, $r$ and $s$ are strictly closer than any other pair of robots, which implies that $k=1$, a contradiction.

So, we may assume that $r$ has an antipodal robot $b$ at time $t$. Let $a$ (respectively, $c$) be the robot next to $b$ in the counterclockwise (respectively, clockwise) direction. Note that $V(r)$ at time $t$ includes all robots except $b$. So, at time $t+1$, $cw(r,s)$ is strictly smaller than the angular distance between any two robots, except perhaps $a$ and $b$ or $b$ and $c$.

Let us discuss all possibilities for $x=cw(r,s)$, $y=cw(a,b)$, and $z=cw(b,c)$ at time $t+1$ (refer to \cref{f:l11} (left)). If one among $x$, $y$, or $z$ is smaller than the other two, then there is a unique pair of robots that is strictly closer than any other pair, which implies that $k=1$, a contradiction. If $x=y=z$, then there are exactly three pairs of robots that are closest to each other, and therefore $k=3$. This means that the rotational symmetry maps $r$ to $a$, $a$ to $b$, and $b$ to $r$. In particular, we have $cw(r,a)=cw(b,r)$ at time $t+1$, which contradicts the fact that $cw(r,a)<\pi$ and $cw(b,r)>\pi$.

Finally, if two among $x$, $y$, and $z$ are equal and the third one is greater, then there are exactly two pairs of robots that are closest to each other, and therefore $k=2$, i.e., the rotational symmetry exchanges the two pairs. If $x=y<z$, the rotational symmetry should map $s$ to $b$, which is impossible because $s$ and $b$ are not antipodal to each other. If $y=z<x$, the rotational symmetry maps $a$ to $b$, contradicting the fact that they are not antipodal to each other. Similarly, if $z=x<y$, the rotational symmetry maps $r$ to $b$, which is impossible because they are not antipodal to each other at time $t+1$.

(ii)~Assume that both $r$ and $r'$ move at time $t$, as in \cref{f:l11} (right). By \cref{l:algotwo}, one of the two robots, say $r$, is the true leader. So, if $s$ is the robot next to $r$ in the clockwise direction, then $cw(r,s)$ is the minimum angular distance between any two robots at time $t$. By the fourth statement of \cref{l:algomove}, we have $cw(r',r)\leq \pi$. On the other hand, by the third statement of \cref{l:algocollision}, $r$ and $r'$ are not antipodal to each other, and so $cw(r,r')<\pi$. It follows that $s\neq r'$, or else $r'$ would have an angle sequence lexicographically smaller than $r$, contradicting the fact that $r$ is the true leader.

Let $p$ and $p'$ be the destination points of $r$ and $r'$, respectively. According to \cref{l:algorithm}, $p$ is in the interior of the clockwise arc $rs$, and $p'$ is in the interior of the clockwise arc $r's'$, where $s'$ is the robot next to $r'$ in the clockwise direction. Since the two arcs are internally disjoint, we have $p\neq p'$, and therefore there are no multiplicity points at time $t+1$.

Assume for a contradiction that the configuration at time $t+1$ has a $k$-fold rotational symmetry, with $k>1$. Note that, since $s\neq r'$, the robot $s$ does not move at time $t$. Therefore, at time $t+1$, $cw(r,s)$ is strictly smaller than the angular distance between any two robots, except perhaps $r'$ and $s'$. If $cw(r,s)<cw(r's')$ at time $t+1$, then $r$ and $s$ are strictly closer than any other pair of robots, implying that $k=1$, a contradiction. Similarly, if $cw(r,s)>cw(r's')$ at time $t+1$, then $r'$ and $s'$ are strictly closer than any other pair of robots, and once again we have $k=1$. On the other hand, if $cw(r,s)=cw(r's')$ at time $t+1$, then there are exactly two pairs of robots that are closest to each other, and therefore $k=2$, i.e., the rotational symmetry exchanges the two pairs. We conclude that $r$ and $r'$ are antipodal to each other at time $t+1$, or, equivalently, $p$ and $p'$ are antipodal to each other. However, since both $r$ and $r'$ executed rule~4.b or rule~4.c at time $t$, this contradicts the third statement of \cref{l:algocollision}.
\end{proof}

\begin{figure}
\begin{center}
\includegraphics[scale=1]{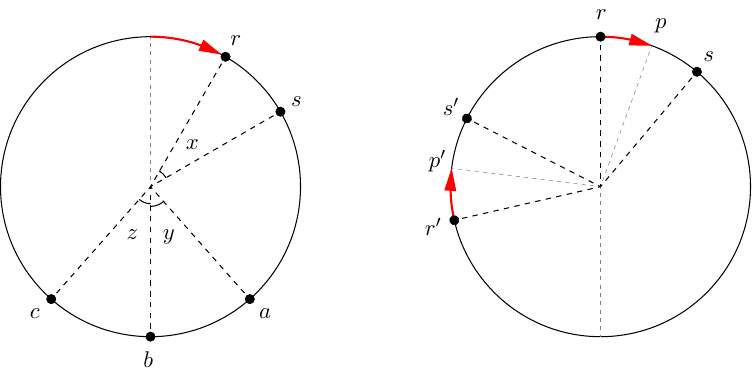}
\end{center}
\caption{Illustrations of \cref{l:algonoanti}.}
\label{f:l11}
\end{figure}

\begin{lemma}\label{l:algoonemove}
Assume that, at time $t$, the swarm forms a rotationally asymmetric configuration with no multiplicity points. If, at all times $t'\geq t$, no robot other than the true leader moves (either because it is unable to move or because it is not activated), then all robots eventually gather in a point and no longer move.
\end{lemma}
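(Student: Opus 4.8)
The plan is to use the earlier lemmas \cref{l:algorule3,l:algorule4a} as black boxes and to show that, under the hypothesis of the statement, the true leader cannot forever avoid executing rule~3 and rule~4.a. Suppose first that at some time $\tau\ge t$ the true leader executes rule~3 or rule~4.a, and take $\tau$ minimal. Between times $t$ and $\tau-1$ the only robot that moves is the true leader, and each of its moves uses rule~4.b or rule~4.c; so \cref{l:algonoanti}, applied inductively, shows that the configuration is rotationally asymmetric with no multiplicity points throughout, in particular at time $\tau$. Then \cref{l:algorule3} or \cref{l:algorule4a} (according to which rule is executed at time $\tau$) yields the conclusion. Hence it suffices to show that the \emph{stalling regime} --- in which, at all times $t'\ge t$, only the true leader moves and, whenever it moves, it uses rule~4.b or rule~4.c --- cannot occur.

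So assume the stalling regime. By \cref{l:algonoanti} the configuration stays rotationally asymmetric with no multiplicity points at all times $t'\ge t$, so the true leader $\ell$ is always well-defined; by \cref{l:algotwo} it is always able to move; and since it is activated infinitely often, it moves infinitely often. Let $s$ be the robot immediately clockwise from $\ell$, and write $\ell'$ and $s'$ for the points antipodal to $\ell$ and $s$. Since $\ell$ is the head, $cw(\ell,s)$ equals the smallest gap between consecutive robots, and by \cref{l:algorithm} every rule~4.b/4.c move of $\ell$ keeps $\ell$ strictly inside the clockwise arc $\ell s$, at least halves $cw(\ell,s)$, and makes $cw(\ell,s)$ the \emph{unique} smallest gap (it changes only the two gaps adjacent to $\ell$). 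Therefore, after $\ell$'s first move, the robot $s$ never changes, $\ell$ stays the head (so the true leader never changes), and $g:=cw(\ell,s)\to 0$. Consequently $\ell'$ moves monotonically clockwise through an arc of total length at most $g$; it thus crosses only finitely many robots, so the last robot $u$ before $\ell'$ eventually stabilizes, and since this $u$ sits at a fixed position strictly before the limit point $s'$ of $\ell'$, the distance $cw(u,\ell')$ is bounded below by a positive constant while $g\to 0$; hence $cw(u,\ell')>g$ for all sufficiently large times.

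It remains to derive a contradiction once $g$ is this small, and this is the crux. In the stalling regime $\ell$ is never a cognizant leader, so by \cref{c:cognizant} we have $g(\ell)\ne\ell$; moreover no robot is antipodal to $\ell$ (otherwise $G(\ell)$ would equal the real configuration, forcing $g(\ell)=\ell$), so $V(\ell)=S$, $v(\ell)=\ell$, and $G(\ell)=S\cup\{\ell'\}$ is rotationally asymmetric (otherwise $g(\ell)=v(\ell)=\ell$); thus $g(\ell)$ is the head of $S\cup\{\ell'\}$ and differs from $\ell$. If $g(\ell)=\ell'$, then $g(\ell)$ is antipodal to $\ell$, so rule~4.a would be executed the moment $s$ acquired an antipodal robot --- impossible in the stalling regime --- and if $s$ has no antipodal robot then, for small $g$, $\ell'$ also has a positive clockwise gap, so the angle sequence of $\ell$ in $G(\ell)$, which begins with $g$, is lexicographically smaller than that of $\ell'$, contradicting that $\ell'$ is the head. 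If instead $g(\ell)$ is a robot $w\ne\ell'$, then, because $g$ is the unique smallest gap among robots, the only way the angle sequence of $w$ in $G(\ell)$ can be $\preceq$ that of $\ell$ is for $\ell'$ to be the clockwise successor of $w$ in $G(\ell)$ with $cw(w,\ell')\le g$; but then $w=u$ and $cw(u,\ell')\le g$, contradicting the previous paragraph. Since $g\to 0$, one of these contradictions must occur, so the stalling regime is impossible; hence the true leader executes rule~3 or rule~4.a at some time $\tau\ge t$, and we conclude as above. The main obstacle is this last paragraph: one must show that, as $\ell$ inches toward $s$, it is eventually forced either to recognize itself as a cognizant leader (and fire rule~3) or to detect the ghost-antipodal configuration of rule~4.a, which requires a careful case-by-case comparison of the angle sequences in $G(\ell)$ --- in particular isolating the effect of a robot that may happen to be antipodal to $s$.
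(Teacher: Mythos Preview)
Your argument is correct and follows essentially the same route as the paper's proof: reduce to the ``stalling regime'' where only rule~4.b/4.c are ever applied, show $g=cw(\ell,s)\to 0$ while $\ell$ remains the true leader, and then argue that for sufficiently small $g$ the head of $G(\ell)$ can only be $\ell$ (forcing rule~3) or $\ell'$ with $s$ having an antipodal robot (forcing rule~4.a). Two phrasings should be tightened: ``an arc of total length at most $g$'' should refer to the \emph{initial} value of $g$ (since $g$ is time-varying), and ``$\ell'$ also has a positive clockwise gap'' is vacuous as stated---what you need (and what your subsequent reasoning actually uses) is that the clockwise gap from $\ell'$ to its successor in $G(\ell)$ is \emph{strictly larger than $g$}, which follows because that successor is the first robot clockwise of $s'$ at fixed positive distance $d$, giving gap $g+d>g$. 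With these clarifications your case analysis is complete; in fact, your explicit treatment of the possibility $g(\ell)=u$ (the robot just counterclockwise of $\ell'$) is more careful than the paper's, which subsumes it implicitly.
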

\begin{proof}
By \cref{l:algotwo}, the true leader $\ell$ must be able to move at time $t$. Let $s$ be the robot next to $\ell$ in the clockwise direction, and observe that the angular distance between $\ell$ and $s$ at time $t$ is not greater than the angular distance between any two robots in the swarm (or else, $\ell$ would not have the lexicographically smallest angle sequence). As long as $\ell$ is not activated, the configuration will remain the same, and $\ell$ will remain the true leader. The semi-synchronous scheduler will eventually activate $\ell$, say at time $t_0\geq t$, and hence $\ell$ will execute either rule~3 (in which case we conclude the proof by \cref{l:algorule3}) or rule~4.a (in which case we conclude the proof by \cref{l:algorule4a}), or rule~4.b or rule~4.c. In the latter two cases, the angular distance between $\ell$ and $s$ is reduced by at least one half, and so it becomes strictly smaller than the angular distance between any other pair of robots. It follows that the configuration at time $t_0+1$ is still rotationally asymmetric, and $\ell$ is still the true leader.

By inductively applying the same reasoning as above, we argue that either all robots gather in a point and no longer move, or there is an increasing sequence $(t_i)_{i\geq 0}$ such that, for every $i\geq 0$, at time $t_i$ the robot $\ell$ executes rule~4.b or rule~4.c, the angular distance between $\ell$ and $s$ is reduced by at least one half, and $\ell$ remains the true leader. So, if $\alpha_i$ is the value of $cw(\ell,s)$ at time $t_i$, we have $0<\alpha_i\leq \alpha_0/2^i$, that is, $\ell$ keeps approaching $s$ indefinitely.

Let $M$ be a large-enough index such that, at time $t_M$, the angle $\alpha_M$ is strictly smaller than half the angular distance between any robot other than $\ell$ and its next robot clockwise. Let $S$ be the set $G(\ell)$ at time $t_M$. If $S$ is rotationally symmetric, then $\ell$ is a cognizant leader at time $t_M$ (by definition of $v(\ell)$ and $g(\ell)$), and therefore it executes rule~3, due to \cref{c:cognizant}. In this case we can apply \cref{l:algorule3}, so we may assume that $S$ is rotationally asymmetric, and therefore it has a head $h\in S$. Since both $\ell$ and $s$ are in $S$, there must be a point $h'\in S$ such that $cw(h,h')\leq \alpha_M$. So, there are only two possibilities for the location of $h$: either $\ell$ or the point antipodal to $\ell$ (in which case $h'$ must be the point antipodal to $s$). If $\ell$ is the head of $S$, then at time $t_M$ it executes rule~3, and we can apply \cref{l:algorule3}. Otherwise, if the point antipodal to $\ell$ is the head of $S$, it means that $s$ has an antipodal robot. So, $\ell$ executes rule~4.a at time $t_M$, and we conclude the proof by \cref{l:algorule4a}.
\end{proof}

We are now ready to prove the main result of this section.
\begin{theorem}\label{t:algorithm}
If $\vartheta=\pi$, there is a Gathering algorithm under the condition that the swarm initially forms a rotationally asymmetric configuration with no multiplicity points.
\end{theorem}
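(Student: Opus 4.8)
My plan is to verify that the algorithm displayed in \cref{l:algorithm} is a Gathering algorithm for $\vartheta=\pi$ by assembling the lemmas proved above; the whole argument is a proof by contradiction. So I would fix a swarm of size $n>1$ starting from a rotationally asymmetric configuration with no multiplicity points, assume that the robots never reach a configuration in which they are all gathered and no longer move, and work toward an absurdity.

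\textbf{Reductions.} The first step is to show that in such a non-terminating execution only rules~4.b and~4.c (and the null rule~5) are ever applied. Indeed, while the configuration is rotationally asymmetric with no multiplicity points, \cref{l:algomove} rules out rules~1 and~2; and if at some step an active robot applies rule~3, then \cref{l:algorule3} already forces the swarm to gather and stop, while if it applies rule~4.a, then \cref{l:algorule4a} does --- either way contradicting non-termination. Since a multiplicity point can only be produced by a rule-3 or rule-4.a move, rules~1 and~2 are in fact never enabled. The second step is to establish, by induction on time, that the configuration stays rotationally asymmetric with no multiplicity points throughout: the base case is the hypothesis and the inductive step is exactly \cref{l:algonoanti}, which applies because, by the previous step, every robot that moves at a given instant applies rule~4.b or rule~4.c. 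In particular the true leader is always well defined, and by \cref{l:algotwo} it is always able to move and at most two robots are able to move at any instant.

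\textbf{Reaching a contradiction.} By \cref{l:algoonemove}, it now suffices to prove that in a non-terminating execution there is a time after which only the true leader moves, since \cref{l:algoonemove} then yields gathering, against the standing assumption. I would first note that some robot moves at infinitely many instants: otherwise the configuration would be frozen from some point on, yet by \cref{l:algotwo} the true leader is always able to move, so (as rules~3 and~4.a are excluded) when the scheduler next activates it, it must apply rule~4.b or~4.c and change the configuration. I would then track the minimum angular gap between consecutive robots as a potential: since the first entry of the true leader's angle sequence equals this minimum gap, every rule-4.b/4.c move by the leader at least halves it, and one checks --- using the second and third statements of \cref{l:algocollision} and the explicit offsets $\delta/3$, $\delta/7$ in \cref{l:algorithm} --- that a simultaneous move by the one possible second robot can only push the minimum gap down further and never creates a multiplicity or a symmetry. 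Hence, once one shows that non-leader moves occur only finitely often, only the leader moves from some time on; then one may replay the last paragraph of the proof of \cref{l:algoonemove} --- choose an instant at which this minimum gap is smaller than half of every other robot's forward gap, so that the leader is the head of its ghost configuration and hence, by \cref{c:cognizant}, a cognizant leader that applies rule~3 (or else is forced into rule~4.a) --- and conclude via \cref{l:algorule3} (or \cref{l:algorule4a}), contradicting the reductions above.

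\textbf{Main obstacle.} The delicate point is the claim that a non-leader can move only finitely often. By the fourth statement of \cref{l:algomove} a robot $r\neq\ell$ can move only when $\pi/2<cw(r,\ell)\leq\pi$ and $r$ currently has an antipodal robot, and, crucially, leadership can pass to $r$ when it moves; keeping track of which robot is the leader, and of when a former non-leader regains both the distance condition and an antipodal partner --- that is, ``re-qualifies'' to move --- is intricate. This is precisely the bookkeeping that the special offsets $\delta/3$ and $\delta/7$ of rules~4.a and~4.b were designed for: they guarantee that a robot stepping past a neighbour neither lands on another robot, nor antipodal to one, nor in a position producing a rotational symmetry. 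Turning this local control into the global finiteness statement, in the spirit of the case analyses already carried out in the proofs of \cref{l:algorule4a} and \cref{l:algonoanti}, is where I expect the bulk of the remaining effort to lie.
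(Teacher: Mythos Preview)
Your reductions are right and match the paper: rule out rules~1, 2, 3, 4.a via \cref{l:algomove}, \cref{l:algorule3}, \cref{l:algorule4a}; then \cref{l:algonoanti} keeps the configuration rotationally asymmetric and multiplicity-free, and the goal is exactly to show that from some time on only the true leader moves so that \cref{l:algoonemove} applies. But you are missing the key observation that makes this last step short, and the route you sketch (a potential on the minimum gap, tracking leadership hand-offs and ``re-qualification'') is much harder than necessary and not what the paper does.

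The paper's finishing argument is a one-line pigeonhole based on \emph{losing the antipodal partner permanently}. By the second statement of \cref{l:algocollision}, any robot executing rule~4.b or~4.c lands at a point outside its own $V'$, so in particular it has no antipodal robot immediately afterwards; and the same statement, applied at every later step (together with the third statement when two robots move simultaneously), shows that no subsequent rule-4.b/4.c move by any robot can land on its antipode either, because that antipode lies in the mover's $V'$. Hence once a robot has moved it never again has an antipodal robot. Combine this with the fourth statement of \cref{l:algomove}, which says a non-leader can move only if it currently has an antipodal robot: each of the finitely many robots can therefore act as a moving non-leader at most until its first move, so after finitely many steps only the current true leader moves. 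Now invoke \cref{l:algoonemove} directly; there is no need to replay its internal argument or to reason about the minimum gap. Your worry about leadership passing to a former non-leader is harmless here: the argument never fixes who the leader is, it only uses that whoever is \emph{not} the leader needs an antipodal partner to move.
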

\begin{proof}
We will prove that the distributed algorithm in \cref{l:algorithm} solves the Gathering problem under the condition that the swarm initially forms a rotationally asymmetric configuration with no multiplicity points. By the first statement of \cref{l:algomove}, whenever the swarm forms a rotationally asymmetric configuration with no multiplicity points, any robot that is activated and moves executes either rule~3 (in which case we conclude the proof by \cref{l:algorule3}), or rule~4.a (in which case we conclude the proof by \cref{l:algorule4a}), or rule~4.b, or rule~4.c. In the latter two cases, by \cref{l:algonoanti}, the resulting configuration is still rotationally asymmetric and with no multiplicity points. By inductively repeating this argument, we may assume, without loss of generality, that the swarm forms a rotationally asymmetric configuration with no multiplicity points at all times, and all robots that are activated and move execute rule~4.b or rule~4.c.

By \cref{l:algotwo}, at a generic time $t$, at least one robot $r$ and at most one other robot $r'\neq r$ are allowed to move. The semi-synchronous scheduler will activate each of them infinitely often, so let $t'\geq t$ be the first time at least one of them is activated. Assume that one robot, say $r'$, is not activated at time $t'$, and therefore $r$ is. Then, $r$ does not have an antipodal robot at time $t'+1$, due to the first statement of \cref{l:algocollision}. Similarly, if both $r$ and $r'$ are activated at time $t'$, none of them has an antipodal robot at time $t'+1$, by the first and third statements of \cref{l:algocollision}.

In summary, if a robot is activated and moves at a generic time $t$, it no longer has antipodal robots at any time after $t$. Since the robots are finitely many, eventually, say after time $t''$, only robots without an antipodal robot will move. However, by the fourth statement of \cref{l:algomove}, a robot that moves must have an antipodal robot, unless it is the current true leader. So, at all times after $t''$, no robot other than the current true leader will move. Therefore, \cref{l:algoonemove} allows us to conclude the proof.
\end{proof}

\section{Conclusions}\label{sec:5}
\paragraph*{Summary}
We gave a deterministic distributed algorithm to solve the Gathering problem on a circle for semi-synchronous rigid mobile robots with chirality (under the necessary condition that the initial configuration is rotationally asymmetric), assuming that each robot can see the entire circle except its antipodal point, i.e., for $\vartheta=\pi$. On the other hand, we proved that no such algorithm exists if $\vartheta\leq\pi/2$, even if the robots know the size of the swarm, and even if the initial configuration is rotationally asymmetric and has a connected visibility graph. We remark that the latter impossibility result is not limited to the Gathering problem, but extends to all Pattern Formation problems where the pattern is rotationally asymmetric.

\paragraph*{Open problems}
By inspecting the proof of correctness of our Gathering algorithm for $\vartheta=\pi$, it is easy to see that its running time is $O(n)$ epochs, where $n$ is the total number of robots, and an ``epoch'' is defined as a minimal timespan where every robot is activated at least once. Indeed, the only step that may take a non-constant number of epochs is discussed in the proof of \cref{t:algorithm}, where a termination condition occurs as soon as a multiplicity point appears or no robots are antipodal to each other. In principle, this could take $O(n)$ epochs in the worst case, but we do not have a concrete example where it actually takes that long. In fact, we believe that our algorithm terminates in $O(1)$ epochs, but a deeper analysis is needed to confirm this conjecture.

Note that the technique of our impossibility proof is ineffective when $\pi/2<\vartheta<\pi$, because it relies on the fact that a robot's snapshot is contained in an open semicircle, which in turn enables combining snapshots as a means to construct configurations where Gathering is impossible. We leave the design of Gathering algorithms or proofs of impossibility for these remaining cases as an open problem.

Other natural open problems concern the relaxation of some of our assumptions concerning the capabilities of the robots and the scheduler. For instance, we can prove that, under an asynchronous scheduler, the algorithm in \cref{l:algorithm} may cause the swarm to partition into two antipodal multiplicity points: a configuration from which Gathering cannot be achieved. Our question is whether a Gathering algorithm exists for an asynchronous scheduler if $\vartheta=\pi$.

Similarly, removing the rigidity condition may cause the formation of two antipodal multiplicity points. For example, a robot $r$ executing rule~4.a may end up stopping on the next robot $s$ instead of slightly past it. If $r$ has an antipodal robot $r'$, then $r'$ will move to the next robot $s'$, creating two antipodal multiplicity points at $s$ and $s'$. It is an open problem whether non-rigid robots can solve the Gathering problem with $\vartheta=\pi$. A related question is whether the chirality assumption can be removed.

Another interesting problem concerns the necessity of (weak) multiplicity detection. It is not clear if the Gathering problem can be solved at all by robots with no multiplicity detection. On the other hand, assuming that the initial configuration has no multiplicity points essentially serves to prevent the occurrence of antipodal multiplicity points. It is an open problem whether this condition can be relaxed somehow.

Conversely, there are open problems arising from robot models with additional capabilities. For example, we may wonder if a fully synchronous scheduler (which activates all robots at all times) allows for simpler or improved Gathering algorithms. Another intriguing question is whether the knowledge of the size of the swarm and stronger multiplicity detection would help in the design of a Gathering algorithm for a large-enough $\vartheta<\pi$.

Finally, the presence of \emph{colored lights} is a feature that is worth investigating. In this model, each robot is able to signal information (usually only a few bits) to all visible robots by means of a colored light. Since our impossibility result does not hold in this model, we wonder if colored lights may allow robots to gather when their visibility range is shorter than $\pi$, or even shorter than $\pi/2$. Colored lights may also be traded with chirality, rigidity, or synchrony.

\bibliography{biblio.bib}

\begin{thebibliography}{10}

\bibitem{xAgP06}
N.~Agmon and D.~Peleg.
\newblock Fault-tolerant gathering algorithms for autonomous mobile robots.
\newblock {\em SIAM Journal on Computing}, 36(1):56--82, 2006.

\bibitem{xxAnOSY99}
H.~Ando, Y.~Oasa, I.~Suzuki, and M.~Yamashita.
\newblock Distributed memoryless point convergence algorithm for mobile robots
  with limited visibility.
\newblock {\em IEEE Transactions on Robotics and Automation}, 15(5):818--838,
  1999.

\bibitem{Bhagat2019}
S.~Bhagat, K.~Mukhopadhyaya, and S.~Mukhopadhyaya.
\newblock Computation under restricted visibility.
\newblock In {\em Distributed Computing by Mobile Entities: Current Research in
  Moving and Computing}, pages 134--183. Springer, 2019.

\bibitem{OPODIS22}
J.~Castenow, J.~Harbig, D.~Jung, P.~Kling, T.~Knollmann, and M.~{Meyer auf der
  Heide}.
\newblock A unifying approach to efficient (near)-gathering of disoriented
  robots with limited visibility.
\newblock In {\em 26th International Conference on Principles of Distributed
  Systems,}, pages 15:1--15:25, 2022.

\bibitem{xCiFPS12}
M.~Cieliebak, P.~Flocchini, G.~Prencipe, and N.~Santoro.
\newblock Distributed computing by mobile robots: {g}athering.
\newblock {\em SIAM Journal on Computing}, 41(2):829--879, 2012.

\bibitem{xseb2}
P.~Courtieu, L.~Rieg, S.~Tixeuil, and X.~Urbain.
\newblock Impossibility of gathering, a certification.
\newblock {\em Information Processing Letters}, 115(3):447--452, 2015.

\bibitem{pelc1}
J.~Czyzowicz, D.~Ilcinkas, A.~Labourel, and A.~Pelc.
\newblock Asynchronous deterministic rendezvous in bounded terrains.
\newblock In {\em 17th International Colloquium on Structural Information and
  Communication Complexity}, pages 72--85, 2010.

\bibitem{pelc2}
J.~Czyzowicz, A.~Kosowski, and A.~Pelc.
\newblock Deterministic rendezvous of asynchronous bounded-memory agents in
  polygonal terrains.
\newblock {\em Theory of Computing Systems}, 52(2):179--199, 2013.

\bibitem{pelc3}
J.~Czyzowicz, A.~Pelc, and A.~Labourel.
\newblock How to meet asynchronously (almost) everywhere.
\newblock {\em ACM Transactions on Algorithms}, 8(4):231--261, 2012.

\bibitem{dangelo}
G.~D'Angelo, G.~Di~Stefano, and A.~Navarra.
\newblock Gathering on rings under the {L}ook--{C}ompute--{M}ove model.
\newblock {\em Distributed Computing}, 27(4):255--285, 2014.

\bibitem{dangelo2}
G.~D'Angelo, A.~Navarra, and N.~Nisse.
\newblock A unified approach for gathering and exclusive searching on rings
  under weak assumptions.
\newblock {\em Distributed Computing}, 30(1):17--48, 2017.

\bibitem{xgatheringdefago}
X.~D{\'e}fago, M.~Gradinariu, S.~Messika, P.~Raipin-Parv{\'e}dy, and S.~Dolev.
\newblock Fault-tolerant and self-stabilizing mobile robots gathering.
\newblock In {\em 20th International Symposium on Distributed Computing}, pages
  46--60, 2006.

\bibitem{degener11}
B.~Degener, B.~Kempkes, P.~Kling, F.~Meyer auf~der Heide, P.~Pietrzyk, and
  R.~Wattenhofer.
\newblock A tight runtime bound for synchronous gathering of autonomous robots
  with limited visibility.
\newblock In {\em 23rd ACM Symposium on Parallelism in Algorithms and
  Architectures}, pages 139--148, 2011.

\bibitem{DILUNA202079}
G.A. {Di Luna}, P.~Flocchini, L.~Pagli, G.~Prencipe, N.~Santoro, and
  G.~Viglietta.
\newblock Gathering in dynamic rings.
\newblock {\em Theoretical Computer Science}, 811:79--98, 2020.

\bibitem{LunaFSV18}
G.A. {Di Luna}, P.~Flocchini, N.~Santoro, and G.~Viglietta.
\newblock Turing{M}obile: {a} turing machine of oblivious mobile robots with
  limited visibility and its applications.
\newblock In {\em 32nd International Symposium on Distributed Computing}, pages
  19:1--19:18, 2018.

\bibitem{meetingpolygonvigiletta}
G.A. {Di Luna}, P.~Flocchini, N.~Santoro, G.~Viglietta, and M.~Yamashita.
\newblock {Meeting in a polygon by anonymous oblivious robots}.
\newblock In {\em 31st International Symposium on Distributed Computing}, pages
  14:1--14:15, 2017.

\bibitem{DISC2020}
G.A. {Di Luna}, R.~Uehara, G.~Viglietta, and Y.~Yamauchi.
\newblock Gathering on a circle with limited visibility by anonymous oblivious
  robots.
\newblock In {\em 34th International Symposium on Distributed Computing}, pages
  12:1--12:17, 2020.

\bibitem{Flocchini2019}
P.~Flocchini.
\newblock Gathering.
\newblock In {\em Distributed Computing by Mobile Entities: Current Research in
  Moving and Computing}, pages 63--82. Springer, 2019.

\bibitem{flocchininew}
P.~Flocchini, R.~Killick, E.~Kranakis, N.~Santoro, and M.~Yamashita.
\newblock {Gathering and election by mobile robots in a continuous cycle}.
\newblock In {\em 30th International Symposium on Algorithms and Computation},
  pages 8:1--8:19, 2019.

\bibitem{FLOCCHINI200867}
P.~Flocchini, G.~Prencipe, and N.~Santoro.
\newblock Self-deployment of mobile sensors on a ring.
\newblock {\em Theoretical Computer Science}, 402(1):67--80, 2008.

\bibitem{xxFlPS12}
P.~Flocchini, G.~Prencipe, and N.~Santoro.
\newblock {\em Distributed Computing by Oblivious Mobile Robots}.
\newblock Morgan \& Claypool, 2012.

\bibitem{book2}
P.~Flocchini, G.~Prencipe, and N.~Santoro, editors.
\newblock {\em Distributed Computing by Mobile Entities, Current Research in
  Moving and Computing}.
\newblock Springer, 2019.

\bibitem{xxFlPSV17}
P.~Flocchini, G.~Prencipe, N.~Santoro, and G.~Viglietta.
\newblock Distributed computing by mobile robots: {u}niform circle formation.
\newblock {\em Distributed Computing}, 30(6):413--457, 2017.

\bibitem{xxFlPSW05}
P.~Flocchini, G.~Prencipe, N.~Santoro, and P.~Widmayer.
\newblock Gathering of asynchronous robots with limited visibility.
\newblock {\em Theoretical Computer Science}, 337(1--3):147--168, 2005.

\bibitem{xxFuYKY15}
N.~Fujinaga, Y.~Yamauchi, S.~Kijima, and M.~Yamahista.
\newblock Pattern formation by oblivious asynchronous mobile robots.
\newblock {\em SIAM Journal on Computing}, 44(3):740--785, 2015.

\bibitem{GUILBAULT201386}
S.~Guilbault and A.~Pelc.
\newblock Gathering asynchronous oblivious agents with local vision in regular
  bipartite graphs.
\newblock {\em Theoretical Computer Science}, 509:86--96, 2013.

\bibitem{speed1}
E.~Huus and E.~Kranakis.
\newblock Rendezvous of many agents with different speeds in a cycle.
\newblock In {\em 14th International Conference on Ad-Hoc, Mobile, and Wireless
  Networks}, pages 195--209, 2015.

\bibitem{7000135}
S.~{Kamei}, A.~{Lamani}, and F.~{Ooshita}.
\newblock Asynchronous ring gathering by oblivious robots with limited vision.
\newblock In {\em 33rd International Symposium on Reliable Distributed
  Systems}, pages 46--49, 2014.

\bibitem{gatheringlight}
S.~Kamei, A.~Lamani, F.~Ooshita, S.~Tixeuil, and K.~Wada.
\newblock Gathering on rings for myopic asynchronous robots with lights.
\newblock In {\em 23rd International Conference on Principles of Distributed
  Systems}, pages 27:1--27:17, 2019.

\bibitem{DEFECTVIEW}
Y.~Kim, M.~Shibata, Y.~Sudo, J.~Nakamura, Y.~Katayama, and T.~Masuzawa.
\newblock Gathering of mobile robots with defected views.
\newblock In {\em 26th International Conference on Principles of Distributed
  Systems,}, pages 14:1--14:18, 2022.

\bibitem{KLASING200827}
R.~Klasing, E.~Markou, and A.~Pelc.
\newblock Gathering asynchronous oblivious mobile robots in a ring.
\newblock {\em Theoretical Computer Science}, 390(1):27--39, 2008.

\bibitem{KranakisKMS15}
E.~Kranakis, D.~Krizanc, F.~MacQuarrie, and S.M. Shende.
\newblock Randomized rendezvous algorithms for agents on a ring with different
  speeds.
\newblock In {\em 16th International Conference on Distributed Computing and
  Networking}, pages 9:1--9:10, 2015.

\bibitem{ringbook}
E.~Kranakis, D.~Krizanc, and E.~Markou.
\newblock {\em The Mobile Agent Rendezvous Problem in the Ring}.
\newblock Morgan and Claypool, 2010.

\bibitem{MondeYKY17}
A.~Monde, Y.~Yamauchi, S.~Kijima, and M.~Yamashita.
\newblock Self-stabilizing localization of the middle point of a line segment
  by an oblivious robot with limited visibility.
\newblock In {\em 19th International Symposium on Stabilization, Safety, and
  Security of Distributed Systems}, pages 172--186, 2017.

\bibitem{xPaPV15}
L.~Pagli, G.~Prencipe, and G.~Viglietta.
\newblock Getting close without touching: {n}ear-gathering for autonomous
  mobile robots.
\newblock {\em Distributed Computing}, 28(5):333--349, 2015.

\bibitem{PoudelS17}
P.~Poudel and G.~Sharma.
\newblock Universally optimal gathering under limited visibility.
\newblock In {\em 19th International Symposium on Stabilization, Safety, and
  Security of Distributed Systems}, pages 323--340, 2017.

\bibitem{defago09}
S.~Souissi, X.~D\'{e}fago, and M.~Yamashita.
\newblock Using eventually consistent compasses to gather memory-less mobile
  robots with limited visibility.
\newblock {\em ACM Transactions on Autonomous and Adaptive Systems},
  4(1):9:1--9:27, 2009.

\bibitem{xxSuzY99}
I.~Suzuki and M.~Yamashita.
\newblock Distributed anonymous mobile robots: {f}ormation of geometric
  patterns.
\newblock {\em SIAM Journal on Computing}, 28(4):1347--1363, 1999.

\bibitem{Yamauchi19}
Y.~Yamauchi.
\newblock Symmetry of anonymous robots.
\newblock In {\em Distributed Computing by Mobile Entities, Current Research in
  Moving and Computing}, pages 109--133. Springer, 2019.

\bibitem{xYuUKY17}
Y.~Yamauchi, T.~Uehara, S.~Kijima, and M.~Yamashita.
\newblock Plane formation by synchronous mobile robots in the three-dimensional
  euclidean space.
\newblock {\em Journal of the ACM}, 64(3):16:1--16:43, 2017.

\bibitem{YamauchiY13}
Y.~Yamauchi and M.~Yamashita.
\newblock Pattern formation by mobile robots with limited visibility.
\newblock In {\em 20th International Colloquium on Structural Information and
  Communication Complexity}, pages 201--212, 2013.

\end{thebibliography}

\end{document}